\theoremstyle{plain}
\newtheorem{observation}{Observation}
\let\emptyset\varnothing
\newcommand{\etal}{\textit{et al.}\xspace}
\newcommand{\BR}{\ensuremath{\mathbb B}_{\mathcal{{R}}}}
\newcommand{\RR}{\ensuremath{\mathcal R}\xspace}
\newcommand{\RRo}{\ensuremath{\mathcal R^{\text{\tiny{\textregistered}} }\xspace} }
\newcommand{\RRt}{\ensuremath{\mathcal R^{ \text{\LeftScissors} }  }\xspace}
\newcommand{\RRc}{\ensuremath{\mathcal R^{\star} }\xspace}
\newcommand{\Oh}{\ensuremath{O}}
\newcommand{\pareto}{Pareto front\xspace}
\newcommand{\CP}{\mathrm{CP}}
\newcommand{\Vnext}[1]{\ensuremath{V_{#1}^{\textit{next} }}\xspace} 
\newcommand{\Hprev}[1]{\ensuremath{H_{#1}^{\textit{prev} }}\xspace} 
\newcommand{\RP}{\ensuremath{ \tilde{\mathcal{\RR}}(P) }}
\newcommand{\subproblem}{subproblem\xspace}
\newcommand{\subproblems}{subproblems\xspace}
\newtheorem{invariant}{Invariant}
\title{Preprocessing Imprecise Points for the Pareto~Front} 
\titlerunning{Preprocessing Imprecise Points for the Pareto Front}
\author{Ivor van der Hoog}{Department of Information and Computing Sciences, Utrecht University, the Netherlands}{i.d.vanderhoog@uu.nl}{}{Supported by the Dutch Research Council (NWO); 614.001.504.}
\author{Irina Kostitsyna}{Department of Mathematics and Computer Science, TU Eindhoven, the Netherlands}{i.kostitsyna@tue.nl}{}{}
\author{Maarten L\"{o}ffler}{Department of Information and Computing Sciences, Utrecht University, the Netherlands}{m.loffler@uu.nl}{}{Partially supported by the Dutch Research Council (NWO); 614.001.504.}
\author{Bettina Speckmann}{Department of Mathematics and Computer Science, TU Eindhoven, the Netherlands}{b.speckmann@tue.nl}{https://orcid.org/0000-0002-8514-7858}{Partially supported by the Dutch Research Council (NWO); 639.023.208.}
\authorrunning{I. van der Hoog, I. Kostitsyna, M. L\"{o}ffler, B. Speckmann.}
\keywords{preprocessing, imprecise points, geometric uncertainty,  lower bounds, algorithmic optimality, \pareto}
\begin{document}

\maketitle

\begin{abstract}
In the preprocessing model for uncertain data we are given a set of regions $\RR$ which model the uncertainty associated with an unknown set of points $P$. In this model there are two phases: a preprocessing phase, in which we have access only to $\RR$, followed by a reconstruction phase, in which we have access to points in $P$ at a certain retrieval cost $C$ per point.
We study the following algorithmic question: how fast can we construct the \pareto of $P$ in the preprocessing model?

We show that if $\RR$ is a set of pairwise-disjoint axis-aligned rectangles, then we can preprocess $\RR$ to reconstruct the Pareto front of $P$ efficiently. To refine our algorithmic analysis, we introduce a new notion of algorithmic optimality which relates to the entropy of the uncertainty regions. Our proposed \emph{uncertainty-region optimality} falls on the spectrum between worst-case optimality and instance optimality.  
We prove that instance optimality is unobtainable in the preprocessing model, whenever the classic algorithmic problem reduces to sorting.
Our results are worst-case optimal in the preprocessing phase; in the reconstruction phase, our results are uncertainty-region optimal with respect to real RAM instructions, and instance optimal with respect to point retrievals.
\end{abstract}

\newpage
\section{Introduction}
In many applications of geometric algorithms to real-world problems the input is inherently imprecise. 
A classic example are GPS samples used in GIS applications, which have a significant error.
Geometric imprecision can be caused by other factors as well. 
For example, if a measured object moves during measurement, it may have an error dependent on its speed~\cite {kahan1992real}. 
Another example comes from I/O-sensitive computations: exact locations may be too costly to store in local memory~\cite{bruce2005efficient}. 
Algorithms that can handle imprecise input well have received considerable attention in computational geometry. 
We continue this line of research by studying
the efficient construction of the \pareto of a collection of imprecise points.

\subparagraph{Preprocessing model.} Held and Mitchell~\cite{held2008triangulating} introduced the \emph{preprocessing model of uncertainty} as a model to study the amount of geometric information contained in uncertain points. In this model, the input is a set of geometric (uncertainty) regions $\RR = ( R_1, R_2, \ldots, R_n )$ with an associated ``true'' planar point set $P = (p_1, p_2, \ldots, p_n )$. 
For any pair $(\RR, P)$, we say that $P$ \emph{respects} $\RR$ if each $p_i$ lies inside its associated region $R_i$; we assume throughout the paper that $P$ respects $\RR$.
The preprocessing model has two consecutive phases: a preprocessing phase where we have access only to the set of uncertainty regions $\RR$ and a reconstruction phase where  we can for each $R_i \in \RR$, request the true location $p_i$ in (traditionally constant) $C$ time. The value $C$ can, for example, model the cost of disk retrievals for I/O-sensitive computations~\cite{bruce2005efficient}.
We typically want to preprocess $\RR$ in $\Oh(n \log n)$ time to create some linear-size auxiliary datastructure $\Xi$. Afterwards, we want to reconstruct the desired output on $P$ using $\Xi$ faster than would be possible without preprocessing. 

L{\"o}ffler and Snoeyink~\cite{loffler2010delaunay} were the first to interpret $\RR$ as a collection of imprecise measurements of a true point set $P$. 
The size of $\Xi$ and the running time of the reconstruction phase, together quantify the information about (the Delaunay triangulation of) $P$ contained in $\RR$. 
This interpretation was widely adopted within computational geometry and motivated many recent results for constructing Delaunay triangulations~\cite {buchin2009delaunay,buchin2011delaunay,devillers2011delaunay, van2010preprocessing},
spanning trees~\cite{liu2018minimizing,zeng2016integrating}, convex hulls~\cite{evans2011possible, ezra2013convex, loffler2010largest,nagai1999convex}  and other planar decompositions~\cite{loffler2013unions, van2019preprocessing} for imprecise points.

\begin{figure}[b]
    \centering
    \includegraphics{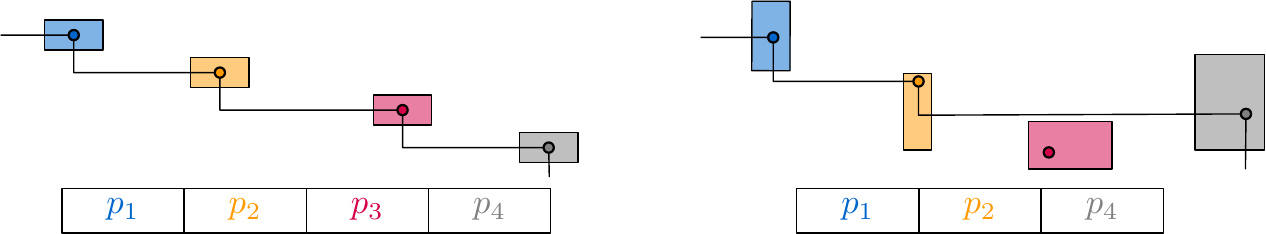}
    \caption{The \pareto of $P$ can be implied by the geometry of $\RR$ (left) or not (right). }
    \label{fig:overlap}
    \vspace{0cm}
\end{figure}

\subparagraph{Output format.}
Classical work in the preprocessing model ultimately aims to preprocess the data in such a way that one can achieve a (near-){\em linear}-time reconstruction phase.
Indeed, if the final output structure has linear complexity and must explicitly contain the coordinates of each value in $P$, then returning the result takes $\Omega(nC)$ time.
However, this point of view is limiting in two ways.
First, certain geometric problems, such as the convex hull or the \pareto, may have sub-linear output complexity.
Second, even if the output has linear complexity, it may be possible to find its combinatorial structure without inspecting the true locations of all points.
Consider the  example in Figure~\ref{fig:overlap}:
on the left, we do not need to retrieve any point; on the right, we do not need to retrieve $p_3$ after we retrieve $p_4$. 
Van der Hoog~\etal~\cite{van2019preprocessing} propose an addition to the preprocessing model to enable a more fine-grained analysis in these situations: instead of returning the desired structure on $P$ explicitly, they instead return an \emph{implicit representation} of the output. 
This implicit representation can take the form of a pointer structure which is guaranteed to be isomorphic to the desired output on $P$, but where each value is a pointer to either a certain (retrieved) point, or to an uncertain (unretrieved) point.
In this paper, we study the efficient construction of the \pareto of a set of imprecise points $P$, from pairwise-disjoint axis-aligned rectangles $\RR$ as uncertainty regions,
in the preprocessing model with implicit representation.

\subparagraph{Algorithmic efficiency.} 
To assess the efficiency of any algorithm we generally want to compare its performance to a suitable lower bound. 
Two common types of lower bounds are \emph{worst-case} and \emph{instance} lower bounds. The classical worst-case lower bound takes the minimum over all algorithms $A$, of the maximal running time of $A$ for any pair $(\RR, P)$.
The instance lower bound~\cite{afshani2017instance, evans2013competitive} is the minimum over all $A$, for a fixed instance $(\RR, P)$, of the running time of $A$ on $(\RR, P)$. For the \pareto 
the worst-case lower bound is trivially $\Omega(nC)$; worst-case optimal performance (for us, in the reconstruction phase) is hence easily obtainable. Instance-optimality, on the other hand, is unobtainable in classical computational geometry~\cite{afshani2017instance}. Consider, for example, binary search for a value $q$ amongst a set $X$ of sorted numbers. 
For each instance $(X, q)$, there exists a naive algorithm that guesses the correct answer in constant time. Thus the instance lower bound for binary search is constant, even though there is no algorithm that can perform binary search in constant time in a comparison-based RAM model~\cite{hoog2020FOCS}.
Hence we introduce a new lower bound for the preprocessing model, whose granularity falls in between the instance and worst-case lower bound. Our \emph{uncertainty-region} lower bound is the minimum over all algorithms $A$, for a fixed input $\RR$, of the maximal running time of $A$ on $(\RR, P)$ for any $P$ that respects~$\RR$. A detailed discussion of algorithmic efficiency for the preprocessing model 
can be found in Section~\ref{sec:optimality}.

\subparagraph{Related work.}
Bruce~\etal~\cite{bruce2005efficient} study the efficient construction of the \pareto of two-dimensional pairwise disjoint axis-aligned uncertainty rectangles in what would later be the preprocessing model using implicit representation.
As their paper is motivated by I/O-sensitive computation, they assume that the retrieval cost $C$ dominates polynomial RAM running time and both their preprocessing and reconstruction phase use an unspecified polynomial number of RAM instructions.
In the reconstruction phase they have a retrieval-strategy that iteratively selects a region $R_i$ for which they retrieve $p_i$ to construct $\Xi^*$ (since $\Xi^*$ is an \emph{implicit representation}, they do not have to retrieve each $p_i \in P$). Their result is instance optimal under their assumption that $C$ dominates the RAM running time of all parts of their algorithm. We study the same problem without their assumption on $C$.

\subparagraph{Results and organization} We discuss in Section~\ref{sec:optimality} the three possible lower bounds for the preprocessing model: worst case, instance, and our new uncertainty-region lower bound. In Section~\ref{sec:prelims} we present the necessary geometric preliminaries.
Then, in Section~\ref{sec:lowerbound}, we prove an uncertainty-region lower bound on the time required for the reconstruction phase. 
In Section~\ref{sec:upperbound} we then show how to preprocess $\RR$ in $\Oh(n \log n)$ time to create an auxiliary structure $\Xi$. We also explain how to reconstruct the \pareto of $P$ as an implicit representation $\Xi^*$ from $\Xi$. 
Our results are worst-case optimal in the preprocessing phase; our reconstruction results are uncertainty-region optimal in the RAM instructions, instance optimal with respect to the retrieval cost $C$ and an $O(\log n)$ factor removed from instance optimal with respect to both. This is the first two-dimensional result in the preprocessing model with better than worst-case optimal performance.

\section{Algorithmic optimality}
\label{sec:optimality}

We briefly revisit the definitions of worst-case and instance lower bounds in the preprocessing model and then formally introduce our new uncertainty-region lower bound.

\subparagraph{Worst-case lower bounds.}
The worst-case comparison-based lower bound of an algorithmic problem $\mathcal{P}$ considers each
algorithm\footnote{We refer to comparison-based algorithms algorithms on an intuitive level: as RAM computations that do not make use of flooring. For a more formal definition we refer to any of~\cite{afshani2017instance, ben1983lower, demaine2020finding, hoog2020FOCS}. }  plus datastructure pair $(A, \Xi)$ which solves $\mathcal{P}$ in a \emph{competitive setting} with respect to their 
maximal running time: 
\[
\textnormal{Worst-case lower bound}(\mathcal{P}) := \min_{(A, \Xi)} \max_{(\RR, P) } \textnormal{Runtime}(A, \Xi, \RR, P)\,.
\]
\noindent
The number $L$ of distinct outcomes for all instances $(\RR, P)$ implies a lower bound on the maximal running time for any algorithm $A$: regardless of preprocessing, auxiliary datastructures and memory used, any comparison-based pointer machine algorithm $A$  can be represented as a decision tree where at each algorithmic step, a binary decision is taken~\cite{ben1983lower, cardinal2019information, hoog2020FOCS}. Since there are at
least $L$ different outcomes, there must exists a pair $(\RR, P)$ for which $A$ takes $\log L$ steps before $A$ terminates (this lower bound is often referred to as the \emph{information theoretic lower bound} or sometimes the \emph{entropy} of the problem~\cite{afshani2017instance, cardinal2005minimum, cardinal2019information}). 

\begin{figure}[t]
    \centering
    \includegraphics{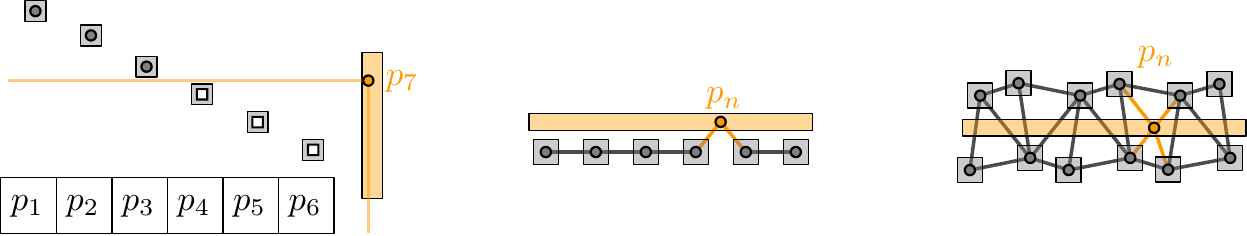}
    \caption{
    Thrice a collection of grey uncertainty regions where the Pareto front, EMST or Delaunay triangulation of the grey points is implied by the regions; plus an orange region $R_n$. Depending on the placement of $p_n$, it can neighbor any grey point in the final structure. }
    \label{fig:impossible}
    \vspace{-0.3cm}
\end{figure}

\subparagraph{Instance lower bounds.}
A stronger lower bound, is an instance lower bound~\cite{evans2013competitive} (or \emph{instance optimal in the random-order setting} in \cite{afshani2017instance}). 
For an extensive overview of instance optimality we refer to Appendix~\ref{appx:entropy}.
For a given instance $(\RR, P)$, its instance lower bound is:
\[
 \textnormal{Instance lower bound}(\mathcal{P}, \RR, P) = \min_{(A, \Xi)} \textnormal{Runtime}(A, \Xi, \RR, P)\,.
\]
An algorithm $A$ is instance optimal, if for every instance $(\RR,P)$ the runtime of $A$ matches the instance lower bound.
L\"{o}ffler \etal~\cite{loffler2013unions} define \emph{proximity structures} that include quadtrees, Delaunay triangulations, convex hulls, Pareto fronts and Euclidean minimum spanning trees. We prove the following:

\begin{restatable}{theorem}{noinstance}
Let the unspecified retrieval cost $C$ not dominate $O(\log n)$ RAM instructions and $\RR$ be any set of pairwise disjoint uncertainty rectangles.
Then there exists no algorithm $A$ in the preprocessing model with implicit representation that can construct a proximity data structure on the true points which is instance optimal.
\end{restatable}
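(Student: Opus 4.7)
The plan is to realize the template sketched in Figure~\ref{fig:impossible}. For every proximity structure $\mathcal{P}$ in the family of L\"offler \etal~\cite{loffler2013unions}, the plan is to exhibit a set $\RR$ of pairwise disjoint axis-aligned rectangles with two properties: (i) the restriction of $\mathcal{P}$ to the first $n-1$ ``grey'' regions is already determined by $\RR$ alone, and (ii) a single ``orange'' rectangle $R_n$ admits $\Omega(n)$ placements of $p_n$ that each produce a combinatorially distinct implicit output, namely $p_n$ becomes adjacent to a different grey point in the final structure. Then I combine an entropy lower bound on any correct-on-all-inputs algorithm with a tailored per-instance algorithm to pin the instance lower bound at $O(C)$, so that instance optimality becomes unattainable whenever $C$ does not dominate $\Omega(\log n)$ RAM operations.

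For the geometric side I would start from the three sub-figures of Figure~\ref{fig:impossible}, which already cover the Pareto front, the EMST and the Delaunay triangulation. For the convex hull the construction is analogous: place the grey rectangles on a strictly convex curve and shape $R_n$ so $p_n$ can appear as a new hull vertex between any two consecutive grey vertices. For quadtrees, a careful alignment of the grey rectangles with the ambient subdivision makes the quadtree-cell adjacency of $p_n$ vary with its position inside $R_n$. In each case the output of $\mathcal{P}$ on $(\RR,P)$ takes $\Omega(n)$ combinatorially distinct values as $P$ ranges over the point sets respecting $\RR$.

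Given the construction, the decision-tree / information-theoretic argument already reviewed above for worst-case lower bounds shows that for every correct algorithm-datastructure pair $(A, \Xi)$ there exists a $P^{\star}$ respecting $\RR$ with $\textnormal{Runtime}(A, \Xi, \RR, P^{\star}) = \Omega(\log n)$. For this same $P^{\star}$ I would then display a tailored algorithm $A^{\star}$ whose code hardcodes the true location of $p_n$ and the corresponding implicit output: in the reconstruction phase, $A^{\star}$ retrieves $p_n$ at cost $C$, checks it against the hardcoded location in $O(1)$, and either emits the hardcoded answer or falls back on any generic correct algorithm. Since $A^{\star}$ is correct on every instance and runs in $O(C)$ time on $(\RR, P^{\star})$, we have $\textnormal{Instance lower bound}(\mathcal{P}, \RR, P^{\star}) = O(C)$. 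By hypothesis $C$ does not dominate $\Omega(\log n)$ RAM instructions, so $\Omega(\log n)$ is not $O(C)$ and $(A, \Xi)$ fails to be instance optimal at $P^{\star}$.

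The main obstacle I anticipate is verifying that the ``single free point'' template of Figure~\ref{fig:impossible} uniformly yields $\Omega(n)$ distinct outputs across the entire L\"offler \etal\ class. Pareto front, convex hull, Delaunay triangulation and EMST all tolerate essentially the same geometry, but quadtrees are defined with respect to a fixed ambient subdivision, so the grey rectangles have to be aligned with that grid carefully enough that the combinatorial cell-adjacencies of $p_n$ really do take $\Omega(n)$ distinct values. Once that uniform construction is in place, the decision-tree lower bound and the tailored $O(C)$ algorithm combine mechanically to give the theorem.
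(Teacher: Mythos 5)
Your proposal matches the paper's proof essentially step for step: the same one-free-region construction (grey regions whose structure is fixed, plus an orange $R_n$ whose point can neighbor any of $\Omega(n)$ grey points), the same information-theoretic $\Omega(\log n)$ lower bound on any uniformly correct algorithm, and the same guess-and-verify algorithm showing the instance lower bound is $O(C)$. Your added caution about quadtrees is reasonable but not something the paper's own (terser) proof addresses either; it, too, only exhibits the construction for the Pareto front, EMST, Delaunay triangulation, and convex hull.
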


\begin{proof}
Let $\RR' = (R_1, R_2, \ldots R_{n-1})$ be a set of uncertainty regions for which the implicit data structure $\Xi^*$ can be known in the preprocessing phase. Denote by $R_n$ an uncertainty region for which $p_n$ can neighbor any $p_i \in (p_1, \ldots p_{n-1})$. See Figure~\ref{fig:impossible} for an example of the Pareto front, the EMST and the Delaunay triangulation (with it, Voronoi diagrams) and Figure~\ref{fig:CH} for the convex hull. 
For the set of grey points $(p_1, \ldots p_{n-1})$, their respective structure is known while the orange point $p_n$ can neighbor any of the grey points. 
Via the information theoretic lower bound, there is no algorithm~$A$ that for every instance can decide the correct neighbor of $p_n$ in $O(C)$ time. Yet for every instance, there exists a naive algorithm that correctly guesses the constantly many neighbors of $p_n$ and verifies this guess in $O(C)$ time. 
\end{proof}

\begin{figure}[h]
    \centering
    \includegraphics{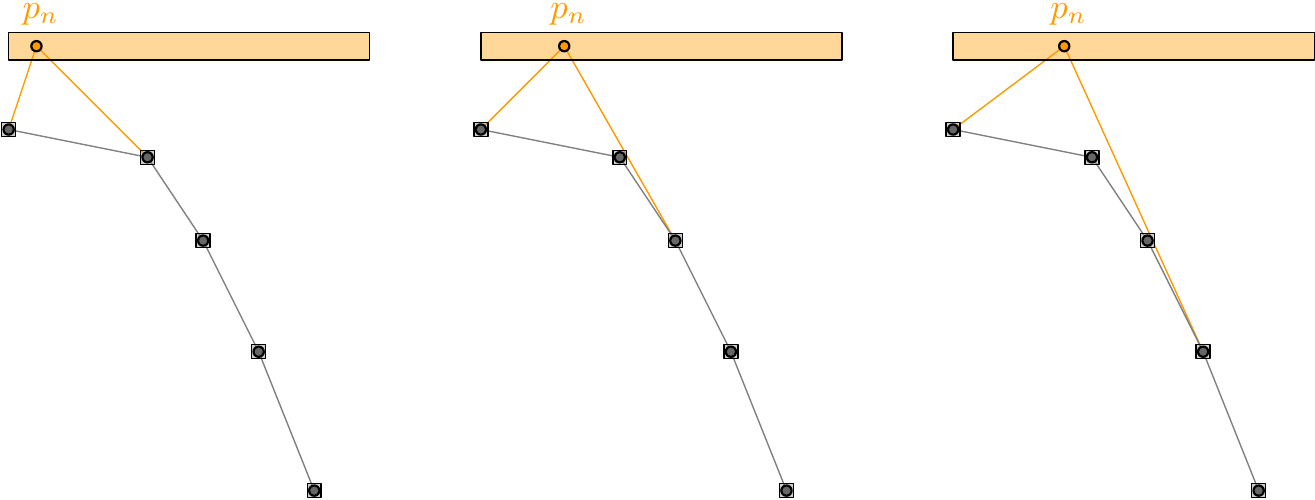}
    \caption{
    A collection of $n-1$ grey pairwise-disjoint uncertainty rectangles, for which the convex hull of their underlying points is implied by the convex hull of their bottom left vertices. The region $R_n$ is shown in orange. Depending on the placement of $p_n$, it can neighbor any grey point in the convex hull of all the points.  }
    \label{fig:CH}
\end{figure}

\subparagraph{Uncertainty-region lower bounds.}
Worst-case optimality is easily attainable by any algorithm and we proved that instance optimality is not attainable in the preprocessing model.
Yet the examples in Figure~\ref{fig:overlap}~and~\ref{fig:impossible} intuitively have a lower bound of $\Theta(1)$ and $\Theta(\log n + C)$, which is trivial to match via binary search. 
We capture this intuition for a fixed input $\RR$:
\[
\textnormal{Uncertainty-region lower bound}(\mathcal{P}, \RR) := \min_{(A, \Xi)} \max_{ (P \textnormal{ respects } \RR ) } \textnormal{Runtime}(A, \Xi, \RR, P)\,,
\]
and say an algorithm $A$ is uncertainty-region optimal if for every $\RR$, $A$ has a running time that matches the uncertainty-region lower bound. Denote by $L(\RR)$ the number of distinct outcomes for all $P$ that respect $\RR$. 
Via the information theoretic lower bound we know: 
\[
\forall \RR, \quad \log |L(\RR)| \le \textnormal{Uncertainty-region lower bound}(\mathcal{P}, \RR)\,.
\]
For constructing proximity structures in the preprocessing model with implicit representations, the value of $\log L(\RR)$ can range from anywhere between $0$ and $n \log n$.
Consequently, an optimal algorithm cannot necessarily afford to explicitly retrieve the entire point set $P$.

\section{Geometric preliminaries}
\label{sec:prelims}
Throughout the paper, we use the notation $\RRo$, $\RRt$ for original and truncated regions respectively (which we define later).
When the set is clear from context, we drop the superscript.
Let $\RR = (R_1, R_2, \ldots, R_{n})$ be a sequence of $n$ pairwise disjoint 
closed
axis-aligned uncertainty rectangles, with underlying point set $P$. For ease of exposition, we assume $\RR$ and $P$ lie in general position (no points or region vertices share a coordinate).
We denote by $[R_i, R_j] := (R_i, R_{i+1}, \ldots, R_j)$ a subsequence of $j- i + 1$ regions and similarly by $[p_i, p_j] = (p_i, p_{i+1}, \ldots, p_j)$ a subsequence of points.
For brevity, with slight abuse of notation, we may refer to points as degenerate rectangles; hence any set $\RR$ may contain points. 
Whenever we place points on a vertex, we mean placing it arbitrarily close to said vertex. A region $R_i$ precedes a region $R_j$ if $i < j$. 
Conversely, $R_j$ succeeds $R_i$.

For two points $p$ and $q$, we say that $p$ \emph{(Pareto) dominates} $q$ if both its $x$- and $y$-coordinates are greater than or equal to the respective coordinates of $q$.
A point $p$ \emph{(Pareto) dominates} a rectangle $R$, if $p$ dominates its top right vertex.
We define the \emph{\pareto} of $P$ as the boundary of the set of points that are dominated by a point in $P$.
That is, the \pareto is the set of points in $P$ that are not dominated by any other point in $P$, connected by a rectilinear staircase. 
For any region or point $R$, we define its \emph{horizontal halfslab} as the union of all horizontal halflines that are directed leftward, whose apex lies in or on $R$. We define the \emph{vertical halfslab} symmetrically using downward vertical halflines.
Given a set $\RR$ without knowledge of $P$,
we say a region $R_i \in \RR$ is (Figure~\ref{fig:truncation}, left):
\begin {itemize}
  \item a \emph{negative} region if for all choices of $P$, the point $p_i$ is not part of the \pareto of $P$;
  \item a \emph{positive} region if for all choices of $P$, the point $p_i$ is part of the \pareto; or
  \item a \emph{potential} region if it is neither positive nor negative.
\end {itemize}

\begin{figure}[tb]
    \centering
    \includegraphics{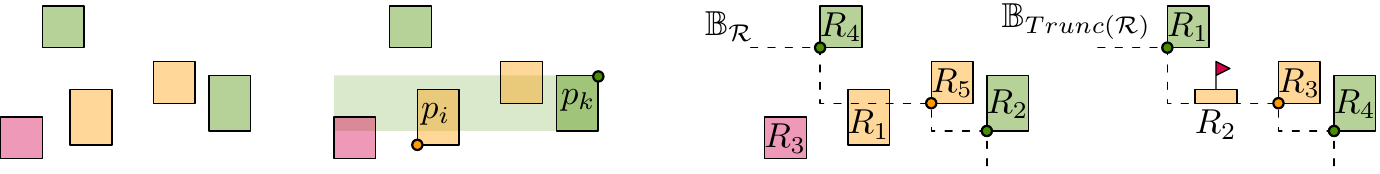}
    \caption{Left: a collection of uncertainty regions. Green is positive, red is negative and yellow is potential. The horizontal halfslab of a green region is shown. Right: A collection of uncertainty regions before and after truncation, note that we re-indexed the regions and flagged one. }
    \label{fig:truncation}
    \vspace{0cm}
\end{figure}

\begin{restatable}{lem}{classification}
\label{lemma:classification}
A region $R_i \in \RR$ is negative if and only if $\exists R_j \in \RR$ such that the top right vertex of $R_i$ is  dominated by the bottom left vertex of $R_j$. A non-negative region $R_i$ is positive if and only if $\not \exists R_k \in \RR$ such that $R_i$ intersects either halfslab of $R_k$.
\end{restatable}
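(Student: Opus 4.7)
The plan is to establish the two biconditionals separately, using the first to clean up the proof of the second.

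For the negative characterization, the ``if'' direction is immediate: whenever the bottom-left vertex of some $R_j$ dominates the top-right vertex of $R_i$, then for every $p_i \in R_i$ and $p_j \in R_j$ we have coordinatewise $p_j \geq $ bottom-left of $R_j \geq $ top-right of $R_i \geq p_i$, so $p_i$ is dominated in every $P$ that respects $\RR$. For the ``only if'' direction I would exhibit a single witness placement: put $p_i$ arbitrarily close to the top-right vertex of $R_i$ and every other $p_k$ arbitrarily close to the bottom-left vertex of $R_k$. Then $p_k$ dominates $p_i$ iff the bottom-left of $R_k$ dominates the top-right of $R_i$; if no such $R_k$ exists then $p_i$ is undominated and lies on the \pareto of this $P$, so $R_i$ is not negative.

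For the positive characterization I would first replace the halfslab condition by a simpler equivalent. Call $R_k$ a \emph{threat} to $R_i$ if the top-right vertex of $R_k$ dominates the bottom-left vertex of $R_i$, that is, $R_k.x_{\max} \geq R_i.x_{\min}$ and $R_k.y_{\max} \geq R_i.y_{\min}$. By exactly the same witness/coordinate-inequality argument as in part one, there exist $p_i \in R_i$ and $p_k \in R_k$ with $p_k$ dominating $p_i$ if and only if $R_k$ threatens $R_i$, so $R_i$ is positive iff no $R_k \in \RR$ threatens it. The lemma therefore reduces to: assuming $R_i$ is non-negative, $R_k$ threatens $R_i$ iff $R_i$ meets the horizontal or vertical halfslab of $R_k$.

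This last equivalence, which I expect to be the main bookkeeping obstacle, follows from a short boolean manipulation. Unfolding definitions, $R_i$ meets the horizontal halfslab of $R_k$ iff the $y$-intervals of $R_i$ and $R_k$ overlap and $R_i.x_{\min} \leq R_k.x_{\max}$, and symmetrically for the vertical halfslab. The disjunction of these two conditions factors as the threat condition conjoined with the clause ``$R_i$ and $R_k$ have $x$-overlap or $y$-overlap''. If both overlaps fail while threat holds, then $R_k.x_{\min} > R_i.x_{\max}$ and $R_k.y_{\min} > R_i.y_{\max}$, so the bottom-left of $R_k$ dominates the top-right of $R_i$; by the first part of the lemma this would make $R_i$ negative, contradicting the hypothesis. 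Hence under non-negativity the overlap clause is automatic, the halfslab condition collapses to the threat condition, and combining the two reductions yields the lemma.
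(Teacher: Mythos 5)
Your proof is correct and follows essentially the same route as the paper: both parts rest on the extremal placements (dominator at its top-right vertex, dominated point at its bottom-left vertex), and the key step in the second part --- that a ``threat'' which is not a halfslab intersection forces the bottom-left vertex of $R_k$ to dominate the top-right vertex of $R_i$, contradicting non-negativity --- is exactly the paper's argument, merely phrased via explicit coordinate inequalities and boolean factoring rather than geometrically. Your version is somewhat more careful about the bookkeeping, but it is not a genuinely different proof.
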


\begin{proof}
Let $R_i$ and $R_j$ be two axis-aligned rectangular uncertainty regions where the top right vertex of $R_i$ is dominated by the bottom left vertex of $R_j$. 
All choices of $p_i \in R_i$  are dominated by the top right vertex of $R_i$, similarly all choices of $p_j \in R_j$ dominate the bottom left vertex of $R_j$ hence via transitivity $p_j$ always dominates $p_i$ which implies that $R_i$ is a negative region. 
If there is no region whose bottom left vertex dominates the top right vertex of $R_i$, then $p_i$ appears on the \pareto of $P$ if all regions have their point lie on the bottom left vertex and $p_i$ lies on the top right vertex of $R_i$. Hence $R_i$ is then not negative. 

If $R_i$ is non-negative, and there exists a region $R_k$ that contains $R_i$ in its horizontal or vertical halfslab then $R_i$ cannot be positive since if $p_k$ is placed on the top right vertex of $R_k$ and $p_i$ on the bottom left vertex, $p_k$ must dominate $p_i$. 

Suppose that $R_i$ is not positive and not negative. Then per definition there exists a point placement of $p_i$, and another true point $p_l$, such that $p_l$  dominates $p_i$. In this case, $p_l$ also  dominates the bottom left vertex of $R_i$, yet the uncertainty region $R_l$ cannot be entirely contained in the quadrant that dominates the top right vertex of $R_i$, else $R_i$ is negative. Hence $R_l$ must have a halfslab that intersects $R_i$ which proves the lemma.
\end{proof}

\noindent
Evans and Sember~\cite{evans2011possible} and  Nagai~\etal~\cite{nagai1999convex} study convex hulls and Pareto fronts
of imprecise points.
They note that for a set of pairwise-disjoint convex regions $\RR$, there is a connected area of negative points.
They call this area the {\it guaranteed dominated region}. 
We refer to the boundary of the guaranteed dominated region as the \emph{guaranteed boundary} $\BR$. We note that for Pareto fronts, the guaranteed boundary is the Pareto front of the bottom left vertices in $\RR$.
Intuitively, discovering the exact location of a point below $\BR$ does not provide additional useful information, only discovering that a point lies below $\BR$ does.

\begin{restatable}{lem}{intersection}
\label{lemma:intersection}
Let $\RR$ be a set of pairwise disjoint non-negative rectangles.
The intersection of a region $R_i \in \RR$ with $\BR$ is a staircase with no top right vertex. 
\end{restatable}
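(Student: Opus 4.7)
The plan is to show both that $R_i \cap \BR$ is a rectilinear staircase and that it contains no convex top-right vertex.

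First, I would recall that $\BR$ is the \pareto of the bottom-left vertices $b_j$ of the rectangles $R_j \in \RR$, and is therefore a monotone rectilinear staircase: tracing it from top-left to bottom-right, $x$ increases while $y$ decreases. Because $R_i$ is convex, this $x$-monotonicity implies that $R_i \cap \BR$ is a single connected sub-path of $\BR$, and hence itself a (possibly degenerate) rectilinear staircase. This takes care of the ``staircase'' part of the claim.

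Second, the convex top-right vertices of $\BR$ are exactly those $b_j$ that are Pareto-maximal among all bottom-left vertices --- the points at which a rightward horizontal segment of $\BR$ meets a downward vertical segment. I would show that no such $b_j$ lies in $R_i$. For $j \ne i$, suppose $b_j$ lay in the interior of $R_i$; then a small up-right neighborhood of $b_j$ would lie simultaneously in $R_i$ and in $R_j$ (since $b_j$ is the bottom-left corner of $R_j$, and $R_j$ has positive area), contradicting pairwise disjointness. By the general-position assumption, $b_j$ also cannot lie on $\partial R_i$, as that would force $b_j$ to share a coordinate with a vertex of $R_i$. Hence $b_j \notin R_i$ for all $j \ne i$.

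The remaining case $j = i$ is the main subtlety. The point $b_i$ lies on $R_i$, but only at its bottom-left corner. If $b_i$ is not Pareto-maximal then it is not a top-right vertex of $\BR$ at all. If $b_i$ \emph{is} Pareto-maximal, then $\BR$ has a top-right corner at $b_i$, but the two adjacent staircase segments --- the horizontal one extending leftward from $b_i$ and the vertical one extending downward from $b_i$ --- both lie outside $R_i$. Consequently $b_i$ contributes at most an isolated point to $R_i \cap \BR$ and does not appear as a top-right vertex of the sub-staircase. Combining the two cases completes the proof; the hard part is exactly this boundary case, where $b_i$ coincides with a corner of $R_i$ and is simultaneously a convex corner of $\BR$, requiring me to argue that both staircase arms incident to $b_i$ escape $R_i$ so no top-right vertex is smuggled into the intersection.
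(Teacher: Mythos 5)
Your proof is correct and follows essentially the same route as the paper: the intersection of the $x$-monotone staircase $\BR$ with the convex rectangle $R_i$ is a connected sub-staircase, and any top-right vertex it could have would be a top-right vertex of $\BR$ itself, i.e.\ a Pareto-maximal bottom-left corner $b_j$, which pairwise disjointness (plus general position for the boundary case) excludes from $R_i$ when $j \neq i$. You are in fact more careful than the paper's own one-line argument on the $j=i$ case, where $b_i \in R_i$ trivially and disjointness alone says nothing, and your resolution --- both staircase arms incident to a Pareto-maximal $b_i$ immediately leave $R_i$, so $b_i$ contributes at most an isolated point --- is sound; the only ingredient of the paper's proof you omit is the (easy) non-emptiness of the intersection, which follows since the top-right corner of a non-negative $R_i$ lies above $\BR$ while $b_i$ lies on or below it.
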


\begin{proof}
Per definition, non-negative regions have a top right vertex that lies above $\BR$. Their bottom left vertex lies either on $\BR$, or below $\BR$ (since $\BR$ is the  \pareto of all bottom left vertices). Hence the closure of each uncertainty region intersects $\BR$.
The intersection between a connected staircase and an axis-aligned rectangular region is always a connected staircase. Each top vertex of $\BR$ corresponds to a bottom left vertex of a region in $\RR$. Each $R_i$ cannot cannot contain such a top vertex since regions are pairwise disjoint.
\end{proof}

\noindent
We formalise the above intuition by defining a procedure $\mathit{Trunc}$.
Given an {\em original} set $\RRo$ of $n^{\text{\tiny{\textregistered}}}$ pairwise disjoint axis-aligned rectangles, $\mathit{Trunc}(\RRo)$ returns a \emph{truncated set} $\RRt$ where some regions may be \emph{flagged} (marked with a boolean).
Refer to Figure~\ref{fig:truncation}.
Specifically, each negative region in $\RRo$ gets removed, each potential region $R_i$, whose bottom left vertex is below $\mathbb{B}_{\RRo}$, gets \emph{flagged} and replaced by the part of $R_i$  
above $\mathbb{B}_{\RRo}$.
By Lemma~\ref{lemma:intersection} this results in a rectangular area. 
All remaining regions are rectangles which touch $\mathbb{B}_{\RRo}$.
Since they are also disjoint, their intersections with $\mathbb{B}_{\RRo}$ induce a well-defined order, and 
$\mathit{Trunc}$ re-indexes the remaining regions according to top left to bottom right ordering of their bottom left vertices.
We obtain a set $\RRt = (R_1, R_2, \ldots R_{n^{\text{\LeftScissors}} }) = \mathit{Trunc}(\RRo)$ with $n^{\text{\LeftScissors}} \le n^{\text{\tiny{\textregistered}}}$.
Observe that $\mathbb{B}_{\RRo} = \mathbb{B}_{\RRt}$.
We say $\RRt$ is a \emph{truncated set} if it is the result of a truncation of some set $\RRo$.

 \subparagraph{Dependency graphs.}
Given a truncated set $\RR=\RRt$, we define a \emph{(directed) dependency graph} denoted by $G(\RR)$ as follows.
The nodes of the graph correspond to the regions in $\RR$.
We have two types of directed edges which we refer to as horizontal and vertical arrows. 
A region $R_i$ has a {\em vertical arrow} to $R_j$ if $R_j$ {\em succeeds} $R_i$ and is vertically visible from $R_i$ (that is, there exists a vertical segment connecting $R_i$ and $R_j$ that does not intersected any other region in $\RR$).
A region $R_i$ has a {\em horizontal arrow} to $R_j$ if $R_j$ {\em precedes} $R_i$ and is horizontally visible from $R_i$.
Refer to Figure~\ref{fig:arrowdirection}.
Observe that, if $\RR$ is a truncated set, any point region $p \in \RR$ has no outgoing arrows, since after truncation the halfslabs of $p$ do not intersect the interior of any rectangle in $\RR$.
We note an important property of the dependency graph:

\begin{figure}[b]
    \centering
    \includegraphics{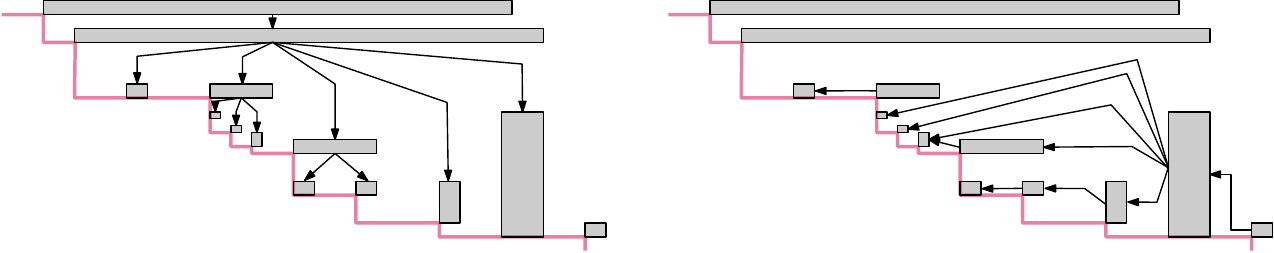}
    \caption{A truncated set and its horizontal and vertical arrows.}
    \label{fig:arrowdirection}
    \vspace{0cm}
\end{figure}

\begin{restatable}{lem}{independence}
\label{lemma:independence}
Let $R_i \in \RR$ such that $R_i$ is a source in $G(\RR)$.
Then all $R_l \in \RR$ with $i < l$ cannot have an incoming dependency arrow from a region $R_k$ with $k < i$ and vice versa.
\end{restatable}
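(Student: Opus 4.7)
My plan is to argue by contradiction. Suppose some arrow crosses $R_i$ in the index ordering: either a vertical arrow from $R_k$ to $R_l$ with $k<i<l$, or a horizontal arrow from $R_l$ to $R_k$ with $k<i<l$. These two configurations are interchanged by the coordinate swap $(x,y)\mapsto(y,x)$ combined with reversing the $\BR$-indexing, which also preserves sources, so it suffices to treat the vertical case.

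Let $s$ be a vertical segment at $x=x_0$ from $(x_0,y_1)\in R_k$ down to $(x_0,y_2)\in R_l$ that does not cross any other region. From $(x_0,y_2)\in R_l$ we get $x_0\ge x_{v_l}$, and monotonicity of the indexing along $\BR$ gives $x_{v_l}>x_{v_i}$, so $x_0>x_{v_i}$. I split on whether $x_0$ lies inside $R_i$'s $x$-range.

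\textbf{Case A: $x_0\le x_{R_i}^{\mathrm{right}}$.} Then $s$ lies in $R_i$'s $x$-range and must avoid $R_i$'s $y$-range $[y_{v_i},y_{R_i}^{\mathrm{top}}]$ at $x_0$. Since $y_1\ge y_{v_k}>y_{v_i}$ the segment cannot pass below $R_i$, so it must end strictly above: $y_2>y_{R_i}^{\mathrm{top}}$. Then $y_{R_l}^{\mathrm{top}}\ge y_2>y_{R_i}^{\mathrm{top}}$ while $y_{v_l}<y_{v_i}$, so $R_l$'s $y$-range contains $R_i$'s; combined with the common $x$-coordinate $x_0$ this forces $R_l\cap R_i\ne\emptyset$, contradicting pairwise disjointness.

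\textbf{Case B: $x_0>x_{R_i}^{\mathrm{right}}$.} Then $R_k$'s $x$-range contains $x_0$ and reaches left to $x_{v_k}<x_{v_i}$, so it covers all of $R_i$'s $x$-range. Disjointness together with $y_{v_k}>y_{v_i}$ forces $y_{v_k}>y_{R_i}^{\mathrm{top}}$: $R_k$ is strictly above $R_i$. Pick any $x^*\in[x_{v_i},x_{R_i}^{\mathrm{right}}]$ and shoot a ray upward from $(x^*,y_{R_i}^{\mathrm{top}})$. Since $R_k$ covers $x^*$ and sits above $R_i$ there, the ray hits a first region $R_m$, with bottom edge at some height $y_{v_m}>y_{R_i}^{\mathrm{top}}\ge y_{v_i}$. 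Because $v_m$ and $v_i$ both lie on the monotone staircase $\BR$, $y_{v_m}>y_{v_i}$ is equivalent to $x_{v_m}<x_{v_i}$, i.e.\ $m<i$. By choice of $R_m$ there is an unobstructed vertical segment from $R_i$'s top edge up to $R_m$'s bottom edge at $x^*$, so $R_m$ has a vertical arrow into $R_i$, contradicting that $R_i$ is a source.

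The horizontal case is analogous after swapping axes. The main obstacle is Case B: one must recognize that the monotonicity of $\BR$ forces every rectangle whose bottom-left lies strictly above $R_i$'s top edge to have index smaller than $i$, which is what promotes ``some region above $R_i$'' into a witness for an incoming arrow.
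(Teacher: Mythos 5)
Your proof is correct and follows essentially the same route as the paper's: reduce to the vertical case by symmetry, then show that a visibility segment crossing $R_i$'s position in the ordering forces either a violation of pairwise disjointness or a region sitting strictly above $R_i$, contradicting that $R_i$ is a source. Your Case B usefully makes explicit, via the upward ray and the monotonicity of the bottom-left vertices along $\BR$, the step the paper leaves implicit --- namely that ``some region lies above $R_i$'' really does produce an incoming vertical arrow from a lower-indexed region.
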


\begin{proof}
Consider such regions $R_k$, $R_i$ and $R_l$.  Per the ordering of $\RR$, the bottom left vertex of $R_k$ lies left and above the bottom left vertex of $R_i$. Per definition, $R_k$ can only have a vertical arrow to $R_l$. The region $R_k$ has a vertical arrow to $R_l$ only if its bottom facet lies above $R_l$. However, then either its bottom facet intersects $R_i$ (contradicting the assumption that the regions are pairwise disjoint) or it lies above $R_i$ (contradicting the assumption that $R_i$ is a source node in $G(\RR)$). The argument for arrows from $R_l$ to $R_k$ is symmetrical.
\end{proof}

\noindent

\begin{corollary}
\label{cor:subproblem}
Let $\RR$ be a truncated set and let $R_i$ and $R_j$ be source nodes in $G(\RR)$. 
There is no region in $\RR \backslash [R_i, R_j]$ that has a directed path in $G(\RR)$ to any region in $[R_i, R_j]$. 
\end{corollary}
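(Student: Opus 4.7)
The plan is to reduce the ``no directed path'' statement to a ``no single boundary-crossing arrow'' statement, and then apply Lemma~\ref{lemma:independence} twice, once at each source. Assume without loss of generality that $i \le j$, and suppose for contradiction that some region $R_v$ with $v \notin [i,j]$ admits a directed path in $G(\RR)$ to a region in $[R_i, R_j]$. Along such a path, let $R_a \to R_b$ be the first arrow whose head lies inside the interval: then $b \in [i,j]$ while $a \notin [i,j]$. Hence it suffices to show that no arrow of $G(\RR)$ has its tail outside $[R_i, R_j]$ and its head inside.

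The degenerate case $i = j$ is immediate, since $R_i = R_j$ is a source and therefore has no incoming arrow whatsoever. For $i < j$, I split on whether $a < i$ or $a > j$. If $a < i$, then $b \ge i$. When $b = i$, the arrow $R_a \to R_b$ would be an incoming arrow of the source $R_i$, which is impossible. When $b > i$, the indices satisfy $a < i < b$, and Lemma~\ref{lemma:independence} applied to the source $R_i$ (with $l := b$ and $k := a$) forbids any arrow in either direction between $R_a$ and $R_b$. The case $a > j$ is perfectly symmetric: $b = j$ contradicts $R_j$ being a source, and $b < j$ yields $b < j < a$, which is ruled out by applying Lemma~\ref{lemma:independence} to the source $R_j$.

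I do not expect a genuine obstacle: the only care needed is the minimality argument that picks the first arrow crossing into $[R_i, R_j]$, after which each case is a direct appeal to Lemma~\ref{lemma:independence} at the appropriate source, together with the trivial observation that a source has no incoming arrows. The main pitfall to avoid is forgetting the boundary cases $b = i$ and $b = j$, where the source property is used directly rather than via the lemma.
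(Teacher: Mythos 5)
Your proof is correct and matches the paper's intent: the paper states this as an immediate corollary of Lemma~\ref{lemma:independence} without writing out a proof, and your argument (take the first boundary-crossing arrow of the path, rule out $b=i$ or $b=j$ by sourcehood, and rule out $a<i<b$ or $b<j<a$ by the lemma applied at the respective source) is exactly the intended derivation, carefully elaborated.
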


\subparagraph{The Pareto cost function.}
We show that for any set $\RRo$, we can construct the \pareto of the underlying point set using only $\RRt = \mathit{Trunc} (\RRo)$.
To show that we can use $\RRt$ to construct $\Xi^*$ in uncertainty-region optimal time, we define  the \emph{Pareto cost function} denoted by $\CP(\RRt, P)$. 
In Section~\ref{sec:lowerbound} we show that $\CP(\RRt, P)$ is the uncertainty-region lower bound for constructing $\Xi^*$ and in Section~\ref{sec:upperbound} we show that this lower bound is tight.

Before we can define the Pareto cost function, we define additional concepts (Figure~\ref{fig:slabregions}). 
By $C$ we denote the unspecified cost for a retrieval.
Whenever we write $\log$ we refer to the logarithm base 2. Let $\RR = \RRt$ be a truncated set.
For all regions $R_i \in \RR$, we denote by $V_i$ the subset of $[R_i, R_n]$ that is vertically visible from $R_i$ (including $R_i$ itself) and by $H_i$ the subset of $[R_1, R_i]$ that is horizontally visible from $R_i$ (including $R_i$ itself). 
Given $P$, we denote by $V_i(P) \subseteq V_i$: the union of $ \{ R_i \}$ with the subset of $V_i$ of regions that are dominated by a point $p_j$ with $j \le i$. The set $H_i(P)$ is defined symmetrically taking points $p_j$ with $i \le j$.

Intuitively, the truncation operator represents the foresight about the  \pareto of $P$.
Now, given a truncated set $\RR$ and $P$ we construct a set $\RP \subset \RR$ that intuitively represents which regions of $\RR$ were geometrically interesting in hindsight. 
Consider for a given $P$, all regions that are intersected by the \pareto of $P$. Let $R_j$ be such a region, then given the \pareto of $P \backslash \{p_j\}$, $R_j$ covers some area above this \pareto. Hence, the point $p_j$ could be part of the \pareto of $P$ if it lies in this area. 
Intuitively, all regions intersected by the \pareto of $P$ are hereby suitable for further inspection; however, if the regions are \emph{positive} regions this further inspection might not be required to construct $\Xi^*$.
Similarly, if the region $R_j$ lies above the Pareto front of the points $P \backslash \{p_j\}$, the point $p_j$ cannot be dominated by a point in $P \backslash \{p_j\}$ and hence we can conclude it lies on the \pareto of $P$ without further inspection. 
This is why we define $\RP$ 
as the subset of $\RR$ where each region $R_i \in \RP$ is intersected by the \pareto of $P$ \emph{and} one of three conditions holds:
\begin {enumerate}
  \item $R_i$ is flagged;
  \item $R_i$ intersects and edge $e$ with endpoint $p_j \in P$ and $i \neq j$; and/or 
  \item $R_i$ is not a sink in $G(\RR)$.
\end {enumerate}
\noindent
We define the \emph{Pareto cost function} as: 
$
\CP(\RR, P) = \sum_{R_i \in \RP} C + \log |V_i(P)| + \log |H_i(P)|.
$

\begin{figure}[b]
    \centering
    \includegraphics{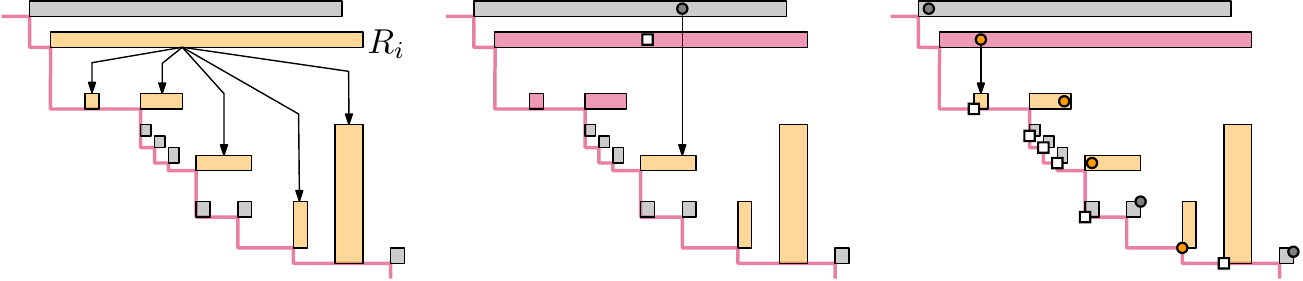}
    \caption{A region $R_i$ and the set $V_i$ in orange. Middle: for a given set of points, the set $V_i(P)$ is shown in red. Right: the set $V_i(P)$ changes for different $P$, but always includes $R_i$. }
    \label{fig:slabregions}
    \vspace{0cm}
\end{figure}

\section{Lower bounds}
\label{sec:lowerbound}

One is free to compute any auxiliary $\Xi$ in the preprocessing phase, in order to reconstruct a structure $\Xi^*$, isomorphic to the \pareto, as efficiently as possible.
There exists a choice of input $\RRo$ where all regions are positive: namely whenever $\RRo = \RRt = Trunc(\RRo)$ and $G(\RRo)$ is a graph with no edges. 
In this case, for every choice of $P$ that respects $\RRo$, the \pareto of $P$ is isomorphic to $\mathbb{B}_{\RRo}$ hence it is possible to construct $\Xi^*$ in the preprocessing phase. If $\RRo$ has $m$ elements, 
constructing $\mathbb{B}_{\RRo}$ has a well-known $O(m \log m)$ worst case lower bound. \\

In the reconstruction phase an algorithm can use \emph{any} auxiliary structure $\Xi$ to aid its computation.
In the remainder of this section we consider \emph{any} truncated set $\RR = \RRt = Trunc(\RRo)$ of $n$ elements, together with \emph{any} auxiliary datastructure. We provide an information-theoretical lower bound, which depends on $\RR$ and $P$, for both the number of RAM instructions and disk retrievals required to construct $\Xi^*$ regardless of $\Xi$.

\subsection{A lower bound for disk retrievals}
\label{sub:retrievals}

Bruce~\etal
study in their paper the reconstruction of the \pareto of $P$ in a variant of (what would later be) the preprocessing model with implicit representation. Bruce~\etal present an iterative retrieval strategy that is instance optimal. Their strategy performs at most three times more retrievals than any algorithm must use to discover the \pareto of $P$ and they prove that this factor-3 redundancy is the best anyone can do. Their strategy describes the regions that must be considered in a geometric sense, not an algorithmic sense. That is, at each iteration they can identify a triplet of regions to query. But they have no algorithmic procedure to identify these three regions as such, nor a way to beforehand specify which regions should be considered. 
In their model this is justifiable as they assume that the retrieval cost $C$ vastly dominates any RAM instructions and hence identifying the triple each iteration is trivial. 
In this paper, we drop the assumption that $C$ is enormous and are interested in a retrieval strategy which not only minimizes the number of retrievals, but which can also elect which points to retrieve efficiently.

We note that the query strategy of Bruce~\etal produces a result of the same quality as the lemma below and naturally, our proofs share some elements which we fully wish to attribute to the work of \cite{bruce2005efficient}. The novelty in our result is that for each pair $(\RR, P)$ we are able to characterize the regions which require a disk retrieval using $\RP$. Which will help us in the reconstruction phase, when we want to identify these regions efficiently.

\begin{figure}[b]
    \centering
    \includegraphics{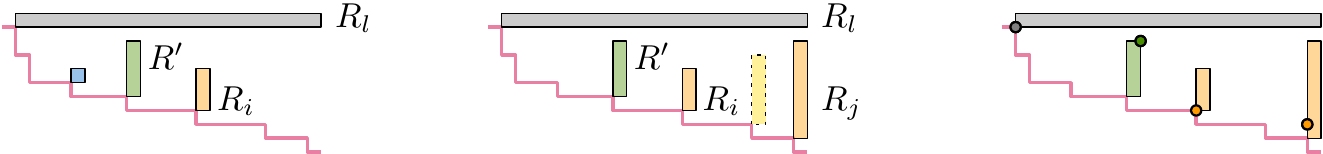}
    \caption{Left: The region $R_l$ charges the blue region and $R_i$ the green. Middle: for $R_j$, either  $R_i \in H_j$ or there is another region (yellow) with higher index in $H_j$. }
    \label{fig:lowerboundcharge}
    \vspace{0cm}
\end{figure}
\begin{restatable}{lem}{retrieveLowerbound}
\label{lemma:retrievelowerbound}
Let $\RR$ be a truncated set and let $P$ be any point set that respects $\RR$. Any algorithm that constructs $\Xi^*$ of $P$ must perform at least $\frac13|\RP|$ retrievals.
\end{restatable}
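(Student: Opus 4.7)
\textbf{Proof plan for Lemma \ref{lemma:retrievelowerbound}.}
The strategy is an adversary / charging argument in the style of Bruce~\etal, tailored to the combinatorial characterization of $\RP$. Fix an algorithm $A$, an auxiliary structure $\Xi$, a truncated set $\RR$, and a point set $P$ respecting $\RR$. Let $S\subseteq\RR$ be the set of regions whose true points $A$ retrieves on this instance. We will exhibit an injection, up to multiplicity $3$, from $\RP$ into $S$. That is, for each $R_i\in\RP$ we define a ``witness set'' $W_i\subseteq\RR$ with $|W_i|\le 3$, argue $W_i\cap S\ne\emptyset$, and then show each $R_j\in S$ lies in at most $3$ of the sets $W_i$.

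\emph{Step 1: Defining the witness set.} For $R_i\in\RP$ let $W_i$ consist of $R_i$ together with its immediate predecessor and successor on the Pareto front of $P$ (or, in the flagged case, along the corresponding portion of $\mathbb{B}_\RR$); so $|W_i|\leq 3$. The three conditions defining $\RP$ each correspond to a different local geometric ambiguity which requires at least one of these three points to be known. In all cases $R_i$ is intersected by the Pareto front, so the positions of $p_i$ and of its two staircase neighbours are precisely the information determining the local combinatorial structure at $R_i$.

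\emph{Step 2: Adversary argument that $W_i\cap S\ne\emptyset$.} Suppose for contradiction that the algorithm retrieves no point from $W_i$ on input $(\RR,P)$. For each of the three defining conditions, I would exhibit a second point set $P'$ that (i) agrees with $P$ on every retrieved region (so $A$ cannot distinguish $P$ from $P'$) but (ii) produces a $\Xi^*$ that is not isomorphic to the one for $P$. Concretely: if $R_i$ is flagged, move $p_i$ below $\mathbb{B}_\RR$ so that it drops off the Pareto front; if $R_i$ intersects an edge of the Pareto front with endpoint $p_j\ne p_i$, move $p_i$ across that edge, toggling whether $p_i$ dominates $p_j$ or vice versa; if $R_i$ is not a sink of $G(\RR)$, use the outgoing arrow (horizontal or vertical) to move $p_i$ so that a successor region $R_k$ is either dominated or not. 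Because all retrieved points are unchanged and because $\RR$ contains no other region in the interior of the slab between $R_i$ and its staircase neighbours, the modified $P'$ respects $\RR$ and differs from $P$ only in the combinatorial structure at $R_i$. Thus $A$ produces the wrong $\Xi^*$ on one of the two inputs, a contradiction. Hence $W_i\cap S\ne\emptyset$, and we may pick $\phi(R_i)\in W_i\cap S$.

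\emph{Step 3: Charging and conclusion.} It remains to bound $|\phi^{-1}(R_j)|$ for each $R_j\in S$. Any region $R_i$ with $\phi(R_i)=R_j$ has $R_j\in W_i$, which by construction means $R_j$ is either $R_i$ itself, or the Pareto-front predecessor of $R_i$, or its successor. Since the Pareto front is a totally ordered staircase, $R_j$ plays each of these three roles for at most one region $R_i\in\RP$. Therefore $|\phi^{-1}(R_j)|\le 3$, giving $|\RP|\le 3|S|$ and the stated bound.

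The main obstacle is Step 2: one must be careful that moving $p_i$ inside $R_i$ still yields a valid point set respecting $\RR$ and genuinely changes the isomorphism class of $\Xi^*$. The non-triviality comes from handling the three conditions defining $\RP$ uniformly, and from verifying that the perturbation can be made without accidentally altering the local Pareto structure elsewhere; the pairwise disjointness of $\RR$ together with Lemma~\ref{lemma:classification} and Lemma~\ref{lemma:intersection} are what make the perturbation feasible.
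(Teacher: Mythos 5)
Your overall architecture (a factor-$3$ charging argument, split along the three defining conditions of $\RP$, with an adversary perturbation showing indistinguishability) matches the paper's, but your choice of witness set is wrong, and this breaks Step~2 for the third condition. You take $W_i$ to be $R_i$ together with its \emph{Pareto-front neighbours}. For conditions (1) and (2) this is harmless only because there the adversary moves $p_i$ alone, so in fact $R_i$ itself must be retrieved and $R_i\in W_i$ trivially saves you. For condition (3), however, the ambiguity created by $R_i$ is not local to the staircase neighbours of $p_i$: it concerns whether $p_i$ dominates the point of some region $R_k\in V_i\cup H_i$, and $R_k$ need not be the Pareto predecessor or successor of $p_i$ (other sink regions can sit between $R_i$ and $R_k$ in the front order). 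An algorithm can resolve this ambiguity by retrieving $p_k$ and observing that it lies outside the reach of $R_i$ (e.g.\ to the right of $R_i$'s right edge), in which case every placement of $p_i$ leaves the front combinatorially unchanged and none of $\{p_i,\mathrm{pred}_i,\mathrm{succ}_i\}$ is ever retrieved. So $W_i\cap S=\emptyset$ is consistent with a correct algorithm, and your claimed contradiction in Step~2 does not arise. Relatedly, your perturbation ``move $p_i$ so that $R_k$ is either dominated or not'' is not always available: whether it flips anything depends on where $p_k$ actually sits, which is exactly why a single-point adversary cannot carry condition (3).

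The paper avoids this by drawing the witnesses from the dependency graph rather than from the front: for condition (3) it charges the highest-indexed region $R'\in H_i$ (symmetrically $V_i$), proves each region is charged at most twice, and justifies the charge with a \emph{three-point} adversary on the triple $\{p', p_i, p_l\}$ (the charged region and its at most two chargers), crucially using that condition (2) fails for $R_i$ and $R_l$ so that no other point of $P$ can dominate any of the three regions and the triple can be perturbed freely. The conclusion is then only that at least one of the three must be retrieved, which is where the factor $3$ genuinely comes from. Your Step~3 multiplicity count (``predecessor/self/successor each at most once'') is clean but is counting the wrong objects; to repair the proof you would need to replace $W_i$ by a set built from $H_i\cup V_i$ and redo the multiplicity bound along the lines of the paper's at-most-two-charges argument.
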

\begin{proof}
Let $R_i \in \RP$. Per definition, $R_i$ is not dominated by a point in $P$. Hence given $P \backslash p_i$, there exists a choice of $p_i$ such that $p_i$ appears on the \pareto of $P$.
Any algorithm $A$ \emph{must} spend a disk retrieval on $p_i$, if there \emph{also} exists a choice of $p_i$ such that it does not appear on the \pareto, given $P \backslash p_i$.
We consider the three cases for when $R_i \in \RP$:

Let $R_i$ be flagged. Then there exists a choice of $p_i$ such that $p_i$ lies below $\BR$ and hence does not appear on the \pareto of $P$.
Else let $R_i$ be intersected by an edge that has as an endpoint a point $p_j$ with $j \neq i$. Then $e$ is either a vertical edge whose top vertex is $p_j$ or a horizontal edge whose right vertex is $p_j$. In both cases, there exists a choice of $p_i$ for which it does not appear on the \pareto of $P$ since it would be dominated by $p_j$ (this is achieved by placing $p_i$ left of the vertical edge, or below the horizontal edge).
Lastly let neither first two cases apply and $R_i$ have at least one outgoing edge in $G(\RR)$. 
Then there is at least one region $R' \in H_i \cup V_i$, the argument for this case is illustrated by Figure~\ref{fig:lowerboundcharge}.
Denote by $R'$ a region in $H_i$ (the case for $V_i$ is symmetrical). 
Moreover, let $R'$ be the region in $H_i$ with the highest index. 
We `charge' the region $R'$ one disk retrieval. First we show that each region in $\RR$ gets charged at most twice, then we show this charge is justified. 

Suppose that $R'$ gets charged by two regions $R_i$, $R_j$ with $R' \in H_i$ and $R' \in H_j$ (the argument for when $R'$ lies in two vertical halfslabs is symmetrical) and let $i < j$.
If $R'$ lies in $H_i$ and $H_j$, then $R_i$ must lie in the horizontal halfslab of $R_j$, which contradicts the assumption that $R'$ was the region in $H_j$ with the highest index (see Figure~\ref{fig:lowerboundcharge}, middle).

Second we show that this charge is justified.
Consider $R'$ and the two regions $R_i$ and $R_l$ ($l < i$) that charge $R'$ and all points in $P \backslash \{p', p_i, p_l \}$. 
Since case (2) does not apply to $R_i$ and $R_l$, there is no point $p \in P \backslash \{p_i, p_l \}$ whose horizontal or vertical halfslab intersects $R_i$ or $R_l$, thus no point in $P \backslash \{ p_i, p_l \}$ can dominate $R'$, $R_i$ or $R_l$.
This implies that regardless of all other points, there a choice for $p_i, p_l, p'$ where all three points appear on the \pareto of $P$ (the point placement where $p_i$ and $p_l$ appear on the bottom left vertex of their respective regions and $R'$ appears on the top right vertex).
However, there also exists a choice where $p'$ is dominated by $p_l$ or $p_i$.
Any algorithm must therefore consider at least $p', p_i$ or $p_l$ in order to find out and this is why the charge is justified. 
\end{proof}

\subsection{A lower bound on RAM instructions}
In Section~\ref{sec:optimality} we defined the uncertainty-region lower bound. 
By an information-theoretical lower bound (algebraic decision tree or entropy \cite{afshani2017instance,cardinal2019information}), 
we have, for any $\RR$, that the Uncertainty-region lower bound is at least $\log L(\RR)$, where $L(\RR)$ is the number of combinatorially different Pareto fronts of point sets 
that respect $\RR$.
We prove the following:

\begin{restatable}{lem}{realLower}
\label{lemma:realLower}
Let $\RR$ be a truncated set and $P$ be any point set that respects $\RR$. Then
\[
\sum_{R_i \in \RP} \log |V_i(P)| + \log |H_i(P)| \le 2 \cdot \log L(\RR)\,.
\]
\end{restatable}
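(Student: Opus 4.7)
The plan is to recast the claim in product form: since $\sum_{R_i \in \RP} \bigl(\log|V_i(P)| + \log|H_i(P)|\bigr) = \log \prod_{R_i \in \RP} |V_i(P)|\cdot|H_i(P)|$, it suffices to prove the two separated bounds
\[
\prod_{R_i \in \RP} |V_i(P)| \le L(\RR), \qquad \prod_{R_i \in \RP} |H_i(P)| \le L(\RR),
\]
and then multiply and take logarithms. By symmetry under swapping the roles of the two coordinate axes, only the first inequality needs to be proved.

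To prove $\prod_i |V_i(P)| \le L(\RR)$, I would construct an injection from the product set $\mathcal T := \prod_{R_i \in \RP}\{1,2,\dots,|V_i(P)|\}$ into the combinatorially distinct Pareto fronts of point sets respecting $\RR$. Given a tuple $\tau = (v_i)_i \in \mathcal T$, I build a witness point set $P^\tau$ that respects $\RR$ by starting from $P$ and, for each $R_i \in \RP$, sliding $p_i^\tau$ within $R_i$ so that its ``vertically next'' Pareto neighbour (the region whose point first caps $p_i^\tau$ from below and to the right) is the $v_i$-th element of $V_i(P)$ under a natural ordering by the $y$-coordinate of the top-right vertex. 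Because every region in $V_i(P)$ is vertically visible from $R_i$, the horizontal extents of those top-right vertices partition the $y$-range of $R_i$ into $|V_i(P)|$ open slots, so such a placement always exists inside $R_i$ and yields a point set respecting $\RR$.

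The main obstacle is showing that distinct tuples really produce combinatorially distinct Pareto fronts. I would prove this by giving an explicit recovery procedure: reading only the combinatorial front of $P^\tau$, one locates each $p_i^\tau$ (which is guaranteed to sit on the front by the three defining conditions of $\RP$: flagged, edge-intersected, or non-sink in $G(\RR)$), identifies its vertical successor, and reads off which region of $V_i(P)$ hosts that successor, thus recovering $v_i$. The delicate point is that a region of $V_i(P)$ may itself lie in $\RP$ and so its own point has been repositioned; one handles this by processing the recovery in the index order of $\RP$ and exploiting the subproblem independence provided by the dependency graph (Corollary~\ref{cor:subproblem}), so that each $v_i$ can be unambiguously read off locally without interference from slots processed later.

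Combining the two inequalities gives $\prod_i |V_i(P)|\cdot|H_i(P)| \le L(\RR)^2$; taking logarithms produces the stated bound $\sum_{R_i\in\RP} \log|V_i(P)| + \log|H_i(P)| \le 2\log L(\RR)$.
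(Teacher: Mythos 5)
Your opening reduction is exactly the paper's: rewrite the sum as $\log \prod_{R_i\in\RP}|V_i(P)|$, prove $\prod_{R_i\in\RP}|V_i(P)|\le L(\RR)$, and invoke symmetry for the $H_i(P)$ factor. The gap is in the witness construction. You build $P^\tau$ by starting from $P$ and moving \emph{only} the points of regions in $\RP$, leaving every $p_k$ with $R_k\in V_i(P)\setminus\{R_i\}$ where it was. But $V_i(P)$ is the set of regions of $V_i$ dominated by \emph{some} point $p_j$ with $j\le i$ --- not necessarily by $p_i$. If those regions are dominated by points $p_j$ with $j<i$, then sliding $p_i$ inside $R_i$ changes nothing about which of their points appear on the Pareto front (they stay dominated regardless), so distinct tuples $\tau$ collapse to the same combinatorial front and your map is not an injection. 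Even when $p_i$ is the dominating point, the ``slots'' are cut by the actual positions of the fixed points $p_k$ inside their regions, and not all $|V_i(P)|$ slots need be reachable from $p_i$'s domination quadrant within $R_i$. Your recovery procedure also leans on the claim that each $p_i^\tau$ ``is guaranteed to sit on the front by the three defining conditions of $\RP$''; this is false --- membership in $\RP$ only says $R_i$ is \emph{intersected} by the front, and $p_i$ may well be dominated (that is precisely why such regions cost a retrieval). A small additional slip: for vertical visibility the slots are determined by $x$-coordinates across the $x$-extent of $R_i$, not by ``horizontal extents partitioning the $y$-range''.

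The repair is to abandon perturbing $P$ and instead exhibit, for each value of $v_i$, a \emph{fresh} point set respecting $\RR$: place the points of the regions in $V_i(P)\setminus\{R_i\}$ at their bottom-left vertices and let $p_i$ dominate an arbitrary prefix of them. Because these regions are vertically visible from and succeed $R_i$, their bottom-left vertices all have $x$-coordinates inside the $x$-extent of $R_i$, so every prefix is realizable, giving $|V_i(P)|$ combinatorially distinct partial fronts. Independence across the members of $\RP$ then comes not from Corollary~\ref{cor:subproblem} but from a slab decomposition: the bottom-left vertices of $V_i(P)$ lie strictly between the left facets of consecutive members of $\RP$ with $|V_\cdot(P)|\ge 2$, so the local choices are confined to pairwise disjoint vertical slabs and the counts multiply, yielding $\prod_{R_i\in\RP}|V_i(P)|\le L(\RR)$.
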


\begin{proof}
We show that $\sum_{R_i \in \RP} \log |V_i(P)| \le \log L(\RR)$. By a symmetric argument we have $\sum_{R_i \in \RP} \log |H_i(P)| \le \log L(\RR)$ and the lemma follows.
Consider for a fixed set $P$ all regions $R_i \in \RP$ for which $|V_i(P)| \ge 2$ (recall that $R_i \in V_i(P)$) and sort them from lowest index to highest.
 For ease of exposition  we denote these regions as $(R^1, R^2, \ldots, R^m)$.
We create $m$ different, pairwise disjoint vertical slabs as follows: the first slab is bound by the left facets of $R^1$ and $R^2$, the second by facets of $R^2$ and $R^3$ and the $m$'th slab is a halfplane (Figure~\ref{fig:verticalslabs}).
In the degenerate case that a slab has width $0$ (this can occur, when after truncation regions can have left vertices that share a coordinate) we give it width $\varepsilon$.

Let $R_i = R^1$ and $R_j = R^2$. For all regions $R_k \in V_i(P)$, per definition $i \le k < j$. Each of these truncated regions has thus a bottom left endpoint that lies left of the bottom left vertex of $R_j$ and right of the bottom vertex of $R_i$ which implies that their bottom left vertex lies in the first vertical slab.
The result of this observation is, that given $\RR$, there are at least $|V_i(P)|$ combinatorially different Pareto fronts contained within the first vertical slab. These Pareto fronts are obtained by placing the points of the regions in $V_i(P) \backslash R_i$ on their respective bottom left endpoints, and by letting $p_i$ dominate any prefix of these points.

Let $R_j = R^2$ and $R_k = R^3$. Via the same argument each region in $V_j(P)$ has its bottom endpoint in the second vertical halfslab. Hence with the same argument as above, there are at least $|V_j(P)|$ combinatorially different Pareto fronts contained within the second halfslab. Moreover, we created $|V_i(P)|$ different combinatorial outcomes by placing only points in the first vertical halfslab, using only points preceding $p_j$. This means that these combinations can be generated, whilst no point preceding $p_j$ dominates any point following $p_j$. This implies that the total number of combinatorially different Pareto fronts contained in both the first and second halfslab is $|V_i(P)| \cdot |V_j(P)|$.
By applying this argument recursively it follows that:
$
\prod_{R_i \in \RP} |V_i(P)| \le L(\RR),
$
which concludes the proof.
\end{proof}

\noindent
Given Lemma~\ref{lemma:retrievelowerbound} and Lemma~\ref{lemma:realLower} we can immediately conclude the following:

\begin{figure}[t]
    \centering
    \includegraphics{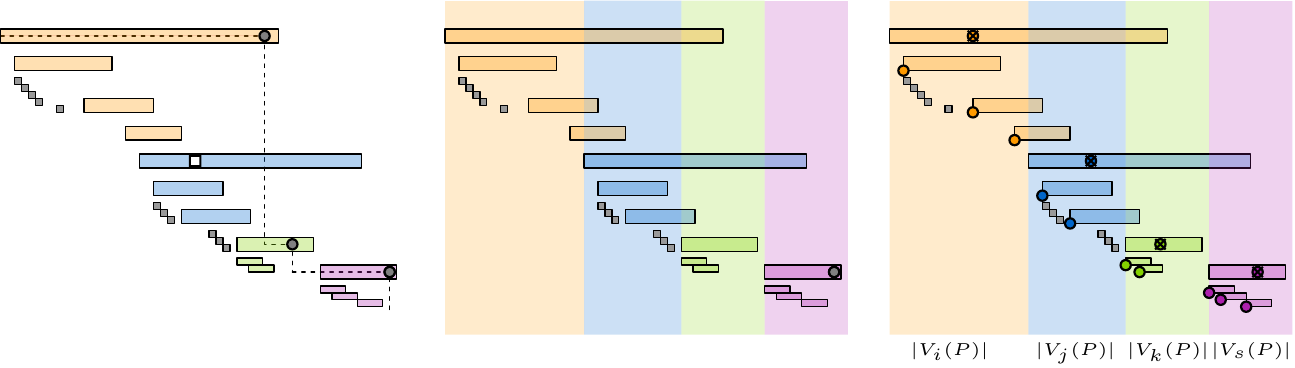}
    \caption{Left: A pair $(\RR, P)$ such that the grey points form the \pareto. Given the \pareto, we can extract $V_i(P)$ for each $i$. Middle: based on the sets $V_i(P)$, we create vertical slabs irrespective of the original points $P$. Right: In each vertical slab, we can create $V_i(P)$ combinatorially distinct (partial) Pareto fronts using only points in the vertical slab. }
    \label{fig:verticalslabs}
    \vspace{0cm}
\end{figure}

\begin{restatable}{theorem}{lowerbounds}
\label{thm:lowerbounds}
Let $\RR$ be a truncated set and $P$ be \emph{any} set that respects $\RR$. Then $\CP(\RR, P)$ is fewer than three times the uncertainty-region lower bound of $\RR$. 
\end{restatable}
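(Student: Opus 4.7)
The plan is to split $\CP(\RR,P)$ into its two natural summands and bound each separately against the uncertainty-region lower bound of $\RR$, which I will abbreviate $\mathrm{LB}(\RR)$. Recall that
\[
\CP(\RR,P) \;=\; |\RP|\cdot C \;+\; \sum_{R_i\in\RP}\bigl(\log|V_i(P)| + \log|H_i(P)|\bigr),
\]
where the first summand is the retrieval cost and the second is the RAM-instruction cost; I would bound each by invoking one of the two preceding lemmas.

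For the retrieval summand I apply Lemma~\ref{lemma:retrievelowerbound} at the specific $P$ in the statement: every algorithm-auxiliary pair $(A,\Xi)$ must perform at least $|\RP|/3$ retrievals on $(\RR,P)$ and therefore has runtime on $(\RR,P)$ at least $\tfrac{|\RP|}{3}C$. Since $\mathrm{LB}(\RR) = \min_{(A,\Xi)}\max_{P'\text{ respects }\RR}\mathrm{Runtime}(A,\Xi,\RR,P')$, evaluating the inner maximum at $P'=P$ yields $\mathrm{LB}(\RR) \ge \tfrac{|\RP|}{3}C$, so $|\RP|\cdot C \le 3\,\mathrm{LB}(\RR)$.

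For the RAM summand I combine Lemma~\ref{lemma:realLower}, which bounds $\sum_{R_i\in\RP}(\log|V_i(P)| + \log|H_i(P)|) \le 2\log L(\RR)$, with the information-theoretic lower bound $\log L(\RR) \le \mathrm{LB}(\RR)$ recalled at the start of this section; the RAM summand is therefore at most $2\,\mathrm{LB}(\RR)$.

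Summing the two bounds naively gives only the factor-five inequality $\CP(\RR,P) \le 5\,\mathrm{LB}(\RR)$. To tighten this to ``fewer than three times'', the key remaining observation is that retrievals and RAM instructions are additive, disjoint components of runtime, so the two lower bounds above can be charged to disjoint parts of the worst-case execution, yielding the combined additive bound $\mathrm{LB}(\RR) \ge \tfrac{|\RP|}{3}C + \log L(\RR)$; the theorem then follows from the elementary inequality $|\RP|\cdot C + 2\log L(\RR) < 3\bigl(\tfrac{|\RP|}{3}C + \log L(\RR)\bigr)$. I expect this final additivity step to be the main obstacle, as verifying that the retrieval lower bound and the entropy lower bound are not simultaneously witnessed by the same decision-tree steps requires a careful argument in the computational model.
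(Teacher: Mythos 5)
Your decomposition is exactly the paper's: Theorem~\ref{thm:lowerbounds} is stated as an immediate consequence of Lemma~\ref{lemma:retrievelowerbound} (bounding the retrieval term $|\RP|\cdot C$ by three times the lower bound) and Lemma~\ref{lemma:realLower} together with the information-theoretic inequality $\log L(\RR)\le\mathrm{LB}(\RR)$ (bounding the RAM term by $2\,\mathrm{LB}(\RR)$), and the paper offers no further proof. Your observation that a naive combination only yields a factor of five, and that the factor three rests on the retrieval cost and the comparison cost being \emph{additive} components of a single runtime, is precisely the right reading of why the theorem says ``fewer than three times.''

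One caveat on your final step, which you rightly flag as the obstacle: ``charged to disjoint parts of the worst-case execution'' glosses over the fact that the two component lower bounds are a priori witnessed by \emph{different} point sets. The uncertainty-region lower bound is $\min_{(A,\Xi)}\max_{P'}\bigl(\mathrm{RAM}(A,P') + C\cdot\mathrm{Ret}(A,P')\bigr)$, and a max of a sum dominates the max of each summand but \emph{not} the sum of the two maxima; Lemma~\ref{lemma:retrievelowerbound} is applied at the given $P$, while the entropy bound $\log L(\RR)$ is realized at some adversarial $P''$ produced by the decision-tree argument. To close this, note that the $L(\RR)$-counting in Lemma~\ref{lemma:realLower} is carried out over perturbations of the given $P$ inside the regions of $\RP$ (points slid to bottom-left vertices within the slabs), so one can choose a single respecting point set that simultaneously forces $\tfrac13|\RP|$ retrievals and the requisite comparisons. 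This gap is inherited from, not introduced by, your write-up --- the paper asserts the conclusion without addressing it --- but your proof should make the single-witness argument explicit rather than appeal to disjointness of cost types alone.
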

\noindent
We wish to briefly note that for each $i$, $V_i(P)$ and $H_i(P)$ have at most $n$ elements and thus by Lemma~\ref{lemma:retrievelowerbound}, $\CP(\RR, P)$ is a factor $\log n$ removed from the instance lower bound. 

\section{Reconstructing a \pareto}
\label{sec:upperbound}

Theorem~\ref{thm:lowerbounds} gives an uncertainty-region lower bound for any truncated set $\RR$. In this section, we show that this lower bound is tight. 
To that end, we first define additional geometric concepts.
First, we introduce the notion of \emph{canonical} rectangles.
Then we define the notion of \emph\subproblems. 
Finally, we show how to use the \subproblems of a canonical set to quickly select only regions which lie in $\RP$.
We wish to emphasise that in the reconstruction phase we have \emph{implicit} access to the point set $P$, meaning that for each region $R_i$, we can request $p_i$ in $O(C)$ time. Thus reading all points in $P$ takes $\Omega(n C)$ time, which we aim to avoid.

\subsection{Geometric preliminaries for reconstruction}
\label{sub:reconstPrelims}
Let $\RR$ be a truncated set of $n$ regions and let $P$ respect $\RR$.
Denote by $\Vnext{i}$ the region strictly right of the vertical slab of $R_i$ with the lowest index; $\Hprev{i}$ is defined symmetrically using the highest index (refer to Figure~\ref{fig:pointerstructure}).
For each $i$, let $p_i^\mathit{xMax}$ (respectively $p_i^\mathit{yMax}$) be the point in $P$ with maximal $x$-coordinate ($y$-coordinate) among points $p_k$ with $k \le i$ (with $k \ge i$).
Throughout this section, we denote by $f_i(P)$ the region succeeding $R_i$ with the lowest index that is not dominated by a point $p_k$ with $k \le i$. 
The region $g_i(P)$ is the region preceding $R_i$ with highest index not dominated by a point $p_k$ with $k \ge i$.

Let $R_i \in \RR$ be both a source and sink in $G(\RR)$.
By Lemma~\ref{lemma:independence}, $p_i$ appears on the \pareto and connects the \pareto of $[p_1, p_{i-1}]$ and $[p_{i+1}, p_n]$.
Thus, we can split the problem of computing the \pareto of $P$ into two, and solve each half independently.
We say that a truncated set $\RR$ is \emph{culled} if $G(\RR)$ contains no region that is both a source and a sink.
Let $[R_i, R_j]$ be a sequence of sinks in $G(\RR)$, and $R^*$ be the smallest rectangle that contains $R_i$ and $R_j$.
Note that $R^*$ is disjoint from regions in $\RR \backslash [R_i, R_j]$ and contains all $[R_i, R_j]$. We can use $R^*$ to capture a ``streak'' of points which do, or do not, appear on the \pareto:

\begin{restatable}{lem}{contiguous}
\label {lem:contiguous}
Let $[R_i, R_j]$ be a sequence of sinks in $G(\RR)$. If there is no $p_k \in P$ preceding $p_i$ that dominates $p_i$ then there is no point preceding $p_i$ that dominates any point in $[p_i, p_j]$. If some $p_k$ preceding $p_i$ dominates $p_j$, then $p_k$ dominates all points in $[p_i, p_j]$.
Similar statements hold for $p_k$ succeeding $p_j$.
\end{restatable}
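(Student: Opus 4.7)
My plan is to first establish a structural lemma for sequences of sinks, and then use it to prove each of the two conditional dominance statements by a short case analysis; the statements for a $p_k$ succeeding $p_j$ then follow by a mirror argument.

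\emph{Structural lemma.} In any sequence of sinks $[R_i, R_j]$, for any indices $a < b$ in $[i,j]$, $R_b$ lies strictly below and strictly to the right of $R_a$: the top edge of $R_b$ is strictly below the bottom edge of $R_a$, and the right edge of $R_a$ is strictly to the left of the left edge of $R_b$. I first prove this for consecutive sinks $R_a, R_{a+1}$. Since no index lies strictly between $a$ and $a+1$, no region in $\RR$ can block visibility between $R_a$ and $R_{a+1}$. If their $x$-ranges overlapped, then because $R_{a+1}$'s bottom-left lies strictly below $R_a$'s (by the staircase ordering of the truncated set) and the two rectangles are disjoint, $R_{a+1}$ would have to lie entirely below $R_a$; the resulting unblocked vertical segment forces a vertical arrow $R_a \to R_{a+1}$, contradicting that $R_a$ is a sink. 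Symmetrically, an overlap in $y$ would leave $R_a$ entirely to the left of $R_{a+1}$ and force a horizontal arrow $R_{a+1} \to R_a$. Transitivity extends the separation to arbitrary $a < b$, and as an immediate corollary the bounding rectangle $R^*$ has its top and left edges coincide with those of $R_i$, and its bottom and right edges coincide with those of $R_j$.

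For the first statement I argue by contradiction. Suppose some $p_k$ with $k < i$ dominates some $p_l$ with $l \in [i,j]$. Since $R_k$ is disjoint from $R^*$ and its bottom-left lies above-and-to-the-left of $R^*$'s bottom-left, disjointness of axis-aligned rectangles forces $R_k$ to lie either entirely above $R^*$ or entirely to the left of $R^*$. If entirely to the left, then the $x$-coordinate of $p_k$ is less than the left edge of $R^*$, which is at most the $x$-coordinate of $p_l$, contradicting domination. Hence $R_k$ lies entirely above $R^*$, so the $y$-coordinate of $p_k$ exceeds the top edge of $R_i$ and therefore exceeds that of $p_i$. The structural lemma applied to $R_i$ and $R_l$ (for $l > i$) gives that the $x$-coordinate of $p_l$ strictly exceeds the right edge of $R_i$ and hence exceeds the $x$-coordinate of $p_i$; combined with $x_k \ge x_l$ this shows $p_k$ dominates $p_i$, contradicting the hypothesis (the case $l = i$ is itself the hypothesis).

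The second statement uses the same decomposition. If $p_k$ with $k < i$ dominates $p_j$, then the $x$-coordinate of $p_k$ is at least that of $p_j$, which is at least the left edge of $R_j$ and thus strictly exceeds the left edge of $R^*$; this rules out $R_k$ lying entirely left of $R^*$, so $R_k$ lies entirely above $R^*$. For every $l \in [i,j]$ the $y$-coordinate of $p_k$ then exceeds the top edge of $R^*$, which is at least the $y$-coordinate of $p_l$. The structural lemma also gives that the $x$-coordinate of $p_l$ is strictly less than the left edge of $R_j$ when $l < j$, so in every case it is at most the $x$-coordinate of $p_j$ and hence of $p_k$. Thus $p_k$ dominates every $p_l$. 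The symmetric statements for a $p_k$ succeeding $p_j$ follow by the mirror argument: such an $R_k$ must lie entirely below or entirely to the right of $R^*$; the ``below'' case instantly violates $y$-dominance, and the ``right'' case proceeds by interchanging the roles of the two axes and of $i,j$. The main obstacle is establishing the structural lemma cleanly, since without the strict staircase layout of sinks the coordinate comparison between $p_l$ and $p_i$ (or $p_j$) breaks down; once it is in hand, the rest is a routine case analysis exploiting the extremal positions of $R_i$ and $R_j$ within $R^*$.
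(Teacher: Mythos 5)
Your proof is correct and reaches the same conclusions as the paper, but by a more explicit and self-contained route. The paper's own proof is a two-sentence argument: any $p_k$ that dominates a point $p_s$ with $i<s<j$ while dominating neither $p_i$ nor $p_j$ would have to lie in the interior of $R^*$, which contains only points of sink regions; all claims are said to follow from this contradiction. You instead make explicit the structural fact the paper leaves implicit --- that a maximal run of sinks forms a strictly descending staircase, so that $R^*$ inherits its top and left edges from $R_i$ and its bottom and right edges from $R_j$ --- and then run a clean case analysis on whether $R_k$ lies entirely above or entirely to the left of $R^*$ (resp.\ below or to the right, for the mirrored statements). This buys a fully spelled-out argument that treats both conditional statements and their mirrors uniformly, at the price of more machinery; the paper's version is shorter but asks the reader to verify the containment claim and the staircase layout themselves.

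One sub-step needs repair. In your structural lemma you justify the vertical arrow $R_a\to R_{a+1}$ by asserting that no region of $\RR$ can block visibility between $R_a$ and $R_{a+1}$ because no index lies strictly between $a$ and $a+1$. Visibility in the dependency graph is defined with respect to \emph{all} regions of $\RR$, not only those with intermediate indices, so a region $R_m$ with $m<a$ or $m>a+1$ could in principle intersect the connecting vertical segment; the stated justification is a non sequitur. The conclusion survives: take the first region hit by a vertical segment descending from $R_a$ inside the common $x$-range. That region is vertically visible from $R_a$, and in a truncated set it must succeed $R_a$ --- a preceding region cannot lie strictly below $R_a$ within $R_a$'s $x$-range, since its bottom-left vertex lies up-and-left of $R_a$'s along $\BR$ and hence at least as high as $R_a$'s bottom edge, whereas its top edge would have to be strictly lower. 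Either way $R_a$ acquires an outgoing vertical arrow, contradicting that it is a sink, so the slip is patchable rather than fatal; the symmetric horizontal case should be repaired the same way.
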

\begin{proof}
Any  $p_k$ that dominates any point $p_s$ with $s \in \langle i , j \rangle$, but not $p_i$ or $p_j$ itself must lie in the interior of $R^*$, but $R^*$ contains only points whose regions are sinks in $G(\RR)$. This contradiction implies all claims of the lemma.
\end{proof}
\noindent
This lemma implies that if both $p_i$ and $p_j$ are not dominated by other points in $P$ then all the points in $[p_i, p_j]$ appear on the \pareto of $P$ as a contiguous subsequence, and all regions $R_k \in [R_i, R_j]$ are not part of $\RP$.  Theorem~\ref{thm:lowerbounds} states we cannot ``afford'' to spend any disk retrievals on $(p_i, p_{i+1}, \ldots, p_j)$. Instead, we should add a pre-stored chain referencing $[p_i, p_j]$ to $\Xi^*$ in constant time. 
This is why for any maximal sequence of sinks $[R_i, R_j]$ in a truncated and culled set $\RR$, we define their \emph{compound region} $R^*$ and we replace $[R_i, R_j]$ in $\RR$ with $R^*$ (refer to Figure~\ref{fig:transformation} (right)).
Let $\RRc$ be the resulting set of regions.
The region $R^*$ is a sink in $G(\RRc)$ and a region $R$ has an outgoing arrow to $R^*$ in $G(\RRc)$ if and only if it had an outgoing arrow in $G(\RRt)$ to at least one region in $[R_i, R_j]$.
Since $R^*$ is just another rectangle disjoint from all other rectangles in $\RR^\mathit{comp}$, the definition of \emph{truncated} and \emph{culled} still applies to $\RR^\mathit{comp}$.
We say a set $\RRc$ is a \emph{canonical set} if it is truncated, culled, and if there are no two consecutive regions that are sinks in $G(\RRc)$.
In the remainder, we assume $\RR$ is a truncated set and $R^0 = \RRc$ is its respective canonical set as the reconstruction input.

\begin{figure}[tb]
    \centering
    \includegraphics{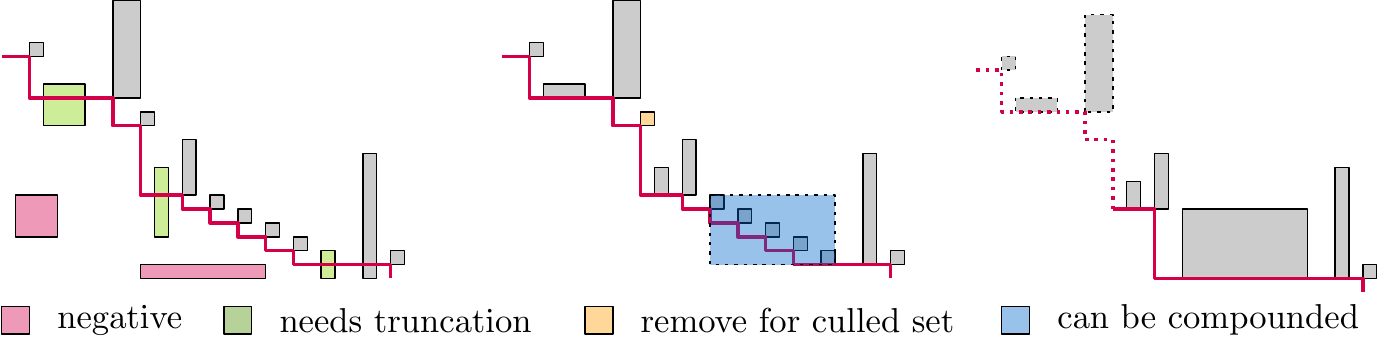}
    \caption{Left: $\RRo$ with $\mathbb{B}_{\RRo}$ in red. Middle: the set of regions after truncation. The yellow region is a source and a sink, it splits the problem into two. Right: The canonical set. }
    \label{fig:transformation}
    \vspace{0cm}
\end{figure}

\subparagraph*{Subproblems.}
Let $\RR$ be a \emph{truncated}  set. 
We say two indices $i < j$ form a \emph{\subproblem} with respect to a dependency graph $G(\RR)$ if $R_i$ and $R_j$ are sources in $G(\RR)$ and if there does not exist a region $R_k$ with $i < k < j$ that is also a source. With slight abuse of notation, we say that $[R_i, R_j]$ is a \subproblem of $G(\RR)$.
At later stages we will consider some altered dependency graph $G(\RR')$ and will refer to \subproblems $[R_l, R_m]$ of $G(\RR')$.

\subparagraph*{The algorithm sketch.}
The core of our algorithm is rather straightforward:
it is an iterative strategy, where at each iteration $t$ we have an (implicitly truncated) set $\RR^t$ and a queue of \subproblems of $G(\RR^t)$. Each iteration, we dequeue a \subproblem $[R_i, R_j]$ of $G(\RR^t)$, retrieve $p_i, p_j$ to replace $R_i$ and $R_j$ and (implicitly) re-truncate.
We maintain the following invariant: 
\begin{invariant}
\label{inv:pointer}
For each iteration, when we consider a \subproblem $[R_i, R_j]$ 
we have a pointer to the region $R$ which stores $p_{i-1}^\mathit{xMax}$ and the region $R'$ which stores $p_{j+1}^\mathit{yMax}$.
\end{invariant}
\noindent
Observe that for all \subproblems $[R_i, R_j]$ of $G(\RR^0 = \RRc)$, the point $p_{i-1}^\mathit{xMax} = p_{i-1}$ and $p_{j+1}^\mathit{yMax} = p_{j+1}$. 
We sketch Algorithm~\ref{algo:upperbound}. 
We want to prove that its runtime matches the value $\CP(\RR, P)$ of Theorem~\ref{thm:lowerbounds}. This would trivially be true, if  for each \subproblem $[R_i,R_j]$ of $G(\RR^t)$, $R_i, R_j \in \RP$.
Unfortunately that is not always the case, and thus we resort to a more involved argument to prove the following theorem. In the remainder of this section, we show that the algorithm's running time is $O(A(\RR, \RRc, P))$. 

\begin{restatable}{theorem}{runtime}
\label{thm:runtime}
Let $\RR$ be a truncated set and let $\RRc$ be its respective canonical set, $\Xi$ be built on $\RRc$ and Algorithm~\ref{algo:upperbound} run on $\RRc$ as input.  Let Algorithm~\ref{algo:upperbound} consider for each iteration $t$, a \subproblem $[R_{i(t)}, R_{j(t)}]$ with $i(t) < j(t) - 1$. Let $\RR^{A1}( \RRc, P) = \bigcup_t \{R_{i(t)}, R_{j(t)}\}$. Let $V_i(P)$ and $H_i(P)$ refer to subsets of $\RR$, not $\RRc$.
Then:
\[
A(\RR, \RRc, P) = \sum_{R_i \in \RR^{A1}(\RRc, P) } \left( \frac{1}{2}C + \log |V_i(P)| + \log |H_i(P)| \right) \le \CP(\RR, P)\,. 
\]
\end{restatable}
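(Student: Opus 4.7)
The plan is to establish $A(\RR, \RRc, P) \le \CP(\RR, P)$ by exhibiting an injection $\phi\colon \RR^{A1}(\RRc, P) \to \RP$ under which the visibility sets satisfy $|V_k(P)| \le |V_{\phi(k)}(P)|$ and $|H_k(P)| \le |H_{\phi(k)}(P)|$. Since each summand on the left contributes $\tfrac12 C$ whereas each summand on the right contributes $C$, the factor-two slack in the retrieval cost then yields the claim termwise.

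First I would fix the identification between $\RRc$ and $\RR$. Every region of $\RRc$ is either a rectangle inherited from $\RRt = \RR$ or a compound region born from a maximal streak of sinks $[R_a, R_b]$ in $G(\RR)$. For a compound region playing the role of a left endpoint $R_{i(t)}$ in some iteration, I map it to the constituent $R_a$; for the role of a right endpoint $R_{j(t)}$, I map it to $R_b$. Regions inherited from $\RR$ map to themselves. Once retrieved, a region is replaced by a point, and because points have no outgoing dependency arrows they cannot reappear as sources; hence each region occurs in $\RR^{A1}$ in at most one of the two endpoint roles, so $\phi$ is well defined and injective.

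Next I would show $\phi(\RR^{A1}) \subseteq \RP$ by verifying both conditions in the definition of $\RP$ for each endpoint $R_k$ considered in a non-adjacent \subproblem at iteration $t$. The Pareto-front intersection follows from $R_k$ being a source of $G(\RR^t)$: by Lemma~\ref{lemma:independence} together with the re-truncation invariant that any region dominated by a previously retrieved point is discarded, no point of $P$ can dominate $R_k$, so $p_k$ lies on the Pareto front. To verify one of the three $\RP$ conditions I split into cases: a flagged origin gives (1); a non-sink of $G(\RR)$ gives (3); a compound region, flanked in $\RRc$ by non-sinks (by the canonical property that two consecutive sinks cannot occur), has by Lemma~\ref{lem:contiguous} a Pareto-front edge incident to a neighbor's point, which gives~(2). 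The transfer of visibility sets is immediate for regions inherited from $\RR$, and for compound regions follows from the fact that contracting a streak of sinks into one region does not enlarge the set of visible regions outside the streak. Combining these pieces yields
\begin{align*}
A(\RR, \RRc, P)
&= \sum_{R_k \in \RR^{A1}} \Bigl(\tfrac12 C + \log|V_k(P)| + \log|H_k(P)|\Bigr) \\
&\le \sum_{R_i \in \phi(\RR^{A1})} \Bigl(\tfrac12 C + \log|V_i(P)| + \log|H_i(P)|\Bigr) \\
&\le \sum_{R_i \in \RP} \Bigl(C + \log|V_i(P)| + \log|H_i(P)|\Bigr) = \CP(\RR, P).
\end{align*}

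The principal obstacle is proving $\phi(\RR^{A1}) \subseteq \RP$, and in particular the compound-region case: one must verify that canonicalization preserves both Pareto-front intersection and a witness for at least one of the three $\RP$-conditions, while the mapping $\phi$ distributes the two endpoint roles of a compound region to two distinct constituents without double-charging. A secondary difficulty is showing that sources of the intermediate graphs $G(\RR^t)$ really are intersected by the Pareto front, which requires carefully combining the source property in $G(\RR^t)$ with the invariants (notably Invariant~\ref{inv:pointer}) maintained during re-truncation.
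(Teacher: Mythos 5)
Your overall strategy --- an injection $\phi\colon \RR^{A1}(\RRc, P)\to\RP$ with termwise domination of the summands --- is genuinely different from the paper's charging argument, but it has a gap that cannot be repaired without abandoning injectivity. The paper's key structural fact (Lemma~\ref{lemma:eitheror}) is that for a \subproblem $[R_i,R_j]$ with $i<j-1$ only \emph{one} of the two endpoints is guaranteed to lie in $\RP$: the other endpoint can be an unflagged region that is a sink of the original $G(\RR)$ and is not crossed by any Pareto-front edge with a foreign endpoint, and such a region satisfies none of the three membership conditions defining $\RP$. Your case analysis for $\phi(\RR^{A1})\subseteq\RP$ (flagged, non-sink, or compound) silently omits exactly this case; since your $\phi$ sends a non-compound region to itself, its image then escapes $\RP$ and the final inequality $\sum_{R_i\in\phi(\RR^{A1})}(\cdots)\le\sum_{R_i\in\RP}(\cdots)$ breaks. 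The paper instead exploits that for such an endpoint $|V_i(P)|=|H_i(P)|=1$, so its logarithmic terms vanish, and charges its $\tfrac12 C$ to the \emph{other} endpoint of the same \subproblem --- a $2$-to-$1$ charging which is precisely why the left-hand summands carry $\tfrac12 C$ while those of $\CP(\RR,P)$ carry $C$. Your chain of inequalities cannot absorb a $2$-to-$1$ map: the middle sum over $\phi(\RR^{A1})$ would lose terms while each surviving term still carries only $\tfrac12 C$, so the first inequality in your display would fail.

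Two smaller problems. First, ``no point of $P$ can dominate $R_k$, so $p_k$ lies on the Pareto front'' overstates what is true: Lemma~\ref{lemma:consideration} gives only that $R_k$ is \emph{intersected} by the Pareto front of $P$; the point $p_k$ itself may well be dominated. Second, a retrieved region does reappear as an endpoint of later \subproblems (this is the entire subject of Lemma~\ref{lemma:subsequent}, concerning $[R_i=\{p_i\},R_k]$), so your well-definedness argument for $\phi$ (``points cannot reappear as sources'') is false. It is harmless only because $\RR^{A1}(\RRc, P)$ is a set, but you still need the content of Lemma~\ref{lemma:subsequent} --- that the \emph{new} endpoint $R_k$ of such a follow-up \subproblem does lie in $\RP$ --- and your proposal does not supply it.
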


\begin{algorithm}[H]
\label{algo:upperbound}
\SetAlgoLined
\KwResult{The pointer structure \textbf{$\Xi^*$}.  \hspace{5,6cm} (Runtime)}
 $Q \gets$ \subproblems ($G(\RR^0)$) \hspace{6,4cm} (Preprocessing) \\
 \While{$Q \neq \emptyset$}{
 $[R_i, R_j] \gets $ Q.DeQueue() \hspace{6,9cm} ($O(1)$) \\
  $p_i, p_j \gets$ Retrieve($R_i$, $R_j$)   \hspace{6,3cm} ($2C + O(1)$) \\
  $p_i^\mathit{xMax}, p_j^\mathit{yMax} \gets$ Compare(($p_i$,\, $p_{i-1}^\mathit{xMax}$), ($p_j$, $p_{j+1}^\mathit{yMax}$))  \hspace{2,41cm} ($2C +O(1)$) \\
  \uIf{$p_i$ not dominated by $_{i}^\mathit{xMax},\, p_{j}^\mathit{yMax}$}{
  \textbf{$\Xi^*$}.Append($p_{i}$ after $p_{i-1}^\mathit{xMax}$) \hspace{5,9 cm} ($O(1)$) \\
  }
  \uIf{$p_j$ not dominated by $p_{i}^\mathit{xMax},\, p_{j}^\mathit{yMax}$}{
  \textbf{$\Xi^*$}.Append($p_{j+1}^\mathit{yMax}$ after $p_{j}$) \hspace{5,9 cm} ($O(1)$) \\
  }
  $f_i(P) \gets$ gallopingSearch($p_{i}^\mathit{xMax},\, V_i$) \hspace{4,5 cm} ($O(\log |V_i(P)|)$) \\
  $g_j(P) \gets$ gallopingSearch($p_{j}^\mathit{yMax},\, H_j$) \hspace{4.3 cm} ($O(\log |H_j(P)|)$) \\
  $R^{t+1} \gets$ ImplicitTruncate($R^t - R_i-  R_j + p_i + p_j$) \hspace{3 cm} ($O(1)$) \\
  DetermineSubproblems($\RR^{t+1}$,\, $f_i(P)$,\, $g_j(P)$) \hspace{3,85 cm} ($O(1)$) \\
  \ForEach{\subproblem $[R_c, R_d]$ of $G(\RR^{t+1} \cap [R_i = p_i,\, R_j = p_j])$} {
  Q.Queue($[R_c, R_d]$)\hspace{5 cm} ($O(1)$, charged to $[R_c, R_d]$)
  }
 }
 \caption{Algorithm sketch, assuming $\RR^0$ is canonical.}
\end{algorithm}

\subparagraph{Proving Theorem~\ref{thm:runtime}.}

This theorem describes an intuitive ``runtime allowance'' that Algorithm~\ref{algo:upperbound} has. 
We first prove 3 Lemmas about \subproblems encountered by Algoritm~\ref{algo:upperbound}.

\begin{restatable}{lem}{consideration}
\label{lemma:consideration}
Let $\RR$ be a canonical set and $R_i \in \RR$.
Algorithm~\ref{algo:upperbound} encounters a \subproblem $[R_i, \cdot]$ or $[\cdot, R_i]$ if and only if $R_i$ is intersected by the \pareto of $P$.
\end{restatable}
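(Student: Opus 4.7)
The plan is to prove the biconditional by induction on the algorithm's iteration count $t$, maintaining the invariant that at the start of iteration $t$, every source of the current dependency graph $G(\RR^t)$ is intersected by the Pareto of $P$.

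For the $(\Rightarrow)$ direction this invariant suffices, since every region the algorithm encounters is the endpoint of a dequeued subproblem and hence a source of $G(\RR^t)$ at that moment. In the base case I would take a source $R_k \in \RRc$ and assume for contradiction that the Pareto does not intersect $R_k$. Then $R_k$ lies strictly below the Pareto staircase, so some Pareto point $p_l$ has coordinates dominating the top-right vertex of $R_k$. I would perform a three-case analysis on the position of $R_l$ relative to $R_k$:
\begin{itemize}
    \item If the $x$-ranges overlap, rectangle disjointness forces $R_l$ strictly above $R_k$. Among all rectangles above $R_k$ with $x$-overlap, let $R_q$ be one with minimum bottom edge. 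By the minimality of $R_q$, no region blocks vertical visibility between $R_q$ and $R_k$, so $R_q \to R_k$ is a vertical arrow in $G(\RRc)$, contradicting source-ness of $R_k$.
    \item If the $y$-ranges overlap, the symmetric argument yields a horizontal arrow to $R_k$.
    \item If neither overlaps, then $R_l$'s bottom-left vertex dominates $R_k$'s top-right, so $R_k$ is negative by Lemma~\ref{lemma:classification}, contradicting $R_k \in \RRc$.
\end{itemize}
For the inductive step, the same case analysis applies to a new source $R_k \in G(\RR^{t+1})$; the third case is automatically excluded because any region whose top-right is dominated by another rectangle's bottom-left is removed by implicit truncation, and the first two cases still produce the forbidden blocker arrows from surviving rectangles.

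For the $(\Leftarrow)$ direction, suppose the Pareto intersects $R_i$. Then no point of $P$ can dominate the top-right vertex of $R_i$ (otherwise the entire staircase would lie above $R_i$, contradicting the intersection), so $R_i$ is never removed by implicit truncation and persists in every $\RR^t$. Since $G(\RR^t)$ is a DAG (any cycle mixing vertical and horizontal arrows would force simultaneous $y$-range overlap and disjointness among the participating rectangles), sources always exist while $\RR^t$ is non-empty, and the algorithm makes progress at every iteration. I would argue by induction that each rectangle $R_k$ with an arrow to $R_i$ in $G(\RR^t)$ is either itself encountered as a subproblem endpoint at an earlier iteration (after which $R_k$ becomes a point with no outgoing arrows) or is removed by implicit truncation when some revealed point dominates it. Either way, the arrow to $R_i$ eventually disappears, and after finitely many iterations $R_i$ becomes a source of some $G(\RR^t)$, is placed in a subproblem by \emph{DetermineSubproblems}, and is therefore encountered.

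The main obstacle I anticipate is the termination-and-progress argument in the $(\Leftarrow)$ direction: one must rule out the possibility that the algorithm terminates with $R_i$ still blocked by some rectangle that is itself neither a source nor truncated away. Resolving this requires combining the DAG structure of $G(\RR^t)$ with the invariant from the $(\Rightarrow)$ direction — that every processed source is Pareto-intersected — to conclude that the algorithm keeps processing Pareto-intersected sources, each of which unblocks additional regions, until $R_i$ itself is exposed as a source.
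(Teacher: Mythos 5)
Your proof is correct and follows essentially the same route as the paper's: both directions rest on the equivalence ``$R_i$ is missed by the \pareto iff some point dominates its top-right vertex,'' with the forward direction showing a dominated region always retains an incoming arrow (so it is never a source before being truncated away) and the backward direction showing an undominated region is never removed and must eventually surface as a source. Your three-case geometric argument merely fills in the step the paper asserts as ``$R_j$ prevents $R_i$ from being a source,'' and the progress/termination obstacle you flag in the $(\Leftarrow)$ direction is left equally implicit in the paper's own proof.
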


\begin{proof}
The region $R_i$ is not intersected by the \pareto of $P$ if and only if $R_i$ is dominated by a point $p_j \in P$.  Let $p_j$ appear on the \pareto of $P$ (via transitivity of domination, we can always obtain such a $p_j$). 
The iterative procedure must consider $p_j$ before $p_i$ since $R_j$ prevents $R_i$ from being a source in the dependency graph. But when $R_j$ is considered, $R_i$ is truncated. 
The graph must always have at least one source. Thus, since $R_i$ will never be removed after truncation, it must eventually become a source.
\end{proof}

\begin{lemma}
\label{lemma:eitheror}
Let $\RR^0$ be a canonical set. Algorithm~\ref{algo:upperbound} encounters only \subproblems $[R_i, R_j]$ where  either: $j = i+1$ or $R_i \in \RP$ or $R_j \in \RP$, and $R_i \not \in \RP$ if and only if $|V_i(P)| = |H_i(P)| = 1$ (the same holds for $R_j$).
\end{lemma}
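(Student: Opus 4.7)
The plan is to prove the two parts of the lemma separately: (i) every encountered \subproblem $[R_i, R_j]$ with $j > i+1$ has $R_i \in \RP$ or $R_j \in \RP$, and (ii) the iff $R_i \not\in \RP \Leftrightarrow |V_i(P)| = |H_i(P)| = 1$ (symmetrically for $R_j$). By Lemma~\ref{lemma:consideration} every $R_i$ appearing in a \subproblem processed by the algorithm is intersected by the \pareto of $P$, so its $\RP$-membership is decided entirely by which (if any) of the three auxiliary conditions in the definition of $\RP$ hold.

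For part (i), I would work inside $G(\RR^t)$ at the iteration $t$ at which $[R_i,R_j]$ is dequeued. Since $j>i+1$ there is an intermediate region $R_k$, $i<k<j$, present in $\RR^t$ which is not a source of $G(\RR^t)$. By Corollary~\ref{cor:subproblem} applied to $\RR^t$, every directed path inside $[R_i, R_j]$ stays inside $[R_i, R_j]$; hence $R_k$ is reachable from one of the two sources $R_i$ or $R_j$, so one of them has an outgoing arrow in $G(\RR^t)$. I then observe that any outgoing arrow of $R_i$ in $G(\RR^t)$ implies $R_i$ is not a sink of $G(\RR)$: writing the target as $R_l$, either the same visibility holds in $\RR$ (so the arrow persists), or some blocking region $R_b \in \RR\setminus\RR^t$ lies between them, in which case $R_i$ has an arrow to $R_b$ in $G(\RR)$. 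In either case condition~(3) in the definition of $\RP$ is met, which together with Lemma~\ref{lemma:consideration} yields $R_i\in\RP$ (or $R_j\in\RP$).

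For part (ii), the forward direction is immediate: if $R_i$ is encountered but $R_i \not\in \RP$, then in particular condition~(3) is violated, i.e.\ $R_i$ is a sink of $G(\RR)$, so $V_i = H_i = \{R_i\}$ and hence $|V_i(P)| = |H_i(P)| = 1$. For the reverse direction I would argue the contrapositive: $R_i \in \RP$ implies $|V_i(P)| \geq 2$ or $|H_i(P)| \geq 2$, and handle the three $\RP$-conditions in turn. Case~(3): if $R_i$ is not a sink of $G(\RR)$, WLOG $|V_i|\geq 2$, pick any $R_l \in V_i\setminus\{R_i\}$, and exhibit $p_i^{\mathit{xMax}}$ as a witness dominating $R_l$ via the vertical-slab geometry of $V_i$ and the fact that $R_i$ is intersected by the \pareto. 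Case~(2): the other endpoint $p_j$ of the \pareto edge crossing $R_i$ must lie in a halfslab of $R_i$ (by Lemma~\ref{lemma:classification}); the region $R_j$ is then in $V_i$ or $H_i$ and is dominated by $p_j$ itself, producing a second element of $V_i(P)$ or $H_i(P)$. Case~(1): a flagged $R_i$ had its original bottom-left vertex strictly below $\mathbb{B}_{\RRo}$, and the region of $\RRo$ realising that descent (whose existence is guaranteed by Lemma~\ref{lemma:intersection}) contributes a second element to $V_i(P)$ or $H_i(P)$.

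The main obstacle will be the reverse direction of (ii): in each sub-case one must exhibit a concrete dominating point in the correct halfslab, which requires careful bookkeeping tying together the \pareto chain, the visibility sets $V_i$ and $H_i$, and the halfslab characterisation of Lemma~\ref{lemma:classification}. I would organise this case analysis by first fixing the iteration $t$ at which $R_i$ is dequeued and invoking Invariant~\ref{inv:pointer} to pin down $p_{i-1}^{\mathit{xMax}}$ (and $p_{j+1}^{\mathit{yMax}}$) as the canonical candidate dominators before splitting on which of the three conditions of $\RP$ actually holds for $R_i$.
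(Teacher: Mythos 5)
Your part~(i) and the forward implication of part~(ii) are essentially the paper's own argument. The paper also reduces everything to ``at least one of $R_i, R_j$ is not a sink of $G(\RR)$'': it argues via Lemma~\ref{lemma:independence} that if both endpoints of a \subproblem with $i<j-1$ were sinks, some intermediate region would have to be a source, contradicting the \subproblem definition; your route (an intermediate non-source of $G(\RR^t)$ must be reachable from one of the two endpoint sources, and an outgoing arrow in $G(\RR^t)$ yields one in $G(\RR)$) establishes the same fact. Likewise, both you and the paper obtain $R_i\notin\RP\Rightarrow V_i(P)=H_i(P)=\{R_i\}$ directly from Lemma~\ref{lemma:consideration} plus condition~(3) in the definition of $\RP$.

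The gap is in the reverse implication of the ``if and only if,'' which you rightly identify as the hard part but then sketch with a step that fails. In your Case~(3) you want $|V_i|\ge 2$ to force $|V_i(P)|\ge 2$ and propose $p_i^{\mathit{xMax}}$ as a point dominating some $R_l\in V_i\setminus\{R_i\}$. But $V_i(P)$ contains, besides $R_i$, only those regions of $V_i$ that are \emph{actually dominated} by some $p_k$ with $k\le i$, and nothing guarantees any such domination: if every $p_k$ with $k\le i$ sits at the bottom-left vertex of its (truncated, hence non-negative) region, no region of $V_i$ is dominated, yet $R_i$ keeps its outgoing arrow and so lies in $\RP$ while $|V_i(P)|=|H_i(P)|=1$. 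Your Case~(2) has a similar defect: ``$R_j$ is dominated by $p_j$ itself'' cannot occur, since dominating a region means dominating its top-right vertex, which a point inside that region never does. So the converse does not follow from your sketch -- and notably the paper's own proof establishes only the forward implication, which is the only direction invoked later (in the charging argument of Theorem~\ref{thm:runtime}). The substantive content of the lemma, and the part you should make airtight, is the ``either $R_i\in\RP$ or $R_j\in\RP$'' claim together with the forward implication; your treatment of those is fine.
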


\begin{proof}
If $\RR^0$ is a canonical set, then there cannot by any \subproblem $[R_i, R_j]$ of $G(\RR^0)$ where $R_i$ and $R_j$ are both sinks in $G(\RR^0)$.
As a consequence, for each $[R_i, R_j]$ either $R_i \in \RP$ or $R_j \in RP$ and $R_i \not \in \RP$ implies $V_i(P) = H_i(P) = \{ R_i \}$.

In later iterations, we cannot immediately guarantee that $\RR^t$ is canonical, and the allowance for spending computation time is hence lost. 
Via Lemma~\ref{lemma:consideration} we know that $R_i$ and $R_j$ are both intersected by the \pareto of $P$. Thus, the regions $R_i, R_j \not \in \RP$ implies that $R_i$ and $R_j$ are both sinks in the original graph $G(\RR)$ (as $\RP$ is defined on the original truncated set). Thus $R_i \not \in \RP$ implies $V_i(P) = H_i(P) = \{ R_i \}$. 

What remains is to show that for each \subproblem either $R_i$ or $R_j$ \emph{does} lie in $\RP$.
Let $i < j - 1$. Then if $R_i$ and $R_j$ are both sinks, then by Lemma~\ref{lemma:independence} the region $R_{i+1}$ or $R_{j-1}$ must also be a source which contradicts the assumption that $[R_i, R_j]$ is a \subproblem.
\end{proof}

\begin{restatable}{lem}{subsequent}
\label{lemma:subsequent}
Let $\RR$ be a canonical set. Algorithm~\ref{algo:upperbound} encounters only \subproblems $[R_i, R_j]$ followed by $[R_i = \{ p_i \}, R_k]$ if $R_k \in \RP$. 
\end{restatable}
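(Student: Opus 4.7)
My plan is a proof by contradiction: suppose the algorithm processes subproblem $[R_i, R_j]$ and afterwards encounters $[\{p_i\}, R_k]$, but $R_k \notin \RP$. First I would invoke Lemma~\ref{lemma:consideration} to get that $R_k$ is intersected by the \pareto of $P$. Combining this with $R_k \notin \RP$ and the three defining clauses of $\RP$, I conclude that $R_k$ is a sink in $G(\RR)$, is not flagged, and is not intersected by any \pareto edge whose $P$-endpoint is distinct from $p_k$. Using general position, $p_k$ must lie strictly in the interior of $R_k$; if $p_k$ were not itself on the \pareto, then the \pareto would cross $R_k$'s interior as part of an edge whose $P$-endpoint is necessarily different from $p_k$, contradicting clause~2. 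So $p_k$ lies on the \pareto, and the \pareto meets $R_k$ only at $p_k$ together with the two short segments of edges incident to $p_k$.

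The next step is to examine why $R_k$ transitions from non-source in $G(\RR^t)$ to source in $G(\RR^{t+1})$. Because $R_i$ and $R_j$ were consecutive sources of $G(\RR^t)$ and $i<k<j$, the region $R_k$ had at least one incoming arrow from some region $R_m$ in the subproblem's range. For that arrow to vanish in $G(\RR^{t+1})$, either $R_m \in \{R_i, R_j\}$ (so $R_m$ has become a point, which has no outgoing arrows) or $R_m$ is dominated by $p_i$ or $p_j$ and is removed during implicit truncation. I would then case-split on which of these options applies to the decisive incoming arrow(s) of $R_k$, using the fact that $R_k$ is a sink in $G(\RR)$ to tightly localize the possible positions of the candidate $R_m$ relative to $R_k$.

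The final, and hardest, step is to translate this combinatorial information into a geometric contradiction with $R_k \notin \RP$. In every case I expect $p_i$ (or $p_j$, or a newly-exposed Pareto point from an intermediate dominated region) to serve as $p_k$'s immediate predecessor or successor on the \pareto; since the shielding region $R_m$ is gone or is now a retrieved point, the \pareto edge joining that neighbor to $p_k$ must pass through the area previously occupied by $R_m$, and a careful visibility argument should show it enters $R_k$ with a $P$-endpoint different from $p_k$—this activates clause~2 and yields the desired contradiction. In the remaining subcase where domination by $p_i$ (or $p_j$) raises the guaranteed boundary through $R_k$'s bottom-left corner, $R_k$ would get flagged, contradicting clause~1. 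The main obstacle is organizing this case analysis cleanly: one must track how $R_k$'s sink-ness in $G(\RR)$ restricts the allowable positions of $R_m$, and simultaneously ensure that the induced \pareto edge genuinely crosses the interior of $R_k$ rather than skirting its boundary.
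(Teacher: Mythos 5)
Your proposal is not complete, and the route you chose leads away from where the contradiction actually lives. The paper's proof is two lines: after iteration $t$ the region $R_i$ has been replaced by the point $\{p_i\}$, and point regions have no outgoing arrows, i.e.\ $\{p_i\}$ is a sink; since a \subproblem consists of two \emph{consecutive} sources, if $R_k$ were also a sink then (by the argument of Lemma~\ref{lemma:eitheror}, via Lemma~\ref{lemma:independence}) some region strictly between them would be a source, contradicting that $[\{p_i\}, R_k]$ is a \subproblem. Hence $R_k$ has an outgoing arrow, and combined with $R_k$ being intersected by the \pareto (which you do establish correctly via Lemma~\ref{lemma:consideration}), condition (3) of the definition of $\RP$ gives $R_k \in \RP$ immediately. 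The decisive fact is thus about the \emph{outgoing} arrows of $R_k$ and the sink-ness of the new left endpoint $\{p_i\}$ — a fact you never use.

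Instead, you analyze why $R_k$ gained source status (its \emph{incoming} arrows) and then hope to trigger clause (2) of $\RP$ geometrically. That final step is explicitly left unfinished (``a careful visibility argument should show\ldots'', ``the main obstacle is organizing this case analysis cleanly''), and at least one configuration defeats it: if $p_i$ is the predecessor of $p_k$ on the \pareto but lies strictly to the left of the $x$-extent of $R_k$, then the only edge of the \pareto entering $R_k$ is the horizontal edge whose right vertex — and hence, by the paper's convention, whose $P$-endpoint — is $p_k$ itself. Clause (2) requires an intersecting edge with endpoint $p_j$, $j \neq k$, so no contradiction arises there, and your fallback to clause (1) (flagging) is also unavailable since flagging is fixed at preprocessing time. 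So the proposal as written has a genuine gap; replacing the entire second half with the sink/source argument above closes it.
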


\begin{proof}
By the argument of Lemma~\ref{lemma:eitheror}, $R_k$ is intersected by the \pareto of $P$. Moreover after the iteration $t$ where the algorithm considers  $[R_i, R_j]$, the region $R_i$ has no outgoing edges in each iteration $t'$ with $t < t'$. Hence if $[R_i = \{ p_k \}, R_k]$ is a \subproblem, the region $R_k$ has at least one outgoing arrow and thus $R_k \in \RP$.
\end{proof}
\noindent
These three Lemmas imply the following theorem that we later use for a charging scheme: when we relate algorithm runtime to $CP(\RR, P):$

\begin{proof}[Proof of Theorem~\ref{thm:runtime}]
Recall that $\CP(\RR, P)) = \sum_{R_k \in \RP}  C + \log |V_k(P)| + \log |H_k(P)|$.
Let $[R_i, R_j]$ be the first \subproblem considered that has $R_i$ as its left boundary. By Lemma~\ref{lemma:eitheror}, at least $R_i$ or $R_j$ is in $\RP$ hence we charge $\frac{1}{2}C$ time to either the term  $(C + \log |V_i(P)| + \log |H_i(P)|)$ or  $(C + \log |V_j(P)| + \log |H_j(P)|)$ in the sum of $CP(\RR, P)$.
Moreover, $R_i \not \in \RP$ implies $\log |V_i(P)| = \log |V_j(P)| = 0$ hence including these two terms, does not increase the sum's value.
For subsequent \subproblems $[R_i, R_k]$, Lemma~\ref{lemma:subsequent} guarantees that $R_k \in \RP$. Hence the term: $(\frac{1}{2}C + \log |V_k(P)| + \log |H_k(P)|)$ in the sum of $A(\RR, \RRc, P)$ can be charged to the term  $(C + \log |V_k(P)| + \log |H_k(P)|)$ in the sum of $CP(\RR, P)$.  
\end{proof}

\subparagraph{The \subproblem tree.}
Theorem~\ref{thm:runtime} shows that if we are able to execute our described algorithm in the specified running time, then we prove that $\CP(\RR, P)$ is tight and we have obtained an uncertainty-region optimal algorithm. However, in order to achieve this running time, in each iteration we must determine the new \subproblems efficiently.
This is why we define a \subproblem tree on the original dependency graph $G(\RR)$. 
The \subproblem tree, denoted by $T_\RR$, is a range tree on the interval $[1, n] \subset \mathbb{Z}$ (Figure~\ref{fig:dependencytree}). 
The root node of the \subproblem tree stores the interval $[1, n]$.
If $\RR$ is a canonical set, the \subproblems of $\RR$ partition $\RR$, and the root node has a child for each \subproblem $[R_i, R_j]$ where the child stores the interval $[i, j]$ and a pointer to $R_i$ and $R_j$.
We construct the subsequent children as follows: for each node $[i, j]$, we remove all outgoing arrows from $R_i$ and $R_j$ and we create a child node for each \subproblem of $G([R_i, R_j])$ without these arrows.
Note that each node has at least two children: as removing the outgoing arrows from $R_i$ and $R_j$ creates at least one additional source $R_k$ with $k \in \langle i , j \rangle$ and $R_i$ and $R_j$ remain sources in $G([R_i, R_j])$.

\begin{figure}[tb]
    \centering
    \includegraphics{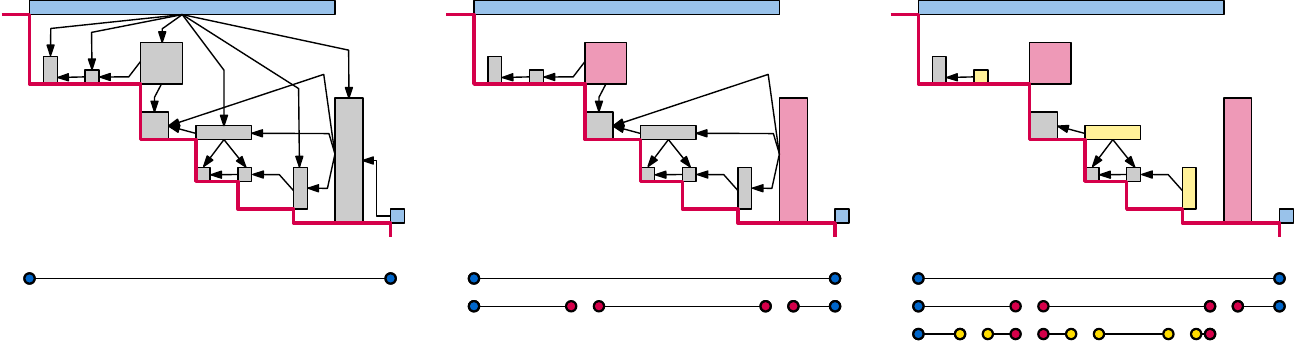}
    \caption{The construction of the \subproblem tree. Left we see a \subproblem of a canonical set with the vertical arrows drawn. In the middle we see the children of this \subproblem with the horizontal arrows drawn. On the right we continued the recursion one additional step. }
    \label{fig:dependencytree}
    \vspace{0cm}
\end{figure}

\subsection{Preprocessing phase}
\label{appx:preprocessing}

Here, we elaborate on the preprocessing procedure.
First, we transform a set $\RRo$ of $m$ axis-aligned pairwise disjoint rectangles into a truncated set $\RRt$ with $n$ elements in $O(m \log m)$ total time.
Next, we construct a canonical set $\RRc$  and the auxiliary datastructure $\Xi$ (which consists of the \subproblem tree $T_{\RRt}$ and some additional pointers) in $O(n \log n)$ time.
Specifically, we define $\Xi$ as follows:

\subparagraph{Defining $\Xi$.}
Given a canonical set $\RRc$, let $\Xi$ consist of $G(\RRc)$ and the tree $T_{\RRc}$ augmented with the following \emph{attributes} stored for every region $R_i \in \RR$ (Figure~\ref{fig:pointerstructure}): 
 \begin{enumerate}
     \item\label{attr1} A binary search tree on $V_i$ and $H_i$ from $G(\RRc)$.
     \item\label{attr2} A pointer to $\Vnext{i}$ and $\Hprev{i}$ in $\RRc$.
     \item\label{attr3} A pointer to the region $R_j$ with highest $j$, such that $R_i \in V_j$ (the \emph{back pointer}) and  a pointer  to the region $R_j$ with lowest index $j$, such that $R_i \in H_j$ (the \emph{forward pointer}).
     \item\label{attr4} A pointer to the highest node in $T_{\RRc}$ that stores an interval $[\cdot, i]$, and a pointer to the highest node in $T_{\RRc}$ that stores an interval $[i, \cdot]$.
     \item\label{attr5} If $R_i$ is a compound region, an array of all the regions compound in $R_i$.
 \end{enumerate}

\begin{figure}[b]
    \centering
    \includegraphics{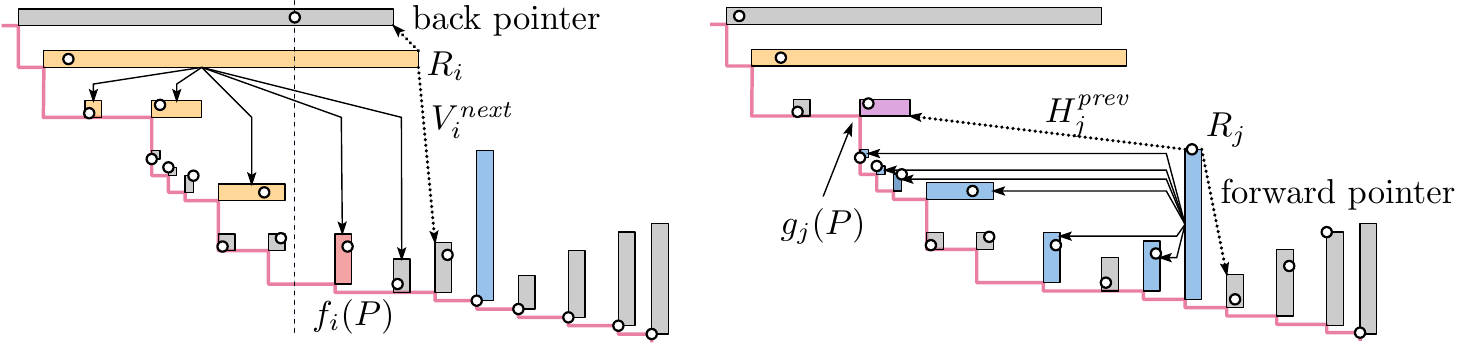}
    \caption{Two choices of $P$ for the same set $\RR$. The sets $R_i(P)$ and $R_j(P)$ are shown in orange and blue respectively. Left: we show $V_i^\mathit{next}$ and the backward pointer and $f_i(P)$. Right: we show $H_j^\mathit{prev}$ and the forward pointer and $g_j(P)$. }
    \label{fig:pointerstructure}
    \vspace{0cm}
\end{figure}

\subparagraph{Creating a truncated set.}
We consider the bottom left vertices of all regions in $\RRo$, construct $\mathbb{B}_{\RRo}$,  together with a range tree on the horizontal edges of $\mathbb{B}_{\RRo}$~\cite{de2008computational} in $O(m \log m)$ time.
For each region $R \in \RRo$ we detect whether $R$ is negative by performing a point location with its top right vertex on the interior of $\mathbb{B}_{\RRo}$; if it is negative then it is discarded. 
If a region $R \in \RRo$ is not negative then by Lemma~\ref{lemma:intersection} we know that
$R \cap \mathbb{B}_{\RRo}$ is a staircase of constant complexity which we compute in logarithmic time using binary search on $\mathbb{B}_{\RRo}$. 
We flag each non-negative $R \in \RRo$ whose interior intersects $\BR$, and store its region after truncation.
This results in a set $\RRt$ of $n$ pairwise disjoint axis-aligned rectangles, which we sort and re-index based on their intersection with $\mathbb{B}_{\RRt}$ in $O(m \log m)$ time and conclude:

\begin{restatable}{lem}{truncation}
\label{lem:truncation}
For any set $\RRo$ of $m$ axis-aligned, pairwise disjoint axis-aligned rectangles we can construct its truncated set $\RRt$ of $n$ rectangles in $O(m \log m)$ time.
\end{restatable}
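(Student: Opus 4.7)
The plan is to follow the construction already sketched informally in the paragraph preceding the lemma statement, and show that each of the three phases fits within the $O(m \log m)$ budget. Throughout, the crucial geometric facts I will lean on are Lemma~\ref{lemma:classification} (which gives a purely local test for negativity) and Lemma~\ref{lemma:intersection} (which says that the portion of a non-negative rectangle cut off by $\BR$ is simple).

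First I would compute the guaranteed boundary $\mathbb{B}_{\RRo}$. This is by definition the \pareto of the $m$ bottom-left vertices of the rectangles in $\RRo$, which a standard sort-and-sweep procedure produces in $O(m \log m)$ time. Alongside $\mathbb{B}_{\RRo}$, I would build a balanced binary search tree on its horizontal edges keyed by $x$-coordinate, so that given any query point one can in $O(\log m)$ time identify the unique horizontal edge of $\mathbb{B}_{\RRo}$ lying directly above or below it, and decide whether the query lies above or below the staircase.

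Next, for each of the $m$ regions $R \in \RRo$ I would spend $O(\log m)$ time classifying it. Locating the top-right vertex of $R$ against $\mathbb{B}_{\RRo}$ tells me whether $R$ is negative: by Lemma~\ref{lemma:classification} the top-right vertex is dominated by some bottom-left vertex in $\RRo$ iff it lies strictly below $\mathbb{B}_{\RRo}$, in which case $R$ is discarded. Otherwise I locate the bottom-left vertex of $R$. If it lies on or above $\mathbb{B}_{\RRo}$ I keep $R$ unflagged; if it lies strictly below, I flag $R$ and replace it by the portion above $\mathbb{B}_{\RRo}$. By Lemma~\ref{lemma:intersection} and pairwise disjointness, $R$ contains no top vertex of $\mathbb{B}_{\RRo}$ (each such vertex is the bottom-left corner of another region), so the staircase inside $R$ consists of a single reflex corner and the portion above $\mathbb{B}_{\RRo}$ is a single axis-aligned rectangle; I can extract it in $O(\log m)$ time by retrieving the one horizontal edge of $\mathbb{B}_{\RRo}$ that crosses $R$ from the binary search tree.

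Finally, I would sort the surviving $n \le m$ rectangles according to the top-left to bottom-right order in which they touch $\mathbb{B}_{\RRt} = \mathbb{B}_{\RRo}$, which amounts to a sort by the $x$-coordinate of their contact with the staircase, taking $O(n \log n) \subseteq O(m \log m)$ time, and re-index them accordingly. Summing the three phases gives the desired $O(m \log m)$ bound. The main checkpoint in the argument is the rectangularity claim in the truncation step; everything else is classical sorting and point location on a monotone staircase, so as long as Lemma~\ref{lemma:intersection} is invoked correctly (together with pairwise disjointness to rule out interior top vertices of $\mathbb{B}_{\RRo}$) the rest of the proof is routine bookkeeping.
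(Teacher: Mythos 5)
Your proposal is correct and follows essentially the same route as the paper: build $\mathbb{B}_{\RRo}$ as the \pareto of the bottom-left vertices with a search structure on its horizontal edges, classify each region by point-locating its top-right (negativity) and bottom-left (flagging) vertices, extract the truncated rectangle via Lemma~\ref{lemma:intersection}, and sort/re-index along the staircase, all in $O(m \log m)$. The only nitpick is that sorting purely by the $x$-coordinate of the contact can be ambiguous when two truncated regions touch the same vertical edge of $\mathbb{B}_{\RRo}$; the intended order is position along the staircase (top-left to bottom-right), which resolves such ties.
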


\noindent
 Recall that for any truncated set $\RRt$ we denote by $H_i$ the set of regions $R_j$ in $\RR$ with $j < i$ which are horizontally visible from $R_i$ and by $V_i$ the set of regions $R_j$ with $j > i$ which are vertically visible from $R_i$. 
 In the remainder of the preprocessing phase, we spend $O(n \log n)$ time to transform $\RRt$ into a canonical set $\RRc$, construct $G(\RRt)$ and $G(\RRc)$ and construct the datastructure $\Xi$.

 \begin{observation}
 \label{obs:decomp}
For any truncated set $\RRt$, a region $R_j \in \RRt$ is vertically visible from a region $R_i \in \RRt$ if and only if there exists a face or edge in the vertical decomposition of $\RR$ which is vertically adjacent to both $R_i$ and $R_j$.
\end{observation}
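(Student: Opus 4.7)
The plan is to prove both directions by reasoning about where a candidate vertical visibility segment sits relative to the cells of the vertical decomposition. Recall that for a set of pairwise disjoint axis-aligned rectangles, the vertical decomposition is obtained by shooting a vertical ray up and down from each vertex of each rectangle until it hits another rectangle (or infinity); every resulting face is an axis-aligned rectangle whose top edge lies on the bottom of some rectangle $R_j \in \RRt$ (or at $y = +\infty$) and whose bottom edge lies on the top of some $R_i \in \RRt$ (or at $y = -\infty$). Vertical edges of the decomposition are either extensions of rays from a vertex of some rectangle or the vertical sides of the rectangles themselves.

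For the backward direction, suppose a face $F$ of the vertical decomposition has $R_i$ adjacent to its bottom edge and $R_j$ adjacent to its top edge. Then any interior vertical line through $F$ meets the bottom of $F$ on $R_i$, the top of $F$ on $R_j$, and crosses no other rectangle (since $F$ lies in the free space), witnessing vertical visibility. If instead a vertical \emph{edge} of the decomposition has its lower endpoint on $R_i$ and its upper endpoint on $R_j$, the edge itself is such a witnessing segment.

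For the forward direction, assume $R_j$ is vertically visible from $R_i$ and let $s$ be a witnessing vertical segment whose closed endpoints lie on $R_i$ and $R_j$; since $\RRt$ is pairwise disjoint and $s$ avoids all other regions, the relative interior of $s$ lies in the free space. If $s$ is not contained in any vertical edge of the decomposition, then its relative interior must lie in the interior of a single face $F$ (a vertical line cannot re-enter the free space after leaving a face without crossing some rectangle boundary or a vertical extension edge). The top and bottom of $F$ are then contributed by $R_j$ and $R_i$ respectively, so $F$ is adjacent to both. Otherwise $s$ coincides with a vertical edge $e$ of the decomposition; since the endpoints of $s$ lie on $R_i$ and $R_j$, the endpoints of $e$ (which are determined by where the shooting ray stops) must be on $R_i$ and $R_j$ as well, giving the edge case.

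The only mild obstacle is the degenerate situation in which the witnessing segment lands exactly on a vertical extension shot from some rectangle corner; this is precisely why the statement allows a witnessing edge as well as a witnessing face, and it is handled by the second case above.
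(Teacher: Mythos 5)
Your argument is correct: the paper states this as an unproved Observation and relies on exactly the routine verification you give, namely that a witnessing vertical segment between $R_i$ and $R_j$ must live in (the closure of) a single trapezoidal cell whose floor and ceiling are contributed by $R_i$ and $R_j$, and conversely any such cell or extension edge yields a free vertical segment. The only loose point is the sub-case where the witnessing segment shares the $x$-coordinate of an extension edge without coinciding with it; this is immediately absorbed either by the general-position assumption or by noting that the segment then lies on the boundary of a face whose floor and ceiling are still $R_i$ and $R_j$, so nothing breaks.
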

\noindent
Using Observation~\ref{obs:decomp} we obtain the following through standard Computational Geometry:

\begin{restatable}{lem}{canonicalset}
\label{lem:canonicalset}
For any truncated set $\RRt$ of $n$ axis-aligned, pairwise disjoint rectangles we can construct its canonical set $\RRc$ and $\Xi$ in $O(n \log n)$ time.
\end{restatable}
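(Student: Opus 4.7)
The plan is to proceed in four stages, each fitting into $O(n \log n)$ time so that the total budget holds. First, I would construct the vertical and horizontal decompositions of $\RRt$ by a standard plane sweep. By Observation~\ref{obs:decomp}, an arrow of $G(\RRt)$ corresponds to a pair of rectangles sharing an edge in one of these decompositions, so $G(\RRt)$ can be read off in $O(n)$ additional time. Moreover each arrow corresponds to a unique decomposition edge, hence $|G(\RRt)| = O(n)$ and in particular $\sum_i (|V_i| + |H_i|) = O(n)$, a bound I will use repeatedly below.

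Second, I would cull $\RRt$ and form $\RRc$. Source-and-sink regions of $G(\RRt)$ are identified in $O(n)$ via a single pass over the adjacency lists; each one splits the remaining problem into independent sub-instances that are handled separately. Within each piece I sweep in index order to find maximal runs of sinks and replace every run of length at least two by its compound bounding rectangle. Obtaining $G(\RRc)$ from $G(\RRt)$ then amounts to redirecting, in a single pass, every arrow whose endpoint was swallowed to the corresponding compound region, all in $O(n)$.

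Third, I would attach the attributes of $\Xi$. Attributes~(\ref{attr2}) and~(\ref{attr5}) are produced as byproducts of the decomposition and the compound-merge pass. For attribute~(\ref{attr1}), since the total size of the $V_i$ and $H_i$ lists is $O(n)$, I build a balanced BST keyed by index on each list in $O(n \log n)$ total time; inverting those same lists in one further pass yields the back and forward pointers of attribute~(\ref{attr3}). Attribute~(\ref{attr4}) is filled in while building $T_{\RRc}$ below.

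Finally, I would build the subproblem tree $T_{\RRc}$ top-down. I maintain an in-degree counter for every region of $\RRc$ together with a balanced BST, sorted by index, of the currently active sources. When processing a node $[i,j]$ I remove the outgoing arrows of $R_i$ and $R_j$ from $G(\RRc)$ and decrement the in-degrees of their targets; any target whose in-degree drops to zero is inserted into the BST as a new source, after which the children of $[i,j]$ are read off the BST in index order. The key observation is that every region serves as the endpoint of only a constant number of tree nodes, and every arrow of $G(\RRc)$ is removed exactly once across the entire recursion, so the total number of in-degree updates and BST operations is $O(n)$. The main obstacle is bounding these BST operations uniformly: each insertion, deletion, or neighbor lookup costs $O(\log n)$, yielding $O(n \log n)$ total time for $T_{\RRc}$ and completing the preprocessing within the claimed budget.
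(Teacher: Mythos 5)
Your proposal is correct and follows essentially the same route as the paper: build the vertical and horizontal decompositions to read off $G(\RRt)$ and the sets $V_i$, $H_i$, cull and compound the sinks to obtain $\RRc$, attach the attributes, and construct $T_{\RRc}$ by deleting the outgoing arrows of subproblem endpoints, charging each arrow exactly once. The only small inaccuracy is the claim that each region is an endpoint of only constantly many tree nodes (a region can bound a whole root-to-leaf chain of nested intervals), but your arrow-removal accounting already yields the $O(n)$ bound on updates, so the argument stands.
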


\begin{proof}
A vertical or horizontal decomposition has a number of faces and edges which is linear in the number of input vertices and can be constructed in $O(n \log n)$ time~\cite{de2008computational}. Given the vertical decomposition of $\RRt$, we can traverse it in linear time to store for each region $R_i$ the set $V_i$. Similarly we can identify and store $H_i$ for each $R_i$, and in $O(n \log n)$  total time we construct a binary search tree on each set $H_i$ and $V_i$ to obtain Attribute 1. For each set $V_i$, we identify $\Vnext{i}$ in logarithmic time by searching by searching for the left-most bottom-left endpoint right of the vertical slab through $R_i$ to obtain Attribute 2.

Through this procedure, we construct the dependency graph $G(\RRt)$ in $O(n \log n)$ time by iterating over all nodes in this graph.
In linear time, we can identify the connected components of $G(\RRt)$ and the regions which are both a source and sink in $G(\RRt)$. 
From Lemma~\ref{lemma:independence} we know that we can solve each connected component of $G(\RRt)$ independently and that the solutions must be concatenated through the regions that are both a source and sink. We store the connected components of $G(\RRt)$ as a doubly linked list and remove all sources and sinks from $\RRt$ to create a \emph{culled} set. 
To transform a culled set into a canonical set, we identify all sinks in the graph in linear time (by checking if $|V_i| = |H_i| = 1$) and we iterate over all regions in order of their index. Neighboring sinks get recursively grouped into a compound region and this procedure creates a canonical set in linear time. 
For each region compounding $k$ regions, we construct Attribute 5 in $O(k)$ time.
After having compound all regions, we do a linear-time scan to re-index all the (compound) regions so that all indices are consecutive and we obtain a \emph{canonical set} $\RRc$.
During this linear time scan, we identify for each $R_i$ the region of its \emph{back pointer} and \emph{forward pointer}  (Attribute 3) in logarithmic time, through searching through the vertical and horizontal decomposition. 
Moreover, whenever we compound a set $[R_i, R_{i+k}]$ into a region $R$, we make sure to remove $[R_i, R_{i+k}]$ from $G(\RRt)$ and replace it with $R$ (where all arrows pointing to a region in $[R_i, R_{i+k}]$ now point to $R$). In this way, we simultaneously create $G(\RRc)$.

Lastly, we want to obtain from a canonical set $\RRc$ its \subproblem tree $T_{\RRc}$ in $O(n)$ time using prior constructed $G(\RRc)$.
This can be done as follows: first we identify the \subproblems of $G(\RRc)$ in linear time. Then for each \subproblem $[R_i, R_j]$ of $G(\RR)$ we (temporarily) remove all outgoing arrows from $R_i$ and $R_j$ from the graph and for each node that has an arrow from $R_i$ or $R_j$ we check if it becomes a source node in constant time. This gives us the child nodes of the node that stores $[i,j]$ in the $T_{\RRc}$.
During this process, we store for each region $R_i$ a pointer to the largest interval $[i, \cdot]$ in the $T_{\RRc}$ (which must always exist) in constant additional time per region (Attribute 4).
Applying this procedure recursively takes time linear in the number of edges in $G(\RR)$, which itself is linear in the number of cells of the vertical and horizontal decomposition of $\RRt$, which concludes the lemma.
\end{proof}

\noindent
Lemma~\ref{lem:truncation} and~\ref{lem:canonicalset} and the observation that $n \le m$ immediately imply Theorem~\ref{thm:preprocessing}.

\begin{theorem}
\label{thm:preprocessing}
For any set $\RRo$ of $m$ axis-aligned, pairwise disjoint axis-aligned rectangles we can construct its trucated set $\RR$ and its canonical set $\RRc$ and $\Xi$ in $O(m \log m)$ time.
\end{theorem}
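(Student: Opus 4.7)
The plan is to observe that Theorem~\ref{thm:preprocessing} follows by chaining the two lemmas already proved in this section, so the proof itself is only a bookkeeping argument; the substantive content sits entirely inside Lemmas~\ref{lem:truncation} and~\ref{lem:canonicalset}.

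First, I would invoke Lemma~\ref{lem:truncation} on the input $\RRo$ of $m$ pairwise disjoint axis-aligned rectangles. This produces a truncated set $\RRt$ of some $n \le m$ rectangles in $O(m \log m)$ time; note that $n \le m$ holds because truncation only removes (negative) regions or replaces a rectangle by a single rectangular subregion, and no region is ever duplicated.

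Next, I would feed the resulting $\RRt$ into Lemma~\ref{lem:canonicalset}. That lemma produces a canonical set $\RRc$ together with the auxiliary structure $\Xi$ (the graph $G(\RRc)$, the subproblem tree $T_{\RRc}$, and the attributes $1$--$5$ for every region) in $O(n \log n)$ time.

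Finally, the total running time is $O(m \log m) + O(n \log n)$. Since $n \le m$, this is dominated by $O(m \log m)$, yielding the claimed bound. There is no genuine obstacle here: the only thing worth checking is that the two lemmas compose correctly (the output format of Lemma~\ref{lem:truncation} is exactly the input format required by Lemma~\ref{lem:canonicalset}) and that the dominated term $n \log n$ is absorbed into $m \log m$ using $n \le m$.
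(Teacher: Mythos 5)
Your proposal is correct and matches the paper exactly: the paper derives Theorem~\ref{thm:preprocessing} as an immediate consequence of Lemma~\ref{lem:truncation}, Lemma~\ref{lem:canonicalset}, and the observation that $n \le m$. The only content is the composition of the two lemmas and the absorption of $O(n \log n)$ into $O(m \log m)$, which you carry out correctly.
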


\subsection{Reconstruction phase}
\label{sub:reconstruction}
We want to run Algorithm~\ref{algo:upperbound} whilst maintaining Invariant~\ref{inv:pointer}, in $O(A(\RR, \RRc, P))$ time  (Theorem~\ref{thm:runtime}). First, we argue that the reporting (appending) step of the algorithm is correct:
\begin{restatable}{lem}{identification}
\label{lem:identification}
For any iteration $t$, for any \subproblem $[R_i, R_j]$ of $G(\RR^t)$, the point $p_i$ appears on the \pareto of $P$ if and only if $p_i$ is not dominated by $p_i^\mathit{xMax}$ or $p_j^\mathit{yMax}$.
\end{restatable}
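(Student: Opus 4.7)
The forward direction is immediate: if $p_i$ lies on the Pareto front of $P$, no point in $P$ dominates $p_i$, so in particular neither $p_i^\mathit{xMax}$ nor $p_j^\mathit{yMax}$ does.

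For the converse I would argue the contrapositive: assuming some $p_k\in P$ with $k\ne i$ dominates $p_i$, I would deduce that $p_i^\mathit{xMax}$ or $p_j^\mathit{yMax}$ also dominates $p_i$. The crux is to extract a geometric dichotomy from $R_i$ being a source in $G(\RR^t)$. Using the arrow-indexing conventions from Section~\ref{sub:reconstPrelims}, combined with pairwise disjointness of the (possibly degenerate) rectangles in $\RR^t$ and general position, source-ness forces the following: every region or retrieved point in $\RR^t$ whose $x$-range overlaps $R_i$'s either lies strictly below $R_i$ or contributes an incoming vertical arrow to $R_i$; symmetrically for $y$-overlap and horizontal arrows. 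A mirrored statement holds for $R_j$.

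I would then split on $k$. When $k\le i$, the indexing relations $R_k.\text{left}<R_i.\text{left}$ and $R_k.\text{bottom}>R_i.\text{bottom}$ together with the dominance inequalities $p_k.x\ge p_i.x$ and $p_k.y\ge p_i.y$ force $R_k$—whether still a rectangle or already retrieved to a point—into one of a small list of positions relative to $R_i$. Each of these positions either immediately produces an incoming arrow to $R_i$ (contradicting its source status), or lets $p_k$ dominate $R_i$'s top-right corner so that $R_i$ would have been removed by implicit truncation, or makes $R_k$ overlap $R_i$, or forces $p_i^\mathit{xMax}$ to be a witness that dominates $p_i$. The case $k\ge j$ is symmetric, using source-ness of $R_j$ to extract $p_j^\mathit{yMax}$ as the dominator. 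The case $i<k<j$ is resolved by Corollary~\ref{cor:subproblem}: independence of the subproblem in $G(\RR^t)$ lets one apply the same dichotomy around $R_i$ to the candidate $R_k$ in the interval.

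The main obstacle is the ``blocker'' step of the visibility analysis: a direct arrow from a region $R'$ above $R_i$ need not exist if another region interrupts the vertical segment. I would handle this by a staircase argument, observing that any rectangle blocker must itself be above $R_i$ with $x$-overlap (else it would intersect $R_i$ contradicting disjointness), and so the same argument recurses down to a lowest rectangle whose segment to $R_i$ is unobstructed—yielding the desired incoming arrow. General position is essential here: it rules out retrieved points from collectively blocking an open $x$-interval, since each such point only obstructs a single vertical line.
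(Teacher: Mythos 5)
Your overall route is the same as the paper's: prove the contrapositive, split on the index $k$ of a dominating point relative to $i$ and $j$, use the index ordering of a truncated set plus pairwise disjointness to pin down where the witness region can lie, and use source-ness to kill the middle case. Two remarks. First, you make the outer cases much heavier than they need to be: the paper never analyses the position of $R_k$ at all, and so never needs the visibility/blocker/staircase recursion you devote a paragraph to. It simply observes that $k<i$ and $p_k$ dominating $p_i$ give $p_i^\mathit{xMax}.x > p_i.x$, hence $p_i^\mathit{xMax}$ lies in a region $R'\ne R_i$ \emph{preceding} $R_i$ whose $x$-range meets $R_i$'s; disjointness plus the top-left-to-bottom-right ordering of bottom-left vertices then force $R'$'s bottom facet above $R_i$'s top facet, so $p_i^\mathit{xMax}$ dominates $p_i$ outright. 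Your plan instead reasons about $R_k$ and only then hopes to ``force $p_i^\mathit{xMax}$ to be a witness''; since $p_i^\mathit{xMax}$ need not equal $p_k$, you would in any case have to rerun the positional analysis on the region of $p_i^\mathit{xMax}$ itself --- at which point the detour through $R_k$ buys nothing. Second, and this is the one step that as written does not do what you claim: Corollary~\ref{cor:subproblem} does not resolve the case $i<k<j$. That corollary only says regions \emph{outside} $[R_i,R_j]$ have no directed path into the interval; it says nothing about a region strictly inside the interval dominating $p_i$. What is actually needed (and what the paper uses in one line) is just that $R_i$ and $R_j$ are sources in $G(\RR^t)$: a point $p_k$ with $i<k<j$ dominating $p_i$ would put $R_i$ in the horizontal halfslab of $R_k$ and hence give $R_i$ an incoming horizontal arrow (possibly via the nearest blocker), contradicting source-ness. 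Your ``mirrored dichotomy around $R_i$'' is the right instinct here, so this is a misattribution rather than a fatal hole, but you should replace the appeal to Corollary~\ref{cor:subproblem} with the source-ness argument, and note that for an already-retrieved $p_k$ (a point region, which has no outgoing arrows) the contradiction comes instead from the implicit re-truncation of $R_i$.
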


\begin{proof}
Let $p_i$ be not dominated by $p_i^\mathit{xMax}$ and $p_j^\mathit{yMax}$, but dominated by some point $p_k$. 
Then $k < i$ or $k > j$, because $R_i$ and $R_j$ are both sources in $G(\RR^t)$.
If $k < i$ then the $x$-coordinate of $p_k$ is greater than of $p_i$, and thus $p_i^\mathit{xMax} \neq p_i$.
Then, the point $p_i^\mathit{xMax}$ has greater $x$-coordinate than $p_i$, it lies in some region $R'\not=R_i$, and since $R'$ precedes $R_i$ and contains $p_i^\mathit{xMax}$, its bottom facet must lie above the top facet of $R_i$. Thus $p_i^\mathit{xMax}$ dominates $p_i$ which is a contradiction. 
 If $j < k$ then $p_j^\mathit{yMax} \neq p_j$ and the symmetrical argument applies.
\end{proof}
\noindent
The previous lemma implies that if Invariant~\ref{inv:pointer} is maintained, we can iteratively identify points that appear on the \pareto. 
Lemma~\ref{lemma:independence} guarantees that for each iteration $t$, for each \subproblem $[R_i, R_j]$, the \pareto of $\{p_i^\mathit{xMax} \} \cup [p_i, p_j] \cup \{p_j^\mathit{yMax} \}$ is a connected subchain of the \pareto of $P$.
Hence we can safely append $p_i$ after $p_i^\mathit{xMax}$.
What remains to show is that we can maintain Invariant~\ref{inv:pointer} and identify the \subproblems of $\RR^t$ efficiently.

\subparagraph{Identifying \subproblems.}
Consider an iteration $t$ in which we handle \subproblem $[R_i, R_j]$, and let $[R_k, R_l]$ be any \subproblem of  $G(\RR^{t+1})$ that is not already a \subproblem of $G(\RR^t)$.
It must be that $i \le k \le l \le j$ (Lemma~\ref{lemma:independence}).
We need to quickly identify these new \subproblems.

\begin{lemma}
For any truncated set $\RR^t$, for any \subproblem $[R_i, R_j]$ of $G(\RR^t)$,
either $f_i(P) \in V_i$ or $f_i(P) = \Vnext{i}$.
\end{lemma}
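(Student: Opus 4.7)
The plan is to split the statement into two sub-claims. Sub-claim~(i): $\Vnext{i}$ is not dominated by any $p_l$ with $l\le i$, so the index of $f_i(P)$ is at most that of $\Vnext{i}$. Sub-claim~(ii): if the index of $f_i(P)$ is strictly smaller than that of $\Vnext{i}$, then $f_i(P)\in V_i$. Together these yield the lemma. Both sub-claims hinge on a uniform \emph{no-dominator principle}: any region (or rectangle) whose top-right vertex lies strictly right of $R_i$'s vertical slab cannot be dominated by any $p_l$ with $l\le i$. One prong uses that $R_i$ is a source of $G(\RR^t)$ to exclude still-present rectangles $R_l$; the other uses the implicit-truncation invariant on $\RR^t$ to exclude already-retrieved points.

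For sub-claim~(i) I would assume some $p_l$ with $l\le i$ dominates $\Vnext{i}$'s top-right and derive a contradiction. The case $l=i$ is immediate because $p_i\in R_i$ has $x$-coordinate at most $R_i$'s right, whereas $\Vnext{i}$'s top-right $x$-coordinate is strictly larger by definition of $\Vnext{i}$. For $l<i$ I would split on whether $R_l$ is still a rectangle in $\RR^t$. If it is, then $R_l$'s top-right also dominates $\Vnext{i}$'s top-right, so $R_l$'s right lies strictly right of $R_i$'s right; since $l<i$ places $R_l$'s bottom-left strictly up-left of $R_i$'s on $\mathbb{B}_{\RR}$, its left lies strictly left of $R_i$'s left. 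Hence $R_l$'s $x$-range strictly contains $R_i$'s, and disjointness of the rectangles forces $R_l$ to lie entirely above $R_i$. A closest-blocker induction on rectangles sitting above $R_i$ with overlapping $x$-range then produces one that is directly vertically visible from $R_i$, which creates an incoming vertical arrow in $G(\RR^t)$ and contradicts that $R_i$ is a source. If instead $R_l$ has been replaced by $p_l$ in an earlier iteration, then $p_l$ dominates the entire rectangle $\Vnext{i}$; by Lemma~\ref{lemma:classification} applied to the degenerate rectangle $\{p_l\}$, the implicit truncation that followed $p_l$'s retrieval would have removed $\Vnext{i}$ from the working set, contradicting $\Vnext{i}\in\RR^t$.

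For sub-claim~(ii) I would let $R_m=f_i(P)$ have index in $(i,\Vnext{i})$ with $R_m\notin V_i$ and pick a blocker $R_k$ covering the rightmost $x$-coordinate in $R_m$'s $x$-range $\cap\, R_i$'s $x$-range, so that $R_k$'s right facet is at least $\min(R_m\text{'s right},\,R_i\text{'s right})$. A short geometric check (every blocker has bottom-left in $R_i$'s slab with $y$-coordinate strictly between $R_m$'s top and $R_i$'s bottom) places $k$ in $(i,m)$, so by the definition of $f_i(P)$ there is some $p_{l_k}$ with $l_k\le i$ dominating $R_k$'s top-right. Two cases remain. If $R_m$'s right facet lies weakly inside $R_i$'s slab, then $R_k$'s right $\ge R_m$'s right and $R_k$'s top is strictly above $R_m$'s top, so $p_{l_k}$ in fact dominates $R_m$'s top-right, contradicting $R_m=f_i(P)$. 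If $R_m$'s right facet lies strictly right of $R_i$'s slab, then $R_k$ itself extends strictly right of $R_i$'s slab, and I would apply the no-dominator principle from sub-claim~(i) with $R_k$ in place of $\Vnext{i}$ to conclude that no $p_l$ with $l\le i$ can dominate $R_k$, contradicting the existence of $p_{l_k}$.

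The main obstacle I expect is formalising the uniform no-dominator principle so that it simultaneously handles retrieved and un-retrieved potential dominators. The source-plus-closest-blocker half is geometrically intricate because the recursion on blockers must terminate at a region that directly sees $R_i$ vertically; the truncation half requires carefully invoking Lemma~\ref{lemma:classification} on the degenerate single-point rectangle $\{p_l\}$ to conclude that $\Vnext{i}$ (respectively $R_k$) would have been discarded at the moment $p_l$ was retrieved. Once this principle is established, both sub-claims fall out by the case analyses above.
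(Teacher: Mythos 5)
Your decomposition is the same as the paper's: first show that no $p_l$ with $l\le i$ can dominate $\Vnext{i}$ (so $f_i(P)$ has index at most that of $\Vnext{i}$), then show that a region between $R_i$ and $\Vnext{i}$ in the index order that is not in $V_i$ cannot be $f_i(P)$. For the second half the paper shoots a vertical ray from $R_i$ right of $p_i^\mathit{xMax}$ towards $f_i(P)$ and exhibits the first blocking region $R'\in V_i$ as an earlier-indexed \emph{non}-dominated region, contradicting minimality directly; you instead take the blocker at the rightmost $x$ of the overlap of the two slabs, use minimality to conclude that it \emph{is} dominated, and turn that into a geometric contradiction. Both routes work and rest on the identical key fact (no point of index at most $i$ reaches right of $R_i$'s right facet); yours just pays for it with the extra case split on whether the right facet of $f_i(P)$ leaves $R_i$'s slab.

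One step does not hold as written. In the retrieved sub-case of your sub-claim (i) you conclude that the implicit truncation would have removed $\Vnext{i}$, ``contradicting $\Vnext{i}\in\RR^t$.'' But membership of $\Vnext{i}$ in $\RR^t$ is not a given: $\Vnext{i}$ may legitimately have been deleted by a retrieved point of index \emph{larger} than $i$ (the lemma is a statement about the combinatorially defined $f_i(P)$, which only quantifies over dominators of index at most $i$, not about what survives in $\RR^t$), so you cannot use its presence as the contradiction. The repair is immediate with machinery you already set up: if $p_l$ with $l<i$ dominates the top-right vertex of $\Vnext{i}$, then $p_l$ has $x$-coordinate exceeding $R_i$'s right facet; your spanning argument then places the original rectangle $R_l$ entirely above $R_i$ with $R_i$'s $x$-range contained in that of $R_l$, hence $p_l$ also dominates the top-right vertex of $R_i$, and the implicit truncation following the retrieval of $p_l$ would have removed $R_i$ itself --- impossible, since $R_i$ is a source of a \subproblem of $G(\RR^t)$. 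This is in effect the paper's own one-line justification ($p_{i-1}^\mathit{xMax}$ cannot dominate $R_i$, so $p_i^\mathit{xMax}$ never reaches past $R_i$'s right facet), and with it both of your sub-claims go through.
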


\begin{proof}
Any region in $[R_i, R_j]$ that is dominated by a point preceding $p_i$ is dominated by $p_i^\mathit{xMax}$. The point $p_{i-1}^\mathit{xMax}$ cannot dominate $R_i$, as else $R_i$ would have been removed during a truncation.
Hence, $f_i(P)$ is $\Vnext{i}$ or a region preceding it. 
Suppose for the sake of contradiction that $f_i(P)$ is a region preceding $\Vnext{i}$ and not in $V_i$. Consider any vertical ray from a point in $R_i$, right of $p_i^\mathit{xMax}$ that intersects $f_i(P)$ (such a ray must always exist, since $f_i(P)$ precedes $\Vnext{i}$ and is not dominated by $p_i^\mathit{xMax}$).
Since $f_i(P) \not \in V_i$, this ray must also intersect a region $R' \in V_i$ (else this ray would be a line of sight to $f_i(P)$, which would imply $f_i(P) \in V_i$). However, then $R'$ must precede $f_i(P)$ which contradicts the assumption that $f_i(P)$ was the lowest-indexed region succeeding $R_i$, not dominated by $p_i^\mathit{xMax}$.
\end{proof}

\begin{restatable}{cor}{pointlocation}
\label{cor:pointlocation}
Let $\RR^t$ be a truncated set, $[R_i, R_j]$ be a \subproblem.
Given Invariant~\ref{inv:pointer} and $\Xi$, we can identify $f_i(P)$ in $O(\log |V_i(P)|)$ time using the folklore galloping search.
\end{restatable}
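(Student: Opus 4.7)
The plan is to apply the folklore galloping (exponential-then-binary) search to the ordered list of $V_i$, using the preceding lemma to confine the search space to $V_i \cup \{\Vnext{i}\}$. By Invariant~\ref{inv:pointer} and the initial comparison of Algorithm~\ref{algo:upperbound}, the point $p_i^\mathit{xMax}$ is already available; and by Attribute~\ref{attr1} the set $V_i$ is stored as a BST whose in-order traversal is threaded so that the $k$-th successor of $R_i$ is reachable in $O(k)$ total work. Each dominance test of $p_i^\mathit{xMax}$ against the top-right corner of a probed rectangle takes $O(1)$ time.

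Concretely, I would walk $V_i$ from $R_i$ at doubling offsets $2^0, 2^1, 2^2, \ldots$ until either a probed region is not dominated by $p_i^\mathit{xMax}$ or $V_i$ is exhausted. If the first non-dominated probe appears at offset $2^k$, I then binary-search the window $(2^{k-1}, 2^k]$ for the earliest non-dominated region and return it; if every probe is dominated, I return $\Vnext{i}$ in $O(1)$ time via Attribute~\ref{attr2}. By the preceding lemma, the returned value equals $f_i(P)$ in either case.

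For the time bound, let $r$ denote the number of regions in $V_i$ strictly after $R_i$ that $p_i^\mathit{xMax}$ dominates. As established in the proof of the preceding lemma, these are exactly the regions of $V_i$ dominated by some $p_k$ with $k \le i$, and together with $R_i$ itself they constitute $V_i(P)$, giving $r + 1 = |V_i(P)|$. The exponential phase performs $O(\log(r+1))$ probes before its first non-dominated hit, and the subsequent binary search has depth $O(\log(r+1))$ with $O(1)$ work per probe, for a total of $O(\log|V_i(P)|)$. The main subtlety worth checking is that some members of $V_i$ may already have been removed from $\RR^t$ by earlier truncations; this is harmless, because the galloping probes operate on the stored $V_i$ and test only a purely geometric dominance predicate against $p_i^\mathit{xMax}$, which by the preceding lemma is exactly the criterion defining $f_i(P)$ regardless of current membership in $\RR^t$.
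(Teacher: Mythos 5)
Your proof is correct and follows essentially the same route as the paper's: galloping search over the BST on $V_i$ (Attribute~\ref{attr1}) keyed on $p_i^\mathit{xMax}$ from Invariant~\ref{inv:pointer}, falling back to $\Vnext{i}$ via Attribute~\ref{attr2} when all of $V_i$ is dominated, with the bound following because the dominated regions form a prefix of size $|V_i(P)|-1$. You merely spell out the exponential-then-binary phases and the remark about stale members of $V_i$ in more detail than the paper does.
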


\begin{proof}
The datastructure $\Xi$ stores for $R_i$ the set $V_i$ as a balanced binary search tree (Attribute 1). The set $V_i(P)$ is a prefix of $V_i$ which ends at $f_i(P) \in V_i$ (or, in the case that $V_i(P) = V_i$, $f_i(P) = V_i^\mathit{next})$).
Thus, given Invariant~\ref{inv:pointer}, we can use $p_i^\mathit{xMax}$ to identify $V_i(P)$ in $O(\log |V_i(P)| )$ time by using the folklore galloping (exponential) search by Bentley and  Chi-Chih Yao.
If $V_i(P) = V_i$, we refer to $V_i^\mathit{next}$ which is stored in $\Xi$ (Attribute 2).
\end{proof}

\noindent
Next, we prove a lemma that helps us to identify the \subproblems of $G(\RR^{t+1})$:

\begin{lemma}
\label{lemma:subproblems}
Let $[R_i, R_j]$ be a \subproblem of $G(\RR^t)$ and denote by $v$ the lowest node in $T_\RR$ such that the interval $[i, j]$ is stored in $v$.
For any descendent $[a, b]$ of $v$, there is no region $R' \in [R_a, R_b]$ that is a source node in $G(\RR^{t+1})$ other than possibly $R_a, R_b, f_i(P)$ or $g_j(P)$.
\end{lemma}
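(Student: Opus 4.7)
The plan is to argue by contradiction: suppose $R' \in [R_a, R_b]$ is a source in $G(\RR^{t+1})$ with $R' \notin \{R_a, R_b, f_i(P), g_j(P)\}$, and derive a contradiction.

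First I exploit the structure of $T_\RR$. Since $[a,b]$ is a descendant of $v$, the construction of $T_\RR$ tells us that the graph $G^{[a,b]}$ at the level at which $[a,b]$ appears is $G(\RR)$ restricted to $[R_i,R_j]$ with the outgoing arrows of all ancestor-boundary regions removed. In particular, the outgoing arrows of $R_i$ and $R_j$ are removed in $G^{[a,b]}$, which mirrors the fact that in $\RR^{t+1}$ these regions have been turned into the points $p_i, p_j$ with no outgoing arrows. Because $[a,b]$ is itself a \subproblem at this level, the only sources of $G^{[a,b]}$ inside $[R_a,R_b]$ are $R_a$ and $R_b$. Hence $R'$ must have some incoming arrow $R_k \to R'$ in $G^{[a,b]}$, with $R_k$ lying strictly between $R_i$ and $R_j$ in index. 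I then case-split on whether $R_k$ survives truncation. If $R_k$ lies in $[f_i(P), g_j(P)]$ then $R_k$ survives, and since removing disjoint axis-aligned rectangles can only improve horizontal/vertical visibility between the remaining regions, the arrow $R_k \to R'$ persists in $G(\RR^{t+1})$; this contradicts that $R'$ is a source.

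The remaining, harder case is that $R_k$ is truncated. WLOG $R_k$ lies strictly between $R_i$ and $f_i(P)$ in index, so $R_k$ is dominated by $p_i^\mathit{xMax}$ (the symmetric case, where $R_k$ lies strictly between $g_j(P)$ and $R_j$, is handled identically using $g_j(P)$). My plan here is to follow the axis-aligned segment that witnesses $R_k \to R'$ from $R'$ back toward $R_k$ inside $\RR^{t+1}$, and locate the first surviving region encountered along it. If that region lies in $[f_i(P), g_j(P)]$, it is a non-truncated predecessor of $R'$ in $G(\RR^{t+1})$, again contradicting that $R'$ is a source. Otherwise, the first surviving region encountered in this direction is the degenerate $R_i = p_i$, which contributes no outgoing arrow; here a geometric argument, using the top-left to bottom-right indexing of bottom-left vertices together with the fact that $p_i^\mathit{xMax}$ dominates the entire removed range between $R_i$ and $f_i(P)$, will show that this is only possible when $R'$ itself is the first surviving index after $R_i$, i.e., $R' = f_i(P)$, contradicting our assumption.

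The main obstacle is this last ``first surviving region'' step: I must rule out that some non-truncated region whose extent reaches past $p_i^\mathit{xMax}$ blocks the replacement visibility without itself becoming a predecessor of $R'$ in $G(\RR^{t+1})$. The argument has to use carefully the interplay between the $\BR$-based truncation, the disjointness of the rectangles, and the top-left to bottom-right index ordering of bottom-left vertices, in order to conclude that the ``ceiling'' above $R'$ along the witness direction must be $R_i$, and then to translate this into $R' = f_i(P)$ via the definition of $f_i(P)$ as the lowest-indexed region succeeding $R_i$ that is not dominated by any $p_k$ with $k \le i$.
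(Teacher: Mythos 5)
Your setup is sound and matches the paper's: you correctly reduce to the observation that a region strictly inside a descendant interval $[a,b]$ has, by construction of $T_\RR$, an incoming arrow from some $X \in [R_a, R_b]$, and that this arrow can only disappear at iteration $t$ if $X$ is newly dominated by $p_i$ or $p_j$ (otherwise $X$ was already gone from $G(\RR^t)$). The gap is in the case you yourself flag as the ``main obstacle,'' and the route you propose for it does not work. You suggest tracing the vertical witness segment of the lost arrow $X \to R'$ upward from $R'$ and arguing that the first surviving region along it is either a predecessor in $[f_i(P), g_j(P)]$ or the degenerate $R_i = p_i$. But that segment lies in the $x$-range of $X$, and $X$ is dominated by $p_i$, so $X$ (and hence the whole witness ray) lies strictly to the left of the vertical line through $p_i$; the ray can therefore never encounter the point $p_i$, and it may encounter no surviving region of $[R_i,R_j]$ at all. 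So the dichotomy on which your hard case rests is not available, and the conclusion $R' = f_i(P)$ does not follow from it.

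The argument that actually closes this case is different in kind: instead of following the old witness segment, one examines $R'$ itself. Since $R'$ intersects the downward vertical halfslab of $X$ and $X$ is dominated by $p_i$, disjointness forces $R'$ to lie entirely below the $y$-coordinate of $p_i$ within that slab, and hence (being a rectangle) entirely below it; since $R'$ survives truncation it is not dominated by $p_i^\mathit{xMax}$, so it must extend to the right of the vertical line through $p_i$. From this one deduces that $R'$ lies in the vertical halfslab of $f_i(P)$, so $f_i(P)$ has a directed path to $R'$ in $G(\RR^{t+1})$ and $R'$ is not a source unless $R' = f_i(P)$. This geometric step is the substance of the lemma and is absent from your proposal; as written, the proof is incomplete.
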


\begin{figure}[t]
    \centering
    \includegraphics{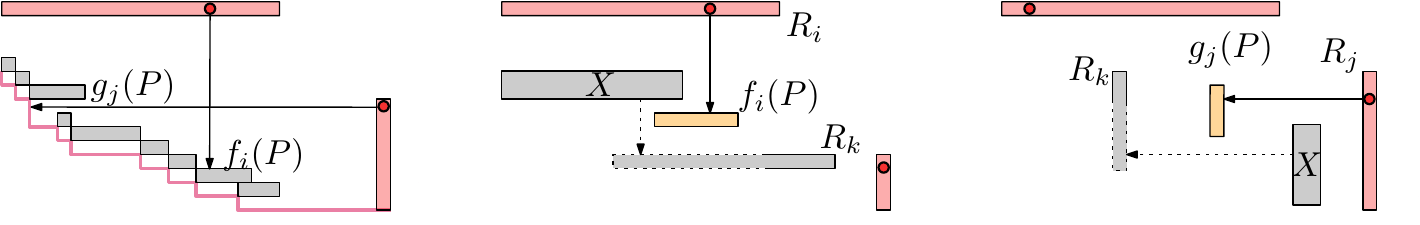}
    \caption{An illustration of the argument of Lemma~\ref{lemma:subproblems}. If $R_k$ loses the incoming arrow from $X$, there must be a directed path from $f_i(P)$ or $g_i(P)$ to $R_k$, or either $R_k = f_i(P)$, $R_k = g_j(P)$. }
    \label{fig:nextsubproblem}
    \vspace{0cm}
\end{figure}

\begin{proof}
If $f_i(P)$ equals or succeeds $g_j(P)$ then per definition of $f_i(P)$ and $g_j(P)$ all regions in $(R_i, R_j)$ apart from $f_i(P) = g_j(P)$ are dominated and therefore removed after truncation of $\RR^{t+1}$. Hence, they cannot be sources in $G(\RR^{t+1})$ (Figure~\ref{fig:nextsubproblem}).
Let $[a,b]$ be a descendent of $v$, $R_k$ be a region with $k \in \langle a, b \rangle$ succeeding $f_i(P)$ and preceding $g_j(P)$. 
Per construction of $T_\RR$ each such $R_k$ has at least one incoming arrow from a region $X \in [R_a, R_b]$.
The region $R_k$ can only become a source in $G(\RR^{t+1})$ if either $p_i$ or $p_j$ dominates $X$ (else, $X$ was dominated by $p_i^\mathit{xMax}$ or $p_j^\mathit{yMax}$ before iteration $t$ and does not exist in $G(\RR^{t})$).

We consider the case where $p_i$ dominates $X$ (Figure~\ref{fig:nextsubproblem}).
If $p_i$ dominates $X$, then $X$ lies strictly left of the vertical line through $p_i$, and $R_k$  intersects the vertical halfslab of $X$.
Similarly if $f_i(P) \neq R_k$ then $R_k$ must lie at least partly right of the vertical line through $p_i$ and below the bottom facet of $f_i(P)$. This means that if $R_k$ lies in the vertical halfslab of $X$ then it must also lie in the vertical halfslab of $f_i(P)$. The region $f_i(P)$ is therefore a node in $G(\RR^t)$ with a directed path to $R_k$, so $R_k$ is not a source node in $G(\RR^{t+1})$.
\end{proof}

\subparagraph{Algorithm~\ref{algo:upperbound} runtime.}
We further specify the iterative procedure of our algorithm. Our algorithm maintains a queue of \subproblems. In iteration $t$, 
we dequeue a \subproblem $[R_i, R_j]$ of $G(\RR^t)$ and we denote by $v$ the lowest node in $T_\RR$ such that the interval $[i, j]$ is stored in $v$.
We can obtain $v$ in constant time via Attribute~\ref{attr4}.
By Lemma~\ref{lemma:independence}, processing $[R_i, R_j]$  does not affect other \subproblems which are in the queue before we process $[R_i, R_j]$.
If the algorithm has not yet retrieved $p_i$ nor $p_{i-1}^\mathit{xMax}$, it retrieves both points using Invariant~\ref{inv:pointer} in $2C$ time and computes $p_i^\mathit{xMax}$ in constant time.  Similarly we compute $p_j^\mathit{yMax}$ in with at most $2C$ additional time.
By Lemma~\ref{lem:identification}, we check in $O(1)$ time if $p_i$ and $p_j$ appear on the \pareto, and if so we add them as the respective successor of $p_{i-1}^\mathit{xMax}$ or predecessor $p_{j+1}^\mathit{yMax}$. 
If we have just retrieved $p_i$, we use galloping search to identify $f_i(P)$ in $O(\log |V_i(P)|)$ time (Corollary~\ref{cor:pointlocation}), we set the back pointer (Attribute~\ref{attr3}) to \emph{null} and (for later use) we store a reference in $R_i$ to $f_i(P)$. 
If we did not retrieve $p_i$ this iteration, we retrieved it in a prior iteration and we use the pre-stored result $f_i(P)$ in $O(1)$ time. 
We do the same for $g_j(P)$ in $O(C + \log |H_j(P)|)$ time. We briefly remark the following claim.

\begin{restatable}{lem}{sourceidentification}
\label{lemma:sourceidentification}
Let $[R_i, R_j]$ be a \subproblem of $G(\RR^t)$ and $f_i(P)$ precede $g_j(P)$. Then the region $f_i(P)$ is a source in $G(\RR^{t+1})$ if and only if: (1) the forward pointer of $f_i(P)$ is \emph{null} or (2) the region resulting from the forward pointer has been retrieved in an iteration $t' < t$.
\end{restatable}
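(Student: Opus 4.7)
My plan is to establish both directions of the biconditional by classifying the incoming arrows to $f_i(P)$ in $G(\RR^{t+1})$ and tying them to the forward-pointer attribute stored in $\Xi$. I first rule out incoming vertical arrows. Any such arrow originates from some $R_a$ preceding $f_i(P)$ that is still present in $\RR^{t+1}$. The subcase $a < i$ is excluded by Lemma~\ref{lemma:independence} applied to $R_i$ as a source of $G(\RR^t)$; the subcase $a = i$ is excluded because $R_i$ has been retrieved and is now the point $p_i$, which has no outgoing arrows by the observation immediately following the definition of the dependency graph; the subcase $i < a$ (still preceding $f_i(P)$) is excluded because the defining property of $f_i(P)$ forces each such $R_a$ to be dominated by $p_i^{\mathit{xMax}}$ and hence removed by the implicit truncation producing $\RR^{t+1}$.

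Thus only horizontal incoming arrows can be in play; they come from some $R_b$ succeeding $f_i(P)$ that is still present in $\RR^{t+1}$. Let $R^{\ast}$ denote the region pointed to by the forward pointer of $f_i(P)$ in $\RRc$, that is, the lowest-indexed region whose horizontally visible set contains $f_i(P)$. For the ``$\Leftarrow$'' direction, suppose first that $R^{\ast}$ is null. Then no horizontal arrow to $f_i(P)$ exists in $G(\RRc)$, and I would argue this persists in $G(\RR^{t+1})$: a newly visible $R_b$ would require the removal of some blocker to open a horizontal line of sight, but the first such blocker would itself be a horizontal predecessor of $f_i(P)$ in $\RRc$, contradicting nullness. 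Suppose instead that $R^{\ast}$ was retrieved in some iteration $t' < t$; then $R^{\ast}$ is a degenerate point in $\RR^{t+1}$ with no outgoing arrows, and for any higher-indexed candidate $R^{\ast\ast}$ whose horizontally visible set in $\RRc$ contained $f_i(P)$, the dominating witness that licensed the retrieval of $R^{\ast}$ also dominates $R^{\ast\ast}$ (both sit on a common horizontal corridor right of $f_i(P)$, and $R^{\ast\ast}$ lies further right at an overlapping $y$-range), so $R^{\ast\ast}$ has already been removed from $\RR^{t+1}$. The ``$\Rightarrow$'' direction is comparatively short: if $R^{\ast}$ is neither null nor previously retrieved, then $R^{\ast}$ survives in $\RR^{t+1}$ (otherwise a dominating witness for $R^{\ast}$ would also dominate $f_i(P)$, contradicting $f_i(P) \in \RR^{t+1}$), and since horizontal visibility is monotone under region removal, the arrow $R^{\ast} \to f_i(P)$ persists in $G(\RR^{t+1})$, so $f_i(P)$ is not a source.

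The main obstacle I anticipate is the ``$\Leftarrow$'' subcase in which $R^{\ast}$ has been retrieved and some higher-indexed $R^{\ast\ast}$ originally had $f_i(P)$ in its horizontally visible set. Making the cascading-domination argument rigorous requires a careful inspection of how the top-left-to-bottom-right indexing of bottom-left vertices on the guaranteed boundary interacts with the horizontal corridors through $f_i(P)$, ensuring that every such $R^{\ast\ast}$ is indeed removed by the same truncation that licensed the retrieval of $R^{\ast}$. The remaining steps are routine propagations of the visibility and domination facts already developed in Sections~\ref{sec:prelims} and~\ref{sub:reconstPrelims}.
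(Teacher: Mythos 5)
Your overall strategy---eliminate incoming vertical arrows and then tie the remaining horizontal arrows to the forward pointer---is the same as the paper's, and your vertical-arrow argument is a more careful version of the paper's one-line remark that all regions preceding $f_i(P)$ in $[R_i, R_j]$ are dominated by $p_i^\mathit{xMax}$. However, there are two genuine gaps. First, in condition (1) you treat ``the forward pointer of $f_i(P)$ is \emph{null}'' as the static attribute of $\Xi$, i.e.\ that no region of $\RRc$ ever had $f_i(P)$ in its horizontally visible set. But the algorithm mutates these pointers: when a point is retrieved, the relevant Attribute~3 pointers are set to \emph{null}. The paper's proof explicitly splits the null case in two: either no $R_k$ ever had $f_i(P) \in H_k$, or the pointer was nulled by the earlier retrieval of a point $p$ horizontally visible from $f_i(P)$, in which case one must argue that $p$ dominates every remaining region with a horizontal arrow to $f_i(P)$. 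Your ``persistence of nullness'' argument does not cover this second situation and is in fact false there, since horizontal arrows into $f_i(P)$ did exist in $G(\RRc)$.

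Second, your treatment of condition (2) rests on ``the dominating witness that licensed the retrieval of $R^{\ast}$''---but no such witness exists: $R^{\ast}$ is retrieved because it became the endpoint of a \subproblem (a source in some $G(\RR^{t'})$), not because it was dominated, so the cascading-domination argument you yourself flag as the main obstacle has nothing to cascade from. The paper argues instead that once the lowest-indexed region with a horizontal arrow to $f_i(P)$ has been retrieved, all regions with such an arrow must already have been considered by the algorithm (hence are now point regions without outgoing arrows, or have been removed by truncation). Relatedly, in your ``$\Rightarrow$'' direction the parenthetical claim that a point dominating $R^{\ast}$ would also dominate $f_i(P)$ is not geometrically justified: $f_i(P)$ lies in the horizontal halfslab of $R^{\ast}$ but may extend above the top facet of $R^{\ast}$. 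The paper's contrapositive is both weaker and correct: if a retrieved point $q$ dominated every surviving region with a horizontal arrow to $f_i(P)$, then either $q$ dominates $f_i(P)$ itself (contradicting that $f_i(P)$ precedes $g_j(P)$) or the retrieval of $q$ would have set the forward pointer of $f_i(P)$ to \emph{null}.
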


\begin{figure}[tb]
    \centering
    \includegraphics{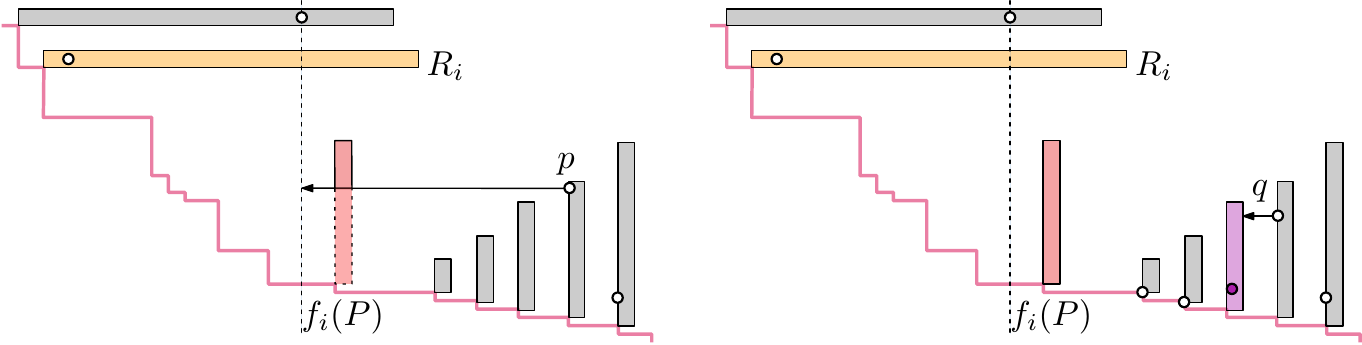}
    \caption{ Left: the first case of the proof of Lemma~\ref{lemma:sourceidentification}, where $p$ must dominate the remaining regions with an arrow to $f_i(P)$. Right: the second case, where either $q$ sees $f_i(P)$, dominates $f_i(P)$ or the purple region keeps its horizontal arrow to $f_i(P)$. }
    \label{fig:sourceproof}
    \vspace{0cm}
\end{figure}

\begin{proof}
Suppose that the pointer is \emph{null} and suppose that there is no region $R_k$ for which $f_i(P) \in H_k(P)$. 
Then $f_i(P)$ has no incoming horizontal arrows. 
If there is a region $R_k$ for which $f_i(P) \in H_k (P)$ then there is a point $p$ retrieved in an iteration earlier such that $p$ is horizontally visible from $f_i(P)$ that set the pointer to \emph{null} Figure~\ref{fig:sourceproof}, Left. The point $p$ dominates all remaining regions with a horizontal arrow  to $f_i(P)$.
If the region resulting from the forward pointer has been retrieved in an iteration $t'< t$, all regions with a horizontal pointer to $f_i(P)$ must have been considered by the algorithm, so $f_i(P)$ is not dominated.
By definition, all regions preceding $f_i(P)$ in $[R_i, R_j]$ are dominated by $p_i^\mathit{xMax}$, thus, if  $f_i(P)$ has no incoming horizontal arrows it must be a source in $G(\RR^{t+1})$.

If the pointer is not \emph{null} and the region resulting from the forward pointer has not yet been retrieved in an earlier iteration then $f_i(P)$ must have at least one incoming horizontal arrow. 
Indeed, suppose that all regions with a horizontal pointer to $f_i(P)$ that are not yet retrieved are dominated by a point $q$ retrieved prior to the current iteration. Then either $q$ dominates $f_i(P)$, contradicting the assumption that $f_i(P)$ precedes $g_j(P)$, or the retrieval of $q$ would have set the forward pointer of $f_i(P)$ to \emph{null}.  
\end{proof}

\noindent
For ease of exposition, we assume $f_i(P)$ and $g_j(P)$ are not compound regions. For compound regions, we refer to Appendix~\ref{appx:reconstruction}. We distinguish between two cases based on which children of $v$ contain $f_i(P)$ and $g_j(P)$ (Figure~\ref{fig:algorithmicrecursion}).
Note that we never add a $\subproblem$ $[R_a, R_b]$ if $a = b+1$ (as such a \subproblem does not satisfy the premise of Theorem~\ref{thm:runtime}).
Instead, we charge retrieving and comparing $p_a$ and $p_b$ immediately with at most $4C$ overhead. 

 \subparagraph{Case 1: $f_i(P)$ and $g_j(P)$ are contained in the same grandchild $[a,b]$ of $v$.}
 We check in constant time whether $f_i(P)$ and $g_j(P)$ are sources in $G(\RR^t)$ (by Lemma~\ref{lemma:sourceidentification}).
Note that either $f_i(P)$ or $g_j(P)$ must be a source.
 Let $R_k = f_i(P)$ and $R_l = g_j(P)$.
 \begin {itemize}
 \item If both $f_i(P)$ and $g_j(P)$ are sources, then by Lemma~\ref{lemma:subproblems} the only three \subproblems in $G(\RR^{t+1})$ and $[R_i, R_j]$ are:
 $[R_i = p_i, R_k]$, $[R_k,R_l]$ and $[R_l, R_j = p_j]$.
 In this case $p_{k-1}^\mathit{xMax} = p_i^\mathit{xMax}$ and $p_{l+1}^\mathit{yMax} = p_j^\mathit{yMax}$. If $k = l-1$, we immediately retrieve $p_k$ and $p_l$ in 2C time as the aforementioned overhead. Else we add to $[R_k, R_l]$ a reference to $p_{k-1}^\mathit{xMax}$ and $p_{l+1}^\mathit{yMax}$ to maintain Invariant~\ref{inv:pointer} and add the \subproblem $[R_k, R_l]$ to the queue.
 \item  If $f_i(P)$ is a source and $g_j(P)$ is not, by the same reasoning the only \subproblems are $[R_i, R_k]$ and $[R_k, R_j]$.
 We check if $k = j-1$ as before. If not, we maintain Invariant~\ref{inv:pointer} in constant time just as above by adding $[R_k, R_k]$ to the queue with a reference to $p_i^\mathit{xMax}$. 
 \item This case is symmetric to the previous, as $f_i(P)$ is not a source and $g_j(P)$ is.
 \end {itemize}
 
 \subparagraph{Case 2: $f_i(P) \in [R_a, R_b]$ and $g_j(P) \in [R_e, R_f]$ for distinct children $[a,b]$ and $[e, f]$ of $v$.}
 In this case, per construction of $T_{\RRc}$, each child $[c, d]$ of $v$ with $b \le c < d \le e$ is a \subproblem of $G(\RR^{t+1})$. 
 We wish to briefly note, that either $c < d-1$, or $[c, d]$ neighbors a child of $v$ for which this is true (else, regions could have been compounded). Hence by Theorem~\ref{thm:runtime} if $c = d-1$ we charge $2C$ time to the neighbor to immediately retrieve $p_c$ and $p_d$ and possibly add them to $\Xi^*$ (again as the aforementioned overhead). 
 If $c < d-1$, then per construction of $T_{\RRc}$, the point $p_c$ appears on the \pareto of $P$. 
 Note that since $[c, d]$ is a child of $v$, $p_{c-1}^\mathit{xMax}$ can only be $p_{c-1}$ or $p_i^\mathit{xMax}$.
 We charge $O(1)$ time to the future processing of $[R_c, R_d]$ to provide four pointers to $[R_c, R_d]$ (to maintain Invariant~\ref{inv:pointer}) and add $[R_c, R_d]$ to the queue.

 \begin{figure}[tb]
    \centering
    \includegraphics{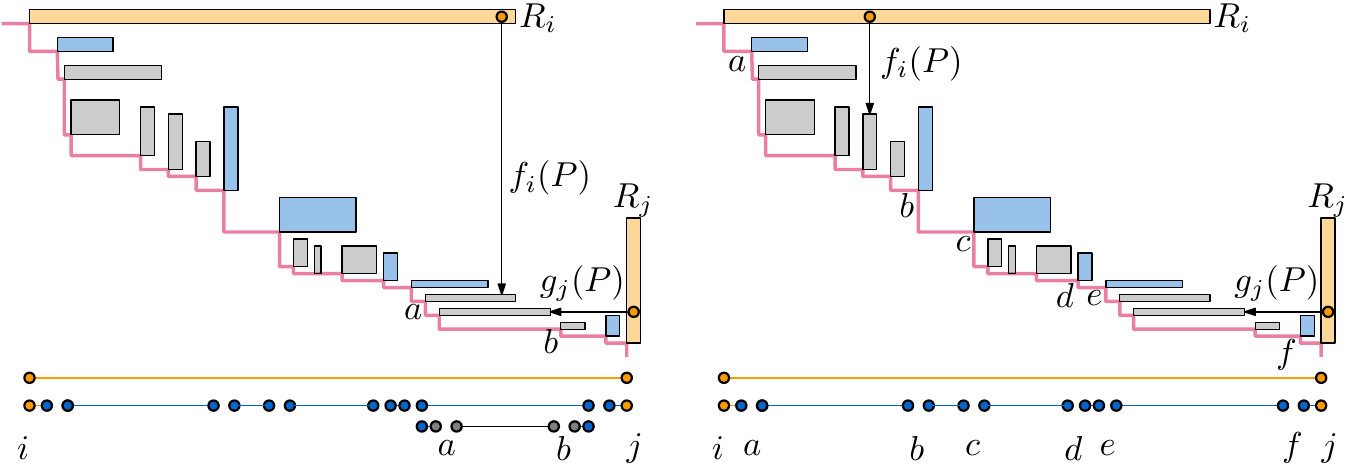}
    \caption{Left: Case 1 where $f_i(P), g_j(P)$ lie in the same grandchild $[a,b]$. Right they don't. }
    \label{fig:algorithmicrecursion}
    \vspace{0cm}
\end{figure}

 What remains is to handle $[a,b]$ and $[e,f]$ and we describe the procedure for $[a,b]$.
  We check in constant time if $f_i(P)$ is a source using Lemma~\ref{lemma:sourceidentification}. If it is, then by Lemma~\ref{lemma:subproblems} the only \subproblems of $G(\RR^{t+1})$ contained in $[R_i, R_b]$ are $[R_i, f_i(P)]$ and $[f_i(P), R_b]$.
  We briefly check if $[f_i(P), R_b]$ is a \subproblem of length 2. If so we retrieve the corresponding points to see if they appear on the \pareto. Else we add $[f_i(P), R_b]$ to the queue in constant time via the same procedure as Case 1.
  If $f_i(P)$ is not a source, then $[R_i, R_b]$ is the only \subproblem of $G(\RR^{t+1})$ in $[R_i, R_b]$ and we handle it similarly.  We conclude:
\begin{theorem}
Algorithm~\ref{algo:upperbound} constructs $\Xi^*$ in $O(A(\RR, \RRc, P)) = \Theta(\CP(\RR, P))$ time. 
\end{theorem}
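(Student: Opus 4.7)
The plan is to show, iteration by iteration, that the work Algorithm~\ref{algo:upperbound} performs on each dequeued subproblem fits inside the allowance $\tfrac{1}{2}C + \log|V_i(P)| + \log|H_i(P)|$ charged to it in $A(\RR,\RRc,P)$, and to close by invoking Theorem~\ref{thm:runtime}.

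First I would dispatch correctness. Lemma~\ref{lemma:consideration} ensures every region intersected by the \pareto eventually surfaces as an endpoint of some dequeued subproblem; Lemma~\ref{lem:identification} certifies that the append tests against $p_i^\mathit{xMax}$ and $p_j^\mathit{yMax}$ correctly classify $p_i$ and $p_j$; and Lemma~\ref{lemma:independence} guarantees that the chains appended for distinct subproblems are pairwise independent contiguous pieces of the final output, glued consistently through the pointers maintained by Invariant~\ref{inv:pointer}. Compound sink runs are unfurled via Attribute~\ref{attr5} and Lemma~\ref{lem:contiguous}.

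Next I would account for a single iteration. Given Invariant~\ref{inv:pointer}, retrieving $p_i, p_j$ and updating the running maxima costs at most $4C + O(1)$; Corollary~\ref{cor:pointlocation} locates $f_i(P), g_j(P)$ in $O(\log|V_i(P)| + \log|H_j(P)|)$; Attribute~\ref{attr4} jumps to the node $v$ of $T_\RR$ storing $[i,j]$ in $O(1)$; Lemma~\ref{lemma:subproblems} restricts the new sources of $G(\RR^{t+1})$ inside $[R_i,R_j]$ to $f_i(P), g_j(P)$ and grandchildren of $v$ strictly between them; and Lemma~\ref{lemma:sourceidentification} with Attribute~\ref{attr3} tests sourceness in $O(1)$. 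The two cases in Section~\ref{sub:reconstruction} each enqueue a constant number of new subproblems directly, and defer $O(1)$ bookkeeping to each of the other children of $v$, charged to their own future processing.

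Finally I sum over iterations. Length-two subproblems are never enqueued: their two points are retrieved immediately with $O(C)$ overhead paid by the producing iteration, so every enqueued subproblem satisfies $i < j-1$ as required by Theorem~\ref{thm:runtime}. The total work is therefore $O(A(\RR,\RRc,P))$, and Theorem~\ref{thm:runtime} then gives $A(\RR,\RRc,P) \le \CP(\RR,P)$; the matching $\Omega(\CP)$ side follows from Theorem~\ref{thm:lowerbounds}, which sandwiches $\CP$ within a constant factor of the uncertainty-region lower bound that any algorithm (including ours) must pay in the worst case over $P$. The main obstacle I anticipate is the case analysis for degenerate or coinciding configurations ($j = i+1$, $f_i(P) = g_j(P)$, or $f_i(P)$ a compound region): each must be shown to cost only constantly many retrievals, absorbable either into the producing subproblem's allowance or into that of an adjacent child of $v$.
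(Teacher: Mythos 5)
Your proposal follows essentially the same route as the paper: correctness via Lemmas~\ref{lem:identification}, \ref{lemma:independence} and \ref{lem:contiguous}, a per-iteration cost account using the galloping search of Corollary~\ref{cor:pointlocation} together with the constant-time subproblem identification of Lemmas~\ref{lemma:subproblems} and \ref{lemma:sourceidentification}, the charging of length-two subproblems as $O(C)$ overhead so that Theorem~\ref{thm:runtime} applies, and finally Theorem~\ref{thm:lowerbounds} for optimality. This matches the paper's own argument, which is exactly this assembly of Section~\ref{sub:reconstruction} and the compound-region handling of Appendix~\ref{appx:reconstruction}.
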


\bibliography{main}

\clearpage
\appendix

\section{Reviewing lower bounds}
\label{appx:entropy}

The folklore worst-case lower bound definition of an algorithmic problem $\mathcal{P}$ with input $X$ is:
\[
\textnormal{Worst-case lower bound}(\mathcal{P}) := \min_{A} \max_{X} \textnormal{Runtime}(A, X)\,,
\]
\noindent
where each $A$ is an algorithm that solves $\mathcal{P}$ for some definition of solving.
Afshani, Barbay and Chan~\cite{afshani2017instance} observe that there are three common techniques to prove lower bounds within computational geometry:
\begin {itemize}
  \item direct arguments based on counting, or {\em information theory};
  \item topological arguments, as used by e.g. Yao~\cite{yao1981lower} or Ben-Or~\cite{ben1983lower} (sometimes referred to as \emph{algebraic decision tree} arguments); or
  \item arguments based on Ramsey theory, as used by e.g. Moran, Snir and Manber~\cite{moran1985applications}. 
 \end {itemize}
 The latter two techniques decompose algorithms into decision trees and reason about their depth.
 In traditional computation models decisions are binary; therefore,
 without additional information about the decision tree structure of the specific problem $\mathcal{P}$, the best possible lower bound on its tree depth is $\Omega(\log (\#\text{leaves}))$, which is equivalent to the information-theoretic bound. 
 We mention that an additional technique for obtaining lower bounds is an adversarial argument as by Erickson~\cite{erickson1995lower} or the more recent Chan~\cite{chan2010comparison}. 
 Here, we restrict our attention to information-theoretic arguments.

 \subparagraph{Models of computation.}
Applying these techniques to bound the running time of the algorithms $A$, requires a precise definition of the model of computation used for the algorithmic analysis.
The classical argument by Ben-Or~\cite{ben1983lower} assumes that the computation can be modeled by an algebraic decision tree, where in each node a binary decision is taken at which the algorithm branches based on an algebraic test.

Afshani, Barbay and Chan investigate a stronger definition for an algorithmic lower bound. They reason that the computational power that comes from the abstract algebraic decision tree model, where algebraic test functions are only bounded in the number of arguments and not their degree, is too large for a more fine-grained analysis of algorithmic running time.
They restrict the class of algorithms that they consider for their competitive analysis to algebraic decision trees where each test is a multilinear  function (a function that is linear, separate in each of its variables) with a constant number of variables. 
We share the sentiment that a computational model that allows arbitrary algebraic computations in constant time is unrealistically powerful, but note that the alternative model is perhaps too restrictive, as it becomes difficult, if not impossible, to express computations such as higher-dimensional range searching using only multilinear functions.

Recently, Erickson, van der Hoog and Miltzow~\cite{hoog2020FOCS} note that computations that involve data structures do not only need to make decisions, but also need to be able to access memory. Memory is inherently discrete: a model that supports only real-valued algebraic decisions can either not access memory, or has the ability to access discrete values with real-valued computations which would imply that $P = PSPACE$~\cite{schonhage1979power}.
Fueled by the desire to analyse algorithms within computational geometry, they (re)define the real RAM. We use their definition of RAM to be able to define lower bounds for the preprocessing model (as the preprocessing model inherently can access memory as it needs to be able to use an auxiliary data structure $\Xi$). 
For completeness, we summarize their definition and how it enables an information theoretic lower bound, even when dealing with a pre-stored structure $\Xi$ at the end of this section.

\subparagraph{Better than worst-case optimality.}
A natural more refined lower bound than the worst-case lower bound is the instance lower bound. Given an algorithmic problem $\mathcal{P}$ with input $X$, the \emph{instance lower bound} is defined as:
\[
\textnormal{Instance lower bound}(\mathcal{P}, X) := \min_{A} \textnormal{Runtime}(A, X)\,.
\]
We recall the example in the introduction where we perform a binary search to see whether a value $q$ is contained in a sorted sequence of numbers $X$. 
For each instance $(X, q)$, there exists a ``lucky'' algorithm that guesses the location of $q$ in $X$ in constant time. Thus, the instance lower bound for binary search is constant, even though there is no algorithm that can perform binary search in constant time in a comparison-based RAM model.
Fine-grained algorithmic analysis is desirable, yet instance optimality is unobtainable. It is therefore unsurprising that there is a rich tradition of finding algorithmic analyses that capture an algorithmic performance that is better than worst-case optimality. 
Many attempts parametrize the algorithmic problem, to better enable its analysis. For example, there is {\em output-sensitive} analysis as used by Kirkpatrick and Seidel~\cite{kirkpatrick1985output} where the algorithm runtime depends on the size $k$ of the output. Other parameters can include geometric restrictions such as {\em fatness}, the {\em spread} of the input, or the number of {\em reflex} vertices in a (simple) polygon. 
Such parameters are hard to apply in the preprocessing model with implicit representation, as the auxiliary structure $\Xi$ allows one to bypass the natural lower bound that these parameters bring. For example: an output-sensitive lower bound is not applicable, as output of any size can be computed in the preprocessing phase to be referred to in the reconstruction phase in $O(1)$ time.

\subparagraph{Better than worst-case optimality without additional parameters.}
Afshani, Barbay and Chan propose an alternative definition of instance optimality which is not inherently unobtainable.
They restrict the algorithms $A$ that solve $\mathcal{P}$ and consider the input $I$ together with a permutation $\sigma$.
They analyse the running time of $A$, conditioned on that it receives input $X$ in the order given by $\sigma$. They then compare algorithmic running time based on the worst choice of $\sigma$:
\[
\textnormal{Instance lower bound in the {\em order oblivious} setting}(\mathcal{P}, X) := \min_{A} \max_{\sigma} \textnormal{Runtime}(A, X, \sigma)\,.
\]
Intuitively, a permutation $\sigma$ can force the algorithm to make poor decisions by placing the input in a bad order and they assume that an algorithm receives ``the worst order of processing the input'' to avoid the unreasonable computational power that a guessing algorithm has. 
The instance lower bound in the order oblivious setting for our binary search example would be $\Omega(n)$, as there exists a $\sigma$ for which $X$ is not a sorted set. Given $q$ and $(X, \sigma)$, any algorithm then has to spend linear time to check if $q$ is in $X$.

This definition of lower bound would strictly speaking be applicable to the preprocessing model: given $P$ and a permutation $\sigma$ an algorithm can then only retrieve points in the order $\sigma$. 
However, we would argue that this lower bound is not very compatible with the spirit of the model. Per definition, one is free to preprocess $\RR$, Therefore, during preprocessing it would not be unreasonable for an algorithm to decide on a favourable order to retrieve the points in $P$. 
This is why, amongst many alternative stricter-than-worst-case lower bound definitions, we propose another, specifically for the preprocessing model.
\[
\textnormal{Uncertainty-region lower bound}(\mathcal{P}, \RR) := \min_{(A, \Xi)} \max_{ (P \textnormal{ respects } \RR ) } \textnormal{Runtime}(A, \Xi, \RR, P)\,,
\]
Denote for any fixed algorithmic problem $\mathcal{P}$, by $L(\RR)$ the number of combinatorially distinct outcomes of $P$ given $\RR$. In the remainder of this section we recall the RAM definition of~\cite{hoog2020FOCS} to show that regardless of $(A,\Xi)$, $\Omega(\log L(\RR))$ is an uncertainty region lower bound for the time required by $A$ to solve $\mathcal{P}$.

\subparagraph{Recalling the real RAM definition.}
If the reader is confident in the ability of the RAM model to support such a lower bound, we advise the reader skips ahead.
Erickson, van der Hoog and Miltzow define the real RAM in two steps. First, they define computations based on the (discrete) word RAM, so that discrete memory can be accessed without unreasonable computational power. Then, they augment the word RAM with separate real-valued computations that only work on values stored within the discrete memory cells.
Their operations include memory manipulation, real arithmetic and comparisons (which verifies if the real value stored in a memory cell is greater than $0$).
For an extensive overview of the computations that they allow, we refer to Table 1 in \cite{hoog2020FOCS}.
They say a program on the real RAM consists of a fixed, finite indexed sequence of read-only instructions.  The machine maintains an integer \emph{program counter}, which is initially equal to $1$.  At each time step, the machine executes the instruction indicated by the program counter.
Every real RAM operation increases the program counter by one, apart from a comparison operation which ends in a goto statement that can set the program counter to any discrete value.
This model thereby immediately allows the classical information theoretic lower bound argument, even if there is some pre-stored data $\Xi$ within memory.
Indeed, let $\mathcal{P}$ be an algorithmic problem such that there are $L$ distinct outcomes and fix a program (algorithm) that reports the correct outcome.
Each outcome may be described by the sequence of instructions that lead to it, together with a \emph{halt} instruction that tells the program to stop and output the result.
Hence, the program only terminates on the correct outcome, if it arrived there via a \emph{goto} statement from a comparison instruction (all other instructions only increase the program counter by 1, hence without comparisons the algorithm terminates at the first outcome in the sequence). 
It follows, that any sequence of instructions can be converted into a binary tree where each node is a comparison instruction and where the leaves of the tree are lines in the sequence that store an outcome with a halt instruction.
Hence regardless of $\Xi$, there is an outcome stored as a leaf in the tree where the program that requires $\Omega(\log L)$ comparison instructions until it arrives at that leaf.

\section{Handling compound regions}
\label{appx:reconstruction}
We describe the algorithmic procedure for when Algorithm~\ref{algo:upperbound} encounters a \subproblem $[R_i, R_j]$ where $f_i(P)$ or $g_i(P)$ is a compound region.
Let $f_i(P)$ be a compound region. Then per definition $f_i(P)$ is a sink in the original graph: $G(\RR^0)$. Consequently, the region $R'$ in the canonical set $\RR^0$ that succeeds $R$ must have no more remaining incoming vertical arrows (as else, $R$ would not have been visible from the just processed $R_i$).
The region $R'$ itself cannot be a compound region, since else $R$ and $R'$ could have been compounded together. We set $f_i(P)$ to be $R'$ instead, and continue as normal.

We set the compound region $R$ aside, with a reference to $p_i^\mathit{xMax}$ and add it to a separate queue that we handle at the algorithm's termination in $O(1)$ time. We charge this $O(1)$ time to this iteration $t$ where we added it to the special queue. Per definition, for each region $R_i$, there is a unique $f_i(P)$, so $R_i$ gets charged at most once in this manner.
It is possible that in a later iteration $t'$, when a \subproblem $[R_{i'}, R_{j'}]$ is considered by Algorithm~\ref{algo:upperbound}, the region $R$ is $g_{j'}(P)$. In this case, we do not add $R$ to the queue again but we do store a reference to $p_{j'}^\mathit{yMax}$ and we charge $[R_{i'}, R_{j'}]$, $O(1)$ time for storing this reference.

For any compound region $R$, that is not dominated by a point in $P$, there must be an iteration $t$ where a \subproblem is considered such that $f_i(P) = R$ or $g_j(P) = R$ and thus it must be in the special queue.  When we process the special queue, we do the following: 
  we use $p_i^\mathit{xMax}$ to identify the prefix of the original regions stored in $R$ that are dominated by points preceding $R$ in $O(\log |V_i(P)|)$ time using galloping search (we charge the prior $f_i(P)$, and just as above a region can only get charged once).
  
  At this point, we wish to briefly remark upon any possible ambiguity regarding the runtime $O(\log |V_i(P)|)$. 
  In the premise of Theorem~\ref{thm:runtime} we defined the sets $V_i(P)$ as subsets of the truncated set $\RR$, not the canonical set $\RR^0 = \RRc$ that serves as the input of the algorithm.
  Note that $O(\log |V_i(P)|)$ is smaller than $O(\log |V_i^*(P)|)$ where $V_i^*(P)$ is a subset of $\RRc$ since $\RRc$ can compound regions in $V_i(P)$ together. Throughout Section~\ref{sub:reconstruction}, 
  we performed a galloping search over the outgoing edges in the graph $G(\RRc)$, hence we spent $O(\log |V_i^*(P)|) \le O(\log |V_i(P)|)$ time per search.
  Here, we perform a galloping search over regions in $V_i$ that are compounded (not in $\RRc$), and this is the first point where we use the larger $O(\log |V_i(P)|)$ runtime.
  We wish to emphasise that the runtime of Section~\ref{sub:retrievals} is hereby correct: as $O(\log |V_i(P)|)$ is an over-estimation of the actual time spent on the galloping search. We continue the argument:

  Whenever $g_j(P) = R$, we similarly use $p_j^\mathit{yMax}$ to identify the suffix of the original regions stored in $R$ that are dominated by points in $P$ succeeding $R$.
  For the at most $2$ regions that are intersected by the vertical line through $p_i^\mathit{xMax}$ and the horizontal line through $p_j^\mathit{yMax}$ respectively, we explicitly retrieve their points in order to determine whether they are dominated or not.
  We charge this $2C$ retrieval time  to $R_i$ and $R_j$. 
  By Lemma~\ref {lem:contiguous}, the remaining sequence of original regions (if any) must appear on the \pareto, and we do not need to retrieve their points.
  When the algorithm terminates, we append the non-dominated interval in constant time by providing the pointers in the array of Attribute 5, and we charge this constant time to the aforementioned iteration.

\end{document}